\documentclass[11pt]{article}
\usepackage[utf8]{inputenc}

\usepackage[letterpaper,top=2.54cm,bottom=2.54cm,left=2.54cm,right=2.54cm,marginparwidth=1.75cm]{geometry}

\usepackage[parfill]{parskip}
\usepackage{natbib}
\usepackage{xr-hyper,refcount}
\usepackage{graphicx}
\usepackage[utf8]{inputenc} %
\usepackage[colorlinks,citecolor=blue,urlcolor=blue,linkcolor=blue,bookmarks=false]{hyperref}       %
\usepackage{url}            %
\usepackage{booktabs}       %
\usepackage{amsfonts}       %
\usepackage{nicefrac}       %
\usepackage{bbold}          %
\usepackage{microtype}      %
\usepackage[parfill]{parskip}
\usepackage{algorithm,algorithmic}
\usepackage{amsmath,amsthm,amssymb,bbm}
\usepackage{mathtools}
\usepackage{cases}
\usepackage{comment}
\usepackage{subcaption}
\usepackage{color}
\usepackage{appendix}
\usepackage{xspace}
\usepackage{enumitem}
\usepackage[english]{babel}
\usepackage{authblk}   %
\usepackage{natbib}
\usepackage{graphicx}
\usepackage{amssymb}

\usepackage{enumitem}
\usepackage{booktabs}

\usepackage{leqisymbols}
\usepackage{bm}
\usepackage{xcolor}
\usepackage{setspace}

\title{Median Optimal Treatment Regimes}
\author[1]{Liu Leqi}
\author[2]{Edward H. Kennedy}
\affil[1]{Machine Learning Department}
\affil[2]{Department of Statistics \& Data Science}
\affil[ ]{Carnegie Mellon University}
\affil[ ]{\texttt{\href{mailto:leqil@cs.cmu.com}{\textcolor{black}{leqil@cs.cmu.edu}}, \href{mailto:edward@stat.cmu.com}{\textcolor{black}{edward@stat.cmu.edu}} }}
\date{}

\begin{document}
\maketitle

\begin{abstract}
Optimal treatment regimes are personalized policies 
for making a treatment decision 
based on subject characteristics, 
with the policy chosen to maximize some value. 
It is common to aim to 
maximize the mean outcome in the population, 
via a regime assigning treatment only to 
those whose mean outcome 
is higher under treatment versus control. 
However, the mean can be an unstable measure of centrality, 
resulting in imprecise statistical procedures, 
as well as unrobust decisions that 
can be overly influenced by a small fraction of subjects. 
In this work, we propose a new median optimal treatment regime that instead treats individuals whose \emph{conditional median} is higher under treatment.
This ensures that optimal decisions 
for individuals from the same group are not overly influenced 
either by (i) a small fraction of the group (unlike the mean criterion), 
or (ii) unrelated subjects 
from different groups (unlike marginal median/quantile criteria). 
We introduce a new measure of value, the Average Conditional Median Effect (ACME), which summarizes across-group median treatment outcomes of a policy, and 
which the median optimal treatment regime maximizes. 
After developing key motivating examples that distinguish 
median optimal treatment regimes from mean and marginal median optimal treatment regimes, 
we give a nonparametric efficiency bound 
for estimating the ACME of a policy, 
and propose a new doubly robust-style estimator 
that achieves the efficiency bound under weak conditions. 
\edit{To construct the median optimal treatment regime, we introduce a new doubly robust-style estimator for the conditional median treatment effect.} 
Finite-sample properties %
are explored via numerical simulations
and the proposed algorithm is illustrated 
using data from a randomized clinical trial in patients with HIV.

\end{abstract}
\smallskip
\textit{Keywords: double robustness, optimal treatment regime, conditional treatment effect, precision medicine, 
quantile.}

\section{Introduction}
\label{sec:introduction}
Optimal treatment regime methods 
aim to learn policies 
that map subject covariates to a decision, 
typically with the goal of maximizing a population criterion. 
The task of assigning treatment decisions
based on individual characteristics  
is common and crucially important in many disciplines, with
applications ranging from personalized medicine to loan approval  
to educational program assignment. By now there is a large literature on theory, methods, and applications of optimal treatment regimes \citep{murphy2003optimal, robins2004optimal, laber2014dynamic, chakraborty2013statistical, schulte2014q, van2014targeted, kosorok2019precision}; we mostly refer to this and related work for more general background and details. 

As more individualized policies are deployed in practice, 
it is of particular importance to examine the target value %
for which the treatment regimes are optimized. 
Traditionally, the mean outcome in the population has been used as the criterion. An important reference in this stream for our purposes is~\citet{van2014targeted}, which studies doubly robust estimation of the mean outcome under both a known policy and the unknown optimal policy, showing how nonparametric efficiency bounds can be achieved in both cases. This is part of a larger literature on semiparametric efficiency theory and doubly robust estimation (also known as one-step or debiased or double machine learning estimation)~\citep{newey1990semiparametric, bickel1993efficient, van2002semiparametric, van2003unified, tsiatis2007semiparametric, kennedy2016semiparametric, chernozhukov2018double}.

However, compared to the median,
means can be a poor and/or sensitive measure of  the centrality of 
a distribution, for example in the presence of skew or contamination \citep{huber1967behavior, casella2002statistical, huber2004robust}.
As a result, a recent literature has evolved studying alternative target values for learning policies. %
\citet{linn2017interactive,wang2018quantile, luedtke2020performance} propose 
the marginal quantile to be the criterion in settings where the outcome distribution is skewed or the tail of the outcome distribution is of interest.  
For simplicity we refer to these approaches as marginal median-based, since the issues we detail with marginal median objectives are the same as those for generic quantiles. 
In presenting the marginal quantile optimal treatment regime, \citet{linn2017interactive} also considers 
a marginal cumulative distribution function related target value 
where the goal 
is to find a policy that maximizes the probability
of the outcome being above a given threshold.
\citet{qi2019estimating} uses conditional value-at-risk~\citep{rockafellar2002conditional} on the outcome distribution as the criterion to ensure the policy to be risk-averse. 
Unfortunately, in addition to not having a closed-form optimal policy,  the 
 marginal median approaches %
yield optimal treatment decisions for subjects with covariates $X=x$ that depend on outcomes of \emph{other subjects} with different covariates $X \neq x$. We view this as allowing a form of unfairness, which will be discussed in  more detail shortly.

Motivated by these concerns, 
in this paper we 
propose a treatment regime which assigns  treatment to those subjects who have a higher \emph{conditional median} outcome under treatment versus control, and which is optimal with respect to a new objective we call the Average Conditional Median Effect (ACME).
Crucially, the optimal policy for the ACME has two key properties, which in general do not both hold for either mean or marginal median optimal treatment regimes:
\begin{itemize}
    \item ``Within-group robustness'': Within a group, i.e., for subjects with the same observed covariates, the treatment decision is based on the median of the conditional outcome distribution, and thus the policy is robust to outliers (e.g., when a small fraction of the group has extreme outcomes), unlike the mean optimal treatment regime. 
    \item ``Across-group fairness'': Across groups, i.e., for subjects with different observed covariates, the treatment decision for a given group cannot depend on outcomes of a different group, unlike the marginal median optimal treatment regime.
\end{itemize}

\begin{remark}
Here we use the term ``fair'' rather loosely; in contrast
there has been lots of recent work defining fairness more formally 
\citep{dwork2012fairness, hardt2016equality, chouldechova2018frontiers,mehrabi2019survey}. 
We also acknowledge that other definitions of fairness could be plausible in our setup, 
and really only use the above for a convenient shorthand. 
We describe in much more detail why we refer to the second property as 
fairness-related in an example in Section~\ref{sec:motivating-example}.
\end{remark}

Our main contributions are as follows. 
We reflect on existing standard target values %
and their optimal treatment regimes 
and 
propose the median optimal treatment regime\footnote{We use \emph{marginal} median optimal treatment regimes 
to refer to policies that maximize marginal median values.},  
that assigns treatment to an individual based on their 
\emph{conditional} median treatment effect (Section~\ref{sec:acme}). 
A new measure of policy value, namely the average conditional median effect, is defined. 
The proposed regime promotes within-group robustness and across-group fairness, 
in comparison to the mean optimal 
and marginal median optimal policies 
(Section~\ref{sec:policy_comparision}). 
In Section \ref{sec:efficiency}, we establish the local asymptotic minimax bound for estimating the ACME, 
and construct a doubly robust-style estimator along with a simple algorithm 
that achieves the bound under mild conditions (Section~\ref{sec:estimation}). 
\edit{
To learn the median optimal treatment regime,
we propose a new doubly robust-style estimator for the Conditional Median Treatment Effect (CMTE) in Section~\ref{sec:learning-gamma}. 
}
Finally,
we use numerical simulations to show finite-sample properties of the estimator 
and illustrate the algorithm using a dataset from a randomized clinical trial on HIV patients (Section~\ref{sec:experiment}).

\section{Preliminaries}
\label{sec:preliminaries}
We are given independent and identically distributed (iid) samples $Z_i = (X_i, A_i, Y_i)$ drawn from a distribution $\mathbb{P}$ 
where $X_i \in \mathcal{X} \subseteq \mathbb{R}^d$ are covariates, 
$A_i \in \{0, 1\}$ is a binary treatment assignment 
and $Y_i \in \mathcal{Y}$ is a continuous real-valued outcome of interest. 
We use $Y^a$ to denote the potential outcome under treatment $A=a$~\citep{rubin1974estimating}. Throughout we let $m\left(V \mid w\right) = \inf\{v \in \mathbb{R}: \mathbb{P}(V \leq v \mid W=w) \geq 1/2\}$ 
denote the median of a generic random variable $V$ given $W=w$. To simplify the presentation, we introduce the following notation for components of the distribution $\mathbb{P}$:
\begin{align*}
\pi_a(x) &= \mathbb{P}(A=a \mid X=x), \\
F_a(y \mid x) &=\mathbb{P}(Y \leq y \mid X=x, A=a), \\
f_a(y \mid x) &= \frac{d}{d y} F_a( y \mid x), \\
m_a(x) &= m\left(Y \mid X=x, A=a\right), \\
F_{a,m}(x) &= F_a(m_a(x) \mid X=x), \\
f_{a,m}(x) &= f_a(m_a(x) \mid X=x), \\
\sigma_a(x) &= \sqrt{\text{Var}(Y \mid X=x, A=a)}.
\end{align*}
We note that $\pi_1(x)$ is called the propensity score, i.e., chance of receiving treatment given covariates,
and $\pi_0(x) = 1 - \pi_1(x)$. 
We assume $F_a(y\mid x)$ to be absolutely continuous 
and hence $f_a(y\mid x)$ exists. 
Throughout the paper, we refer to 
$F_a(y \mid x)$ as the conditional cumulative distribution function, 
$f_a(y \mid x)$ as the conditional density, 
$m_a(x)$ as the conditional median,
$f_{a,m}(x)$ as the conditional density at the conditional median, 
and $\sigma_a(x)$ as the conditional standard deviation. 
We assume the following standard conditions for identifying causal effects: 
\allowdisplaybreaks
\begin{assumption}
The following causal assumptions hold  %
\begin{align*}
    (\text{Consistency})\qquad &Y=A Y^1 + (1-A ) Y^0, %
    \\
    (\text{Positivity})\qquad &\mathbb{P}\left(\epsilon \leq \pi_1(X) \leq 1 - \epsilon \right) = 1 
    \text{ for some } \epsilon > 0,
    \\
    (\text{Exchangeability})\qquad &A \indep Y^a \mid X \text{ for }
    a = 0,1. %
\end{align*}
\label{assump:causal}
\end{assumption}

\paragraph{Additional Notation}
We use $\mathbb{P} \widehat{f} := \int \widehat{f}(z) d \mathbb{P}(z)$ 
to denote the expectation conditioned upon the randomness of $\widehat{f}$  
and $\mathbb{P}_n f := \mathbb{P}_n \{f(Z)\} = \frac{1}{n} \sum_{i=1}^n f(Z_i)$. 
We denote the Euclidean norm for a real-valued vector $x \in \mathbb{R}^d$ to be $\|x\|_2$ 
and the $L_2(\mathbb{P})$ norm of a function $f$ to be $\|f\| := \left(\int f(z)^2 d \mathbb{P}(z) \right)^{1/2}$. 
Finally, $a \lesssim b$ is equivalent to $a \leq C b$ for some positive constant $C$.

\section{Proposed Target Policy \& Value} 
\label{sec:acme}

In this section, we first review common target values
and their optimal policies 
in the current literature (Section~\ref{sec:standard_policy}). 
In Section~\ref{sec:median_otr},
we introduce a new policy and a new measure of a policy's value, 
based on conditional medians. 
We show that under Assumption~\ref{assump:causal} 
these policies and values are identified. 

\subsection{Standard Policies \& Values}
\label{sec:standard_policy}

Let $d: \mathcal{X} \to \{0,1\}$ denote a deterministic policy
that assigns a treatment based on input covariates 
and $\mathcal{D}$ be the set of all such measurable functions.
Traditionally, most optimal treatment regime research~\citep{murphy2003optimal,zhang2012robust} has 
focused on finding the policy that maximizes the mean outcome 
$$
\mathbb{E}[Y^d] := \mathbb{E}[d(X)Y^1 + \left(1-d(X)\right) Y^0],$$ 
which under Assumption~\ref{assump:causal} is equivalent to 
$
    \mathbb{E}[Y^d] = \mathbb{E}_X\left[\mu_d(X)\right], 
$
where $\mu_d(X) := d(X) \mu_1(X) + (1-d(X)) \mu_0(X)$ 
and $\mu_a(X) := \mathbb{E}[Y\mid X,A=a]$. 
When the goal is to maximize the overall population mean $\mathbb{E}[Y^d]$,  %
it is well-known that the optimal policy is given by%
$$
     d_\text{MEAN}^*(X) %
        := \mathbb{1}\{\mu_1(X) > \mu_0(X)\}.
$$
This policy simply treats only those subjects whose conditional mean treatment effect is positive. 

\begin{remark} \label{rem:condval}
Often, the mean optimal policy $d_\text{MEAN}^*$ is motivated by the fact that it maximizes $\mathbb{E}[Y^d]$; however, we view its primary motivation as coming from the fact that it maximizes the \emph{conditional} value $\mathbb{E}[Y^d\mid X=x]$ for all $x \in \mathcal{X}$, i.e., it is mean optimal for subjects of any type. From this perspective, the \emph{marginal} value $\mathbb{E}[Y^d]$ is just a one-number summary of the overall performance of a policy, %
across a heterogeneous population of subjects with different covariates. As we discuss further in subsequent sections, we view the maximization of the conditional value as more fundamental and fair in the across-group sense introduced in Section~\ref{sec:introduction} and detailed further in Section~\ref{sec:motivating-example}.  
\end{remark}

While the mean is a valuable measure of centrality in many cases, 
it is sensitive to outliers and so can be %
an inappropriate target value 
when some have extreme responses to treatment. 
This has led to recent work maximizing the marginal median 
$$ 
m(Y^d) := m\left( d(X)Y^1 + \left(1-d(X)\right) Y^0\right).
$$ 
In the general case, 
marginal quantile-based values are proposed where the goal is to 
find the optimal policy with respect to 
a quantile 
of the outcome~\citep{linn2017interactive, wang2018quantile, kallus2019localized}. 
Under Assumption~\ref{assump:causal}, we obtain that 
\begin{align*}
        m(Y^d) &= 
        \inf\left\{ m: \mathbb{P}(Y^d \leq m) \geq 1/2 \right\}\\
        &= \inf\left\{ m: \int \mathbb{P}(Y^d \leq m \mid  X=x) d\mathbb{P}(x) \geq 1/2 \right\}\\
        &=\inf\left\{ m: \int \mathbb{P}(Y \leq m \mid A=d(x), X=x) d\mathbb{P}(x) \geq 1/2 \right\}.
\end{align*}
From here on we use $m(Y^d)$ to refer to the identified quantity above, with the understanding that it corresponds the counterfactual marginal median only under Assumption \ref{assump:causal}.

Unfortunately, as illustrated in Proposition~\ref{prop:GoM}, %
the marginal median objective  
allows specific subjects' optimal treatment assignments to be determined by \emph{different} subjects, with different covariate values. 
We see this as a violation of across-group fairness. For example, if in some population one group suddenly started responding more to treatment, then under the marginal median objective, the optimal treatment for other groups could change -- even if their response to treatment did not.

Unlike the mean optimal policy $d_{\text{MEAN}}^*(x)$, 
in general, the optimal policy under the marginal median objective (which we denote as $d^*_\text{MME}(x)$) cannot be viewed as maximizing a parameter defined 
upon the conditional outcome distributions $[Y\mid X=x, A=a]$ for $a \in \{0,1\}$, and  
does not have a closed form that depends only on the conditional outcome distributions. %

\subsection{Median Optimal Treatment Regimes}
\label{sec:median_otr}

As mentioned in Remark~\ref{rem:condval}, we believe conditional values %
are generally more fundamental and useful than marginal values. Therefore 
here we propose a policy $d^*_\text{ACME}$ that maximizes the conditional median $m(Y^d\mid X=x)$ for all $x \in \mathcal{X}$. 
This implies that 
\begin{align}
    d_{\text{ACME}}^*(X):= \mathbb{1}\{m_1(X) > m_0(X)\}.
\label{eq:d_acme}
\end{align}
Using the conditional median instead of the conditional mean to measure centrality of the conditional outcome distributions, 
compared to the mean optimal policy $d^*_\text{MEAN}$, $d^*_\text{ACME}$ is more \emph{robust} to outliers. 
Unlike the marginal median optimal policy $d^*_\text{MME}$, for all $x \in \mathcal{X}$, our proposed policy $d^*_\text{ACME}(x)$  is more \emph{individualized}, in the sense that optimal treatment assignments for subjects with $X=x$ do not depend on outcomes of different subjects with $X \neq x$. 
We summarize the overall performance of our proposed conditional median optimal policy
(also denoted as median optimal treatment regime and median optimal policy)
$d^*_\text{ACME}$ 
across groups of subjects with different covariates using 
the average conditional median effect, which we introduce below. 
\begin{definition} Given a policy $d \in \mathcal{D}$, we define 
the average conditional median effect (ACME) as
\begin{align}
    \mathbb{E}_X[m(Y^d \mid  X)] := \mathbb{E}_X\left[m\left(d(X) Y^1 + \left(1 - d(X)\right) Y^0 \mid  X\right)\right]. 
\label{eq:ACME}
\end{align}
\end{definition}

\begin{remark}
Traditionally a treatment ``effect'' is a \emph{contrast} between (distributions of) potential outcomes; for simplicity we call the ACME parameter an effect even though it involves counterfactuals under only one treatment policy, and so is not a contrast. 
\end{remark}

\paragraph{Identification}
Under Assumption~\ref{assump:causal}, we have that 
\begin{align*}
    \mathbb{E}_X[m(Y^{d} \mid X)]
    &= \mathbb{E}_X[
    d(X)m(Y^1\mid A=1, X) 
    + (1-d(X))m(Y^0\mid A=0, X)]\\
    &= \mathbb{E}_X[
    d(X)m(Y\mid A=1, X) 
    + (1-d(X))m(Y\mid A=0, X)]\\
    &= \mathbb{E}_X\left[m_d(X)\right],
\end{align*}
where $m_d(X) := d(X)m_1(X) + \left(1 - d(X) \right) m_0(X)$. %
For the rest of the paper, without further specification, 
we assume Assumption~\ref{assump:causal} to hold. 

As illustrated at the beginning of Section~\ref{sec:median_otr}, 
the ACME is used to summarize the performance of the conditional median optimal policy $d_{\text{ACME}}^*$.
We show here that, just as  $d_{\text{MEAN}}^*$ optimizes $\mathbb{E}[\mathbb{E}[Y^d\mid X]]$, $d_{\text{ACME}}^*$ optimizes the ACME $\mathbb{E}[m(Y^d\mid X)]$, i.e., 
for all $d \in \mathcal{D}$, 
\begin{align*} 
    \mathbb{E}_X[m(Y^{d} \mid X)] 
    &=  \mathbb{E}_X[d(X) m_1(X) + (1-d(X)) m_0(X)]\\
    &= \mathbb{E}_X[m_0(X) + \left(m_1(X) - m_0(X) \right)d(X)]\\
    &\leq \mathbb{E}_X[m_0(X) + \left(m_1(X) - m_0(X) \right)d_\text{ACME}^*(X)].
\end{align*}

\begin{remark}
We note that in the case where the polices can only be measurable functions of a subset of covariates $V \subseteq X$, 
the optimal treatment regime for ACME
is not as straightforward as the one for the mean outcome, due to the nonlinearity of the median. 
Developing the $V$-specific optimal treatment regime for ACME is an avenue for future work. 
\end{remark}

\section{Policy Comparisons}
\label{sec:policy_comparision}

In this section, we compare the three policies 
$d^*_\text{MEAN}$, $d^*_\text{MME}$ and $d^*_\text{ACME}$. 
In Section~\ref{sec:d_ate_d_acme}, 
we give a simple condition under which the 
mean and median optimal treatment regimes are equivalent, and point out that the mean optimal policy is not necessarily robust within groups (defined in Section~\ref{sec:introduction}).
In Section~\ref{sec:d_acme_d_mte}, 
we illustrate that the marginal median 
optimal treatment regime does not ensure  
across-group fairness (defined in Section~\ref{sec:introduction}).
In Section~\ref{sec:motivating-example},
we use a motivating example 
to summarize the differences among the three policies.

\subsection{$d^*_\text{MEAN}$ and $d^*_\text{ACME}$}
\label{sec:d_ate_d_acme}

To characterize the difference between the mean optimal treatment regime $d^*_\text{MEAN}$ and our proposed median optimal treatment regime $d^*_\text{ACME}$, 
we begin by stating a simple condition that ensures the two to be equivalent:   
$ d^*_\text{MEAN}(x) = d^*_\text{ACME}(x)$
if and only if the conditional mean and median treatment effects are always the same sign. 
One sufficient condition for the two policies to be equivalent is thus given below.

\begin{proposition}
If the conditional mean and median treatment effects are separated from zero relative to the conditional outcome standard deviation, in the sense that
$$|\mu_1(x) - \mu_0(x)| > \sigma_1 (x) + \sigma_0 (x) \ \text{ and } \ |m_1(x) - m_0(x)| > \sigma_1 (x) + \sigma_0 (x) , $$ 
then  $d^*_\text{MEAN}(x) = d^*_\text{ACME}(x)$.
\label{prop:ate-acme-variance}
\end{proposition}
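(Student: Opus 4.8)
The plan is to reduce the claimed equivalence $d^*_\text{MEAN}(x) = d^*_\text{ACME}(x)$ to a statement about signs. As noted just before the proposition, since $d^*_\text{MEAN}(x) = \mathbb{1}\{\mu_1(x) > \mu_0(x)\}$ and $d^*_\text{ACME}(x) = \mathbb{1}\{m_1(x) > m_0(x)\}$, it suffices to show that the conditional mean treatment effect $\Delta_\mu(x) := \mu_1(x) - \mu_0(x)$ and the conditional median treatment effect $\Delta_m(x) := m_1(x) - m_0(x)$ share the same sign and are both nonzero, so that no boundary/tie case arises in comparing the two indicators.

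The engine of the argument is the classical mean–median inequality: for any integrable random variable $V$ with median $m(V)$, one has $|\mathbb{E}[V] - m(V)| \le \sqrt{\mathrm{Var}(V)}$. I would prove this in one line via Jensen's inequality together with the fact that the median minimizes $c \mapsto \mathbb{E}|V - c|$, namely $|\mathbb{E}[V] - m(V)| = |\mathbb{E}[V - m(V)]| \le \mathbb{E}|V - m(V)| \le \mathbb{E}|V - \mathbb{E}[V]| \le \sqrt{\mathrm{Var}(V)}$. Applying this conditionally within each treatment arm $a \in \{0,1\}$ at covariate value $x$ gives the arm-wise bound $|\mu_a(x) - m_a(x)| \le \sigma_a(x)$, with $\sigma_a$ read as the conditional standard deviation.

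Next I would control how far the median effect can drift from the mean effect. Writing $\Delta_m(x) - \Delta_\mu(x) = \bigl(m_1(x) - \mu_1(x)\bigr) - \bigl(m_0(x) - \mu_0(x)\bigr)$ and applying the triangle inequality with the arm-wise bound yields $|\Delta_m(x) - \Delta_\mu(x)| \le \sigma_1(x) + \sigma_0(x)$. Combining this with the separation hypothesis $|\Delta_\mu(x)| > \sigma_1(x) + \sigma_0(x)$ gives $|\Delta_m(x) - \Delta_\mu(x)| < |\Delta_\mu(x)|$, which pins $\Delta_m(x)$ to the same open half-line as $\Delta_\mu(x)$: if $\Delta_\mu(x) > 0$ then $\Delta_m(x) > \Delta_\mu(x) - |\Delta_m(x) - \Delta_\mu(x)| > 0$, and symmetrically $\Delta_\mu(x) < 0$ forces $\Delta_m(x) < 0$. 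Hence the two effects have matching sign, both indicators agree, and $d^*_\text{MEAN}(x) = d^*_\text{ACME}(x)$.

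A remark on the hypotheses and the one delicate point: the mean-side separation condition already forces $\Delta_\mu(x) \ne 0$ and, through the chain above, $\Delta_m(x) \ne 0$ with matching sign, so the mean condition alone drives the conclusion; the parallel median condition is natural for symmetry and ensures the median-based indicator is itself away from its tie boundary. The step demanding care is the dispersion inequality, since the clean threshold $\sigma_1(x) + \sigma_0(x)$ matches the bound $|\Delta_m - \Delta_\mu| \le \sigma_1(x) + \sigma_0(x)$ only when $\sigma_a$ is the standard deviation; I would therefore align the definition of $\sigma_a$ with the standard deviation here (or, if $\sigma_a$ is retained as the variance, replace the threshold by $\sqrt{\sigma_1(x)} + \sqrt{\sigma_0(x)}$). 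Everything else is elementary.
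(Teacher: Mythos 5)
Your proof is correct and follows essentially the same route as the paper's: both arguments rest on the arm-wise mean--median inequality $|\mu_a(x) - m_a(x)| \le \sigma_a(x)$ (which the paper asserts without proof and you derive via the median's $L_1$-optimality) followed by elementary sign bookkeeping, the paper chaining $m_1(x) \ge \mu_1(x) - \sigma_1(x) > \mu_0(x) + \sigma_0(x) \ge m_0(x)$ directly where you pass through $|\Delta_m(x) - \Delta_\mu(x)| \le \sigma_1(x) + \sigma_0(x)$. Your two side remarks are also accurate: the paper's proof likewise implicitly reads $\sigma_a$ as the conditional standard deviation despite the variance notation of Section~\ref{sec:preliminaries}, and the mean-side separation alone already forces the indicators to agree, the paper invoking the median-side hypothesis only because it proves the two directions of the biconditional separately.
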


An important note about the mean optimal policy $d^*_\text{MEAN}$ is that it does not necessarily satisfy the within-group robustness mentioned in Section~\ref{sec:introduction}, in that it is sensitive to outliers. 
For example, as illustrated in Section~\ref{sec:motivating-example}, 
its decision can be very affected by a small subgroup that benefits a lot from the treatment, even if 
 the majority of the subgroup is harmed by the decision. 
In contrast, the median remains unchanged as long as the amount of mass 
below and above it remains unchanged, 
while the mean is determined by how the masses are distributed, e.g., the mean can be arbitrarily high (or low) by 
moving a small fraction of the mass above (or below) the median 
to extreme values \citep{huber1967behavior, casella2002statistical, huber2004robust}. 
Further comparisons between the mean optimal and our proposed median optimal treatment regime are given in Section~\ref{sec:motivating-example}.

\subsection{$d^*_\text{MME}$ and $d^*_\text{ACME}$}
\label{sec:d_acme_d_mte}
To illustrate ideas, here we consider a simple Gaussian model 
to compare the marginal median optimal treatment regime $d^*_\text{MME}$ and our proposed median optimal treatment regime $d^*_\text{ACME}$. This provides a counterexample showing that $d^*_\text{MME}(x)$ is 
determined not just by the conditional outcome distribution $[Y \mid X=x, A=a]$ for $a \in \{0,1\}$ 
but also by the conditional outcome distribution {at other (different) covariate values}. 
That is, even if the conditional distribution $[Y\mid X=x, A=a]$ 
for $a \in \{0,1\}$ is unchanged, the marginal median optimal policy
$d^*_\text{MME}(x)$ can be different 
depending on $[Y\mid X=x', A=a]$ for $x' \neq x$.  
On the other hand, our median optimal policy $d^*_\text{ACME}(x)$ is fair across groups, in that  the optimal decision 
remains the same so long as 
the conditional distribution for $X=x$ is unchanged.

Specifically consider a data-generating process where
\begin{align*}
X &\sim \text{Bernoulli}(1/2) \\
Y \mid X=x,A=a &\sim N\left( \mu_a(x), \sigma_a^2(x) \right)
\end{align*}
for some unspecified (for now) mean $\mu_a(x)$ and standard deviation $\sigma_a(x)$.  
For any policy $d \in \mathcal{D}$, 
the ACME is 
$
    \mathbb{E}[m_d(X)]= \left(\mu_d(0) + \mu_d(1)\right)/2,
$
and 
the marginal median value %
$m(Y^d)$ satisfies
$
    \Phi\left(\frac{%
    m(Y^d) - \mu_{d}(0)}{\sigma_{d}(0)}\right) 
    + \Phi\left(\frac{%
    m(Y^d) - \mu_{d}(1)}{\sigma_{d} (1)}\right) =1,
$
which implies that %
\begin{align*}
        m(Y^d) = \frac{\sigma_{d}(1)\mu_{d}(0) + \sigma_{d}(0)\mu_{d}(1)}{\sigma_{d}(0) + \sigma_{d}(1)}, 
\end{align*}
i.e., it is a standard deviation-weighted average of outcome regressions (where each regression is weighted by the standard deviation of the other group). In the next proposition we give a counterexample showing how, somewhat surprisingly,  the marginal median optimal policy for subjects with $X=1$, for example, can depend on the means and variances for different subjects with $X=0$. 

\begin{proposition}
Suppose outcomes are higher and more variable under treatment for $X=1$, i.e., 
$$ \mu_1(1) > \mu_0 (1)  \ \text{ and } \  
\sigma_1(1) > \sigma_0(1), $$
and the variances differ more than the means in the sense that  $\frac{\mu_1(1)}{\mu_0(1)} < \frac{\sigma_1(1)}{\sigma_0(1)}$. 
Then there exist conditional distributions $Q_a$ and $\overline{Q}_a$ for $[Y \mid X=0,A=a]$, $a \in \{0,1\}$ such that 
$$ \mathbb{1}\{\mu_1(1) > \mu_0(1)\} = d^*_\text{MME}(1; Q_0, Q_1) \neq d^*_\text{MME}(1; \overline{Q}_0, \overline{Q}_1) = \mathbb{1}\{\mu_1(1) \leq \mu_0(1)\}$$
where $d^*_\text{MME}(x;Q_0, Q_1)$ is the marginal median optimal policy when $[Y \mid X=x,A=a]$ follows distribution $Q_a$.
\label{prop:GoM}
\end{proposition}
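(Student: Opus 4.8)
The plan is to use the closed-form marginal-median value displayed just before the statement and reduce the policy optimization to a single binary comparison. Because $X$ is binary, a policy is a pair $(d(0),d(1)) \in \{0,1\}^2$, and the value of any policy is $m(Y^d) = \frac{\sigma_d(1)\mu_d(0) + \sigma_d(0)\mu_d(1)}{\sigma_d(0)+\sigma_d(1)}$. First I would take the group $X=0$ distributions to be treatment-invariant, i.e.\ $Q_0 = Q_1 = N(\nu_0, s_0^2)$ (and similarly $\overline{Q}_0 = \overline{Q}_1 = N(\nu_0, \overline{s}_0^{\,2})$). This makes $(\mu_d(0),\sigma_d(0)) = (\nu_0, s_0)$ independent of the choice $d(0)$, so $d(0)$ is irrelevant and the maximization over the four policies collapses to choosing $d(1) \in \{0,1\}$. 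This deliberately sidesteps the joint optimization and isolates the dependence of the group-$1$ decision on the unrelated group-$0$ distribution.

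Writing $V(b)$ for the value when $d(1)=b$, the second step is to compute the sign of $V(1)-V(0) = \frac{\sigma_1(1)\nu_0 + s_0\mu_1(1)}{s_0+\sigma_1(1)} - \frac{\sigma_0(1)\nu_0 + s_0\mu_0(1)}{s_0+\sigma_0(1)}$. Placing both fractions over the common (positive) denominator and simplifying, the numerator factors as $s_0 \cdot \Delta$ with
\[ \Delta = \nu_0\bigl(\sigma_1(1)-\sigma_0(1)\bigr) + s_0\bigl(\mu_1(1)-\mu_0(1)\bigr) + \bigl(\mu_1(1)\sigma_0(1) - \mu_0(1)\sigma_1(1)\bigr), \]
so that $\mathrm{sign}\{V(1)-V(0)\} = \mathrm{sign}(\Delta)$. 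The hypotheses now enter exactly where needed: $\mu_1(1) > \mu_0(1)$ makes the coefficient of $s_0$ positive, and the ratio condition $\mu_1(1)/\mu_0(1) < \sigma_1(1)/\sigma_0(1)$ is precisely $\mu_1(1)\sigma_0(1) - \mu_0(1)\sigma_1(1) < 0$, making the constant term negative (the stated $\sigma_1(1) > \sigma_0(1)$ is then automatic since the means are positive).

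Given the sign formula, the flip follows by tuning a single group-$0$ parameter. Fixing $\nu_0 = 0$, the quantity $\Delta = s_0(\mu_1(1)-\mu_0(1)) + (\mu_1(1)\sigma_0(1)-\mu_0(1)\sigma_1(1))$ is linear and strictly increasing in $s_0$ with a negative intercept, so it has a unique root $s_0^\star > 0$. Taking $s_0 > s_0^\star$ gives $\Delta>0$, hence $d^*_\text{MME}(1; Q_0,Q_1) = 1 = \mathbb{1}\{\mu_1(1)>\mu_0(1)\}$, while taking $\overline{s}_0 < s_0^\star$ gives $\Delta<0$, hence $d^*_\text{MME}(1; \overline{Q}_0,\overline{Q}_1) = 0 = \mathbb{1}\{\mu_1(1)\le\mu_0(1)\}$; since the group-$1$ distribution is never changed, this is exactly the asserted flip. (Alternatively, one may fix $s_0$ and send $\nu_0 \to \pm\infty$, letting $\sigma_1(1)>\sigma_0(1)$ drive the sign.) I expect the only real work to be in the second step --- carrying out the common-denominator simplification carefully enough to expose the clean factorization $s_0\cdot\Delta$ and to confirm that the three hypotheses pin down the signs of the $s_0$-coefficient and the constant term. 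Once that algebra is in place, the existence of the two distributions is a one-line consequence.
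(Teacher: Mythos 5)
Your proposal is correct and follows essentially the same route as the paper: both use the closed-form standard-deviation-weighted expression for $m(Y^d)$ in the two-point Gaussian model, reduce the question to comparing policies that differ only in $d(1)$, and observe that the sign of the decisive difference is governed by the linear quantity $\Delta = \nu_0\bigl(\sigma_1(1)-\sigma_0(1)\bigr) + s_0\bigl(\mu_1(1)-\mu_0(1)\bigr) + \bigl(\mu_1(1)\sigma_0(1)-\mu_0(1)\sigma_1(1)\bigr)$, which is exactly the paper's criterion~\eqref{eq:mte_opt_criteria} rearranged, with the ratio hypothesis making the constant term negative so that varying the group-$0$ parameters flips the sign. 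Your specialization to treatment-invariant $Q_0=Q_1$ at $X=0$ (collapsing the four-policy comparison to one) and the explicit root $s_0^\star$ make the existence claim slightly more concrete than the paper's statement, but the underlying argument is the same.
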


\begin{remark}
When $\sigma_1(0) = \sigma_0(0) = \sigma_1(1) = \sigma_0(1)$, 
it follows that  $m(Y^d) = \mathbb{E}[m_d(X)]$ for all $d \in \mathcal{D}$
and 
$d^*_\text{ACME}(x) = \mathbb{1}\{\mu_1(x) > \mu_0 (x)\}$ is an optimal policy with respect to $m(Y^d)$.%
\end{remark}

\begin{remark}[Distribution shift]
Importantly, we also note that among the three optimal policies, 
only the marginal median optimal policy $d^*_\text{MME}$ varies under different covariate distribution $\mathbb{P}(X)$. This is crucial when distribution shift is a concern, which is common in many current setups \citep{quionero2009dataset, mo2020learning}. 
Therefore, under distribution shift 
between training and testing times, 
the marginal median optimal policy $d^*_\text{MME}$ could be suboptimal at testing time even if optimal at training. 
For example, 
if during test time $\mathbb{P}(X = 0) = 0$ and $\mathbb{P}(X = 1) = 1$, 
then 
in the case when $d^*_\text{MME}(1) = \mathbb{1}\{\mu_1(1) \leq \mu_0(1)\}$ during training and $\mu_1(1) \neq \mu_0(1)$, 
$d^*_\text{MME}$ is not optimal (with respect to the marginal median) during test time since $m(Y^d) = \mu_d(1)$. 
Unlike $d^*_\text{MME}$, our proposed median optimal treatment regime   
$d^*_\text{ACME}$ remains the same under covariate shifts. 
\end{remark}

As illustrated in Proposition~\ref{prop:GoM}, 
the optimal decision for a specific subject of covariate $x$
with respect to the marginal median 
depends on not just its own conditional distribution $[Y\mid X=x, A=a]$ for $a \in \{0,1\}$ 
but also the conditional distributions at other covariate values. 
On the other hand, ACME allows the optimal decision for $x$ 
to be only influenced by its own conditional outcome distribution at that $x$. 
In Section~\ref{sec:motivating-example}, 
we give an illustration that summarizes the differences among 
the three policies.

\begin{figure}[H]
    \centering
    \includegraphics[width=\linewidth]{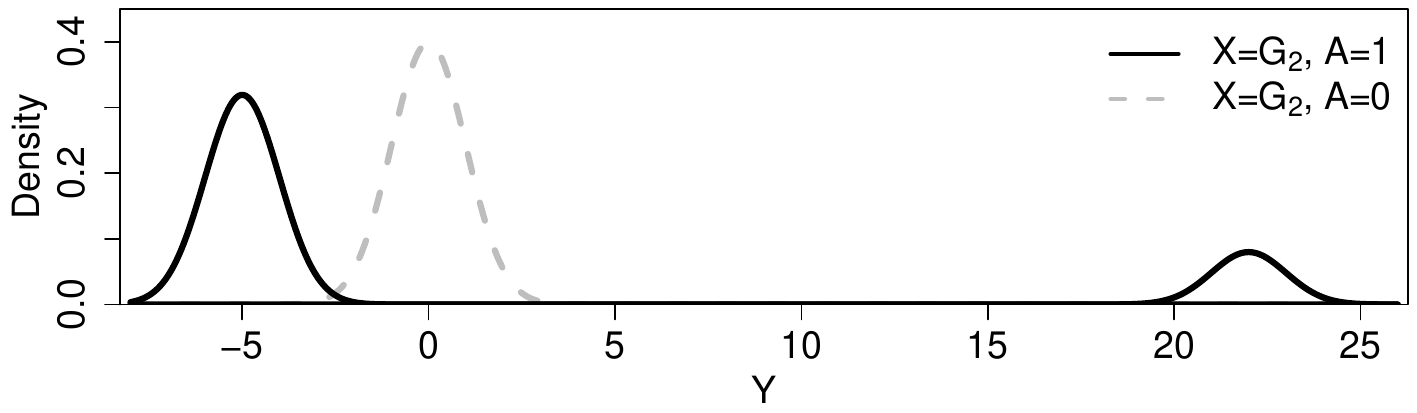}
    \caption{The density of $[Y \mid X={G_2}, A=a]$ in Case I and Case II.}
    \label{fig:female_density}
\end{figure}

\subsection{Simple Illustration Comparing $d^*_\text{MEAN}$, $d^*_\text{MME}$, and $d^*_\text{ACME}$}
\label{sec:motivating-example}

Here we give a simple illustration that showcases the differences among the three policies, and among the corresponding values they maximize. We will see how the ACME  optimal policy $d^*_\text{ACME}$ exhibit both within-group robustness and across-group fairness, while the more standard policies $d^*_\text{MEAN}$ and $d^*_\text{MME}$ may not. 

Consider a population of two groups: one with college degrees ($G_1$) and the other without ($G_2$). 
The binary treatment $A$ indicates whether a job training program is assigned to the individual   
and the  outcome $Y$ is the subjects' improvement in income. In Case I, suppose the groups and outcomes follow the distribution:
\begin{align*}
X &\sim \text{Bernoulli}(1/2) \\
Y \mid X=G_1,A=a &\sim \mathcal{N}\left(9, 1 \right) \\
Y \mid X=G_2,A=a &\sim (1-a) \mathcal{N}(0,1) + a (.2 \; \mathcal{N}(22,1) + .8 \; \mathcal{N}(-5,1))
\end{align*}
In other words, there is no treatment effect for subjects with a college degree, whose outcome distribution is always the same Gaussian.
However, under control subjects without a college degree have  outcomes that follow a Gaussian centered at $0$, while under treatment the outcomes follow a Gaussian mixture. The outcome distribution for $G_2$ is given in Figure~\ref{fig:female_density}. 
When examining the optimal treatment under each value, 
we find that both the mean optimal policy $d^*_\text{MEAN}$ and marginal median optimal policy $d^*_\text{MME}$ treat all subjects without a college degree (Table~\ref{table:illustration}), 
while $d^*_\text{ACME}$ does not treat them. 
Compared to $d^*_\text{MEAN}$, our proposed median optimal policy $d^*_\text{ACME}$ 
is robust against outliers, hence its decisions 
promotes within-group robustness in the sense that 
the decisions are less prone 
to only benefiting a small subgroup, e.g., 
as in our $A=1$ case for the group without a college degree.
On the other hand, although intuitively the  marginal median optimal policy $d^*_\text{MME}$ also gives robust decisions, 
it overlooks characteristics of specific subgroups. 
In particular, treatment decisions for those with $X=x$ can depend on other subjects' conditional outcome distribution with $X \neq x$, violating across-group fairness. 
This is illustrated by comparing the optimal marginal median decision for subjects without a college degree under Case I
and Case II
where 
in Case II
the the conditional outcome distribution for subjects with a college degree under treatment and control is changed to  
\begin{align*}
Y \mid X=G_1, A=a &\sim \mathcal{N}\left(-4, 2 \right).
\end{align*}
Surprisingly, as in Proposition~\ref{prop:GoM}, the marginal median optimal policy $d^*_\text{MME}$ under Case II is different for individuals without a college degree compared to the $d^*_\text{MME}$ under Case I, even though it was only the outcome distribution {for subjects with a college degree} that changed. This illustrates across-group unfairness of the marginal median value. In contrast, the median optimal decision %
for individuals without a college degree has remained unchanged.

\begin{table}[H]
\centering
\begin{tabular}{@{}ccccc@{}}
\toprule
                  & \multicolumn{2}{c}{Case I}             & \multicolumn{2}{c}{Case II}            \\ \midrule
Policy            & Treat $G_1$ & Treat $G_2$              & Treat $G_1$ & Treat $G_2$              \\
$d^*_\text{MEAN}$ & ---        & \checkmark & ---        & \checkmark \\
$d^*_\text{MME}$  & ---        & \checkmark & ---        &                           \\
$d^*_\text{ACME}$ & ---        &                           & ---        &          \\ \bottomrule       
\end{tabular}
\caption{The table shows the treatment assignments under 
the mean optimal policy $d^*_\text{MEAN}$, 
the marginal median optimal policy $d^*_\text{MME}$, 
and the median optimal policy $d^*_\text{ACME}$ in Case I and Case II. 
Since in both cases, the conditional outcome distributions for $G_1$ are the same under treatment and control,
``---'' is used to indicate that the optimal decision for $G_1$ can be either $a=0$ or $a=1$.  
Though the conditional outcome distributions for $G_2$ remain unchanged in both cases, 
the optimal marginal median decision for $G_2$ has changed.
}
\label{table:illustration}
\end{table}

\section{Efficiency Bound}
\label{sec:efficiency}
In this section, using semiparametric theory~\citep{newey1990semiparametric, bickel1993efficient, van2002semiparametric, van2003unified, tsiatis2007semiparametric, kennedy2016semiparametric,diaz2017efficient,chernozhukov2018double},  
we study the nonparametric efficiency bound for 
estimating the ACME $\psi_d := \mathbb{E}[m_d(X)]$ of a given policy $d$, 
given iid samples from distribution $\mathbb{P}$. 
One of the key tools we will use throughout the paper is 
the efficient influence function 
$\phi_d(Z)$ (Corollary~\ref{corollary:influence-function-calculation})
which serves as a first-order derivative 
in a von Mises-type expansion of the target parameter (Lemma~\ref{lemma:decomposition}). 

There are a number of reasons why characterizing efficient influence functions is important.  
The variance of the efficient influence function gives a minimax lower bound for estimating $\psi_d$ (Theorem~\ref{thm:variance})
and so provides a benchmark to compare against when constructing estimators. 
In particular, as used in Section~\ref{sec:estimation}, 
efficient influence functions suggest a doubly robust-style bias-corrected 
estimator.  
Doubly robust estimators attain fast parametric rates 
in nonparametric settings where nuisance functions 
(e.g., $m_a, \pi_a, f_{a,m}$ in our context) 
are estimated at slower rates. 
We begin with presenting the von Mises-type expansion 
(a distributional Taylor expansion)
for ACME, 
which is crucial for the efficiency bound (Theorem~\ref{thm:variance})
and the estimation guarantees for our proposed doubly robust-style estimator (Theorem~\ref{thm:fixed_policy_estimator}).

\begin{lemma}
Given a policy $d \in \mathcal{D}$, 
for the average conditional median effect 
$\psi_d$ defined in~\eqref{eq:ACME}, 
the following decomposition holds for all distributions  $\overline{\mathbb{P}}$ and $\mathbb{P}$:  
\begin{align*}
    \psi_d(\overline{\mathbb{P}}) - \psi_d(\mathbb{P})
    = \int \phi_d(Z; \overline{\mathbb{P}})  d (\overline{\mathbb{P}} - \mathbb{P})
    + R_d(\overline{\mathbb{P}}, {\mathbb{P}}),
\end{align*}
where $\phi_d(Z;\mathbb{P}) = \xi_d(Z;\mathbb{P}) -  \psi_d(\mathbb{P})$, 
\begin{align}
    \xi_d(Z ; \mathbb{P}) &=  \frac{Ad(X)}{\pi_1(X)}\frac{1/2 - \mathbb{1}\{Y \leq m_1(X) \}}{ f_{1,m}\left(X\right)} 
    + \frac{(1-A)(1-d(X))}{\pi_0(X)}\frac{1/2 - \mathbb{1}\{Y \leq m_0(X) \}}{ f_{0,m}\left(X\right)} + m_{d}(X),  
\label{eq:acme_eif}
\end{align}
and 
\begin{align}
    R_d(\overline{\mathbb{P}}, {\mathbb{P}})
    &= \int_\mathcal{X} d(x)\left({\overline{m}_1(x) - m_1(x) - \frac{\pi_1(x)}{\overline{\pi}_1(x)}
    {\frac{F_{1,\overline{m}}(x) - {F}_{1,m}(x)}
    {\overline{f}_{1,\overline{m}}(x)}}}
    \right) \nonumber \\
    &\qquad + (1 - d(x)) \left({\overline{m}_0(x) -m_0(x) - \frac{\pi_0(x)}{\overline{\pi}_0(x)}
    \frac{F_{0, \overline{m}}(x) - {F}_{0,m}(x)}
    {\overline{f}_{0, \overline{m}}(x)}}
    \right)d \mathbb{P}(x).
\label{eq:R_d_raw}
\end{align}
$\overline{\pi}_1, \overline{m}_a$ and $\overline{f}_{a, \overline{m}}$ 
are the propensity score, the conditional median 
and the conditional density at the conditional median 
defined under $\overline{\mathbb{P}}$ 
and $F_{a, \overline{m}}(x) = F_a(\overline{m}_a(x) \mid x)$.  
\label{lemma:decomposition}
\end{lemma}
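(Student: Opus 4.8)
The plan is to verify the stated decomposition as an \emph{exact} algebraic identity, valid for any pair of distributions $\overline{\mathbb{P}}$ and $\mathbb{P}$; the second-order structure of $R_d$ that later powers double robustness is a separate observation, read off afterward from \eqref{eq:R_d_raw}. The entry point is the standard fact that $\xi_d(Z;\overline{\mathbb{P}})$ behaves as an \emph{uncentered} influence function, so that $\phi_d$ integrates to zero under its own distribution. Concretely, I would first establish
\begin{align*}
\int \phi_d(z;\overline{\mathbb{P}})\, d\overline{\mathbb{P}}(z) = 0 \quad\Longrightarrow\quad \int \phi_d(z;\overline{\mathbb{P}})\, d(\overline{\mathbb{P}} - \mathbb{P}) = \psi_d(\overline{\mathbb{P}}) - \int \xi_d(z;\overline{\mathbb{P}})\, d\mathbb{P}(z),
\end{align*}
where the implication uses $\phi_d = \xi_d - \psi_d(\overline{\mathbb{P}})$ and the constancy of $\psi_d(\overline{\mathbb{P}})$. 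Substituting this into the claimed identity and cancelling $\psi_d(\overline{\mathbb{P}})$ reduces the entire lemma to the single equality
\begin{align*}
R_d(\overline{\mathbb{P}},\mathbb{P}) = \int \xi_d(z;\overline{\mathbb{P}})\, d\mathbb{P}(z) - \psi_d(\mathbb{P}).
\end{align*}

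To get the mean-zero property I would condition on $X$ and use iterated expectations under $\overline{\mathbb{P}}$. In the first summand of \eqref{eq:acme_eif}, $\mathbb{E}_{\overline{\mathbb{P}}}[A \mid X] = \overline{\pi}(X)$ cancels the propensity weight, and $\mathbb{E}_{\overline{\mathbb{P}}}[\mathbb{1}\{Y \leq \overline{m}_1(X)\}\mid X, A=1] = \overline{F}_1(\overline{m}_1(X)\mid X) = 1/2$ by the definition of the conditional median under $\overline{\mathbb{P}}$ (absolute continuity guarantees exact attainment of $1/2$). Thus the factor $1/2 - \mathbb{1}\{Y \leq \overline{m}_1(X)\}$ has conditional mean zero, and likewise for the $A=0$ summand. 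Hence $\int \xi_d(z;\overline{\mathbb{P}})\, d\overline{\mathbb{P}} = \int \overline{m}_d(x)\, d\overline{\mathbb{P}}(x) = \psi_d(\overline{\mathbb{P}})$, which gives $\int \phi_d\, d\overline{\mathbb{P}} = 0$.

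The crux is then the single computation of $\int \xi_d(z;\overline{\mathbb{P}})\, d\mathbb{P}(z)$, again by conditioning on $X$ but now integrating under the \emph{true} $\mathbb{P}$. Here $\mathbb{E}_{\mathbb{P}}[A\mid X] = \pi(X)$ and $\mathbb{E}_{\mathbb{P}}[\mathbb{1}\{Y\leq \overline{m}_1(X)\}\mid X, A=1] = F_1(\overline{m}_1(X)\mid X) = F_{1,\overline{m}}(X)$, so the first summand contributes $d(x)\frac{\pi(x)}{\overline{\pi}(x)}\frac{1/2 - F_{1,\overline{m}}(x)}{\overline{f}_{1,\overline{m}}(x)}$, and similarly for the $A=0$ summand with $1-\pi$, $1-\overline{\pi}$, $F_{0,\overline{m}}$, $\overline{f}_{0,\overline{m}}$. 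The key step is to rewrite the constant $1/2$ as $F_{1,m}(X)$ — valid precisely because $m_1(X)$ is the conditional median under $\mathbb{P}$ — so that $1/2 - F_{1,\overline{m}}(X) = -\left(F_{1,\overline{m}}(X) - F_{1,m}(X)\right)$ becomes a difference of evaluated CDFs. Adding the $\overline{m}_d(X)$ term and subtracting $\psi_d(\mathbb{P}) = \int m_d(x)\, d\mathbb{P}(x)$ then reproduces \eqref{eq:R_d_raw} term by term.

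I expect no deep obstacle: the argument is essentially bookkeeping. The only points requiring genuine care are (i) keeping straight which nuisance quantities are evaluated under $\overline{\mathbb{P}}$ (the barred objects $\overline{\pi}, \overline{m}_a, \overline{f}_{a,\overline{m}}$, and $F_{a,\overline{m}}$) versus under $\mathbb{P}$ (the unbarred $\pi, m_a$, together with the median constraint $F_{a,m} = 1/2$), and (ii) the substitution $1/2 = F_{a,m}(X)$, which is exactly what converts the naive plug-in expectation into the cross-distribution remainder. As a closing observation one would note that each bracket of $R_d$ has the form $(\overline{m}_a - m_a) - \frac{\pi}{\overline{\pi}}\frac{F_{a,\overline{m}} - F_{a,m}}{\overline{f}_{a,\overline{m}}}$; since a first-order expansion of $F_a(\cdot\mid x)$ about $m_a(x)$ gives $F_{a,\overline{m}} - F_{a,m} \approx f_{a,m}\,(\overline{m}_a - m_a)$, the bracket collapses to products of nuisance errors, which is the structure exploited in Section~\ref{sec:estimation}.
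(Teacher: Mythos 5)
Your proposal is correct and takes essentially the same route as the paper's proof: both reduce the lemma, via the mean-zero property $\int \phi_d(z;\overline{\mathbb{P}})\,d\overline{\mathbb{P}}(z)=0$, to the single computation $R_d(\overline{\mathbb{P}},\mathbb{P}) = \int \xi_d(z;\overline{\mathbb{P}})\,d\mathbb{P}(z) - \psi_d(\mathbb{P})$, evaluated by iterated expectation ($\mathbb{E}_{\mathbb{P}}[\mathbb{1}\{Y \leq \overline{m}_a(X)\} \mid X=x, A=a] = F_{a,\overline{m}}(x)$) followed by the substitution $1/2 = F_{a,m}(x)$ and rearrangement into the bracketed form of $R_d$. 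The only difference is that you explicitly verify the mean-zero step under $\overline{\mathbb{P}}$ (noting the role of absolute continuity in attaining exactly $1/2$ at the conditional median), which the paper asserts without proof; this is added rigor, not a different argument.
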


Proofs of the results in this section 
can be found in Appendix~\ref{appendix:efficiency-proofs}. 
Lemma~\ref{lemma:decomposition} has several important implications. 
It implies that the efficient influence function of ACME is $\phi_d(Z)$ 
as shown in Corollary~\ref{corollary:influence-function-calculation}.
The efficient influence function plays a crucial role in the 
bias term for the plug-in estimator $\psi_d(\widehat{\mathbb{P}})$, 
which we will correct to obtain the doubly robust-style estimator given later on in~\eqref{eq:dr_estimator}. 
The decomposition result in~\eqref{eq:R_d_raw} (along with an alternate expression in~\eqref{eq:R_d_double}) is then used to show the convergence rate of the proposed estimator.

Throughout the rest of the paper, we rely on the following model assumption, which ensures that certain boundedness and mild smoothness conditions hold.

\begin{assumption}\label{assump:statistical_models}
The true distribution $\mathbb{P}$ lies in $\mathcal{P}$ where $\mathcal{P}$ is the  statistical model given by
\edit{
\begin{align*}
    \mathcal{P} := \Big\{\mathbb{P} &\; \Big| \; \exists \epsilon \in (0,1),  \mathbb{P}\left(\epsilon < \pi_1(X) \leq 1 - \epsilon \right) =1, \\
    &\forall a \in \{0,1\}, \exists \; 0 < M_0 \leq M_1,  
    \mathbb{P}\left(M_0 \leq f_{a,m}(X) \leq M_1 \right) =1, \text{ and }\\
    & \forall a \in \{0,1\}, y \in \mathcal{Y}, 
    f_a(y \mid X) \text{ is differentiable and L-Lipschitz continuous in } y \text{ almost surely}
    \Big\}.  
\end{align*}
}
\end{assumption}
The conditions of Assumption~\ref{assump:statistical_models} %
only require some boundedness of the propensity score and conditional outcome density, 
as well as some weak smoothness of the conditional density in $y$. 
Importantly, Assumption~\ref{assump:statistical_models} does not require any smoothness of the propensity score or outcome density in $X$.

As written, the reminder term $R_d(\overline{\mathbb{P}}, {\mathbb{P}})$ in Lemma~\ref{lemma:decomposition} does not appear to be second-order, 
i.e., involving products of differences of $\overline{\mathbb{P}}$ and $\mathbb{P}$. 
However, in the next corollary we show that it is in fact second-order, 
under the model assumption (Assumption~\ref{assump:statistical_models}).

\begin{corollary}\label{corollary:R_d_product}  
For all distributions $\mathbb{P} \in \mathcal{P}$, 
given any distribution $ \overline{\mathbb{P}}$,  
the remainder term %
$R_d(\overline{\mathbb{P}}, \mathbb{P})$ defined in~\eqref{eq:R_d_raw} 
is equivalent to  
\begin{align}
    &R_d(\overline{\mathbb{P}}, \mathbb{P})
    = \mathbb{P}\Bigg\{ 
    \frac{d\left(\overline{m}_1 -m_1 \right)}{\overline{\pi}_1  \overline{f}_{1,\overline{m}}} 
    \left(
    \left(\overline{\pi}_1 - \pi_1 \right)  \overline{f}_{1,\overline{m}} 
    +  
    \left(  \overline{f}_{1,\overline{m}} - f_{1,m}\right) \pi_1
    - (\overline{m}_1 - m_1)\frac{f'_{1,c} \pi_1 }{2}\right) \nonumber \\
    & + \frac{(1 - d)\left(\overline{m}_0 -m_0 \right)}{\overline{\pi}_0  \overline{f}_{0,\overline{m}}} 
    \left(
    (\overline{\pi}_0 - \pi_0)  \overline{f}_{0,\overline{m}} 
    +(\overline{f}_{0,\overline{m}} - f_{0,m}) \pi_0 
    - (\overline{m}_0 - m_0) \frac{f'_{0,c} \pi_0 }{2 }\right) \Bigg\},
\label{eq:R_d_double}
\end{align}
where $f'_{a,c} \leq L$
is the derivative of $f_a(y \mid x)$
at $y = c_a(x)$ 
for some value $c_a(x)$ between $m_a(x)$ and $\overline{m}_a(x)$.\footnote{Since $\mathbb{P} \in \mathcal{P}$, $f'_{a,c}$ exists and is bounded by $L$ almost surely. }
\end{corollary}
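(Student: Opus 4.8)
The plan is to start from the raw remainder~\eqref{eq:R_d_raw} and show that each of the two bracketed factors (for $a=1$ and $a=0$) can be rewritten entirely as sums of products of differences between $\overline{\mathbb{P}}$- and $\mathbb{P}$-quantities. The observation that unlocks everything is that, by the definition of the conditional median together with the absolute continuity of $F_a$, we have $F_{a,m}(x) = F_a(m_a(x) \mid x) = 1/2$, whereas $F_{a,\overline{m}}(x) = F_a(\overline{m}_a(x) \mid x)$ evaluates the \emph{true} conditional CDF at the \emph{nuisance} median $\overline{m}_a(x)$ and is generally not $1/2$. Hence the quantity $F_{a,\overline{m}}(x) - F_{a,m}(x)$ appearing in~\eqref{eq:R_d_raw} is exactly the increment of the true CDF incurred by moving from $m_a$ to $\overline{m}_a$, which can be controlled by a Taylor expansion.

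First I would expand $F_a(\cdot \mid x)$ to second order about $y = m_a(x)$. Since $\mathbb{P} \in \mathcal{P}$ guarantees that $f_a(y \mid x)$ is differentiable in $y$ almost surely (Assumption~\ref{assump:statistical_models}), Taylor's theorem with the Lagrange form of the remainder gives, exactly,
\begin{align*}
F_{a,\overline{m}}(x) - F_{a,m}(x) = f_{a,m}(x)\bigl(\overline{m}_a(x) - m_a(x)\bigr) + \tfrac{1}{2}\, f'_{a,c}\,\bigl(\overline{m}_a(x) - m_a(x)\bigr)^2 ,
\end{align*}
where $f'_{a,c}$ is the derivative of $f_a(\cdot \mid x)$ evaluated at some $c_a(x)$ between $m_a(x)$ and $\overline{m}_a(x)$ (bounded by $L$ under Assumption~\ref{assump:statistical_models}). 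This is the only step where analysis, rather than algebra, enters, and it is where the smoothness part of the model assumption is used.

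Next I would substitute this expansion into the $a=1$ factor of~\eqref{eq:R_d_raw}, factor out $\tfrac{\overline{m}_1 - m_1}{\overline{\pi}\,\overline{f}_{1,\overline{m}}}$, and collect terms; this leaves the bracket $\overline{\pi}\,\overline{f}_{1,\overline{m}} - \pi f_{1,m} - (\overline{m}_1 - m_1)\tfrac{f'_{1,c}\pi}{2}$. The algebraic step that reveals the second-order structure is to split the cross term by adding and subtracting $\pi\,\overline{f}_{1,\overline{m}}$, yielding $\overline{\pi}\,\overline{f}_{1,\overline{m}} - \pi f_{1,m} = (\overline{\pi}-\pi)\overline{f}_{1,\overline{m}} + (\overline{f}_{1,\overline{m}} - f_{1,m})\pi$, a sum of products of $\overline{\mathbb{P}} - \mathbb{P}$ differences; together with the quadratic Taylor remainder (which carries $(\overline{m}_1 - m_1)^2$), every summand is now manifestly a product of two differences and matches the $a=1$ half of~\eqref{eq:R_d_double}. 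For $a=0$ I would repeat the identical computation with $1-\pi$ and $1-\overline{\pi}$ in place of $\pi$ and $\overline{\pi}$: factoring out $\tfrac{\overline{m}_0 - m_0}{(1-\overline{\pi})\,\overline{f}_{0,\overline{m}}}$ gives the bracket $(1-\overline{\pi})\overline{f}_{0,\overline{m}} - (1-\pi)f_{0,m} - (\overline{m}_0 - m_0)\tfrac{f'_{0,c}(1-\pi)}{2}$, and the analogous add-and-subtract (inserting $(1-\pi)\overline{f}_{0,\overline{m}}$, then expanding the $(1-\pi)$ factor) reproduces the three displayed terms $(\overline{f}_{0,\overline{m}} - f_{0,m}) + (\pi-\overline{\pi})\overline{f}_{0,\overline{m}} + (f_{0,m} - \overline{f}_{0,\overline{m}})\pi$ of~\eqref{eq:R_d_double}. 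Reassembling the $d(x)$- and $(1-d(x))$-weighted pieces and writing $\int \cdot\, d\mathbb{P}(x)$ as $\mathbb{P}\{\cdot\}$ then yields the claimed expression.

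The main obstacle is not any single hard estimate but disciplined bookkeeping: one must track which quantities carry bars (defined under $\overline{\mathbb{P}}$) versus not, ensure the mean-value point $c_a(x)$ and its derivative bound are handled uniformly in $x$ so that the pointwise identity integrates validly, and verify that the two distinct regroupings for $a=1$ and $a=0$ land exactly on the brackets as written. Once the second-order Taylor step is in place, the remainder of the argument is a routine, if somewhat lengthy, rearrangement.
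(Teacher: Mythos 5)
Your proposal is correct and follows essentially the same route as the paper's own proof: a second-order Taylor expansion of $F_a(\cdot \mid x)$ about $m_a(x)$ with Lagrange remainder (using the differentiability and $L$-Lipschitz conditions in Assumption~\ref{assump:statistical_models} to control $f'_{a,c}$), followed by factoring out $\overline{m}_a - m_a$ and the add-and-subtract regroupings $\overline{\pi}\,\overline{f}_{1,\overline{m}} - \pi f_{1,m} = (\overline{\pi}-\pi)\overline{f}_{1,\overline{m}} + (\overline{f}_{1,\overline{m}} - f_{1,m})\pi$ and its $a=0$ analogue. The only cosmetic difference is that the paper first writes the intermediate form $\bigl(1 - \tfrac{\pi f_{a,m}}{\overline{\pi}\,\overline{f}_{a,\overline{m}}}\bigr)$ before regrouping, whereas you pass directly to the final brackets; the substance is identical.
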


The fact that the remainder term is second-order implies that $\phi_d$ from Lemma~\ref{lemma:decomposition} is the efficient influence function, as stated in the next corollary. 

\begin{corollary}
For a given policy $d \in \mathcal{D}$, 
the efficient influence function for the average conditional median effect $\psi_d$ %
is $\phi_d(Z) = \xi_d(Z) - \psi_d$ where $\xi_d$ is defined in~\eqref{eq:acme_eif}. %
\label{corollary:influence-function-calculation}
\end{corollary}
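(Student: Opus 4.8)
The plan is to verify that $\phi_d = \xi_d - \psi_d$ satisfies the defining characterization of an efficient influence function: it is a mean-zero, square-integrable gradient of $\psi_d$ lying in the tangent space of the model $\mathcal{P}$ at $\mathbb{P}$. Because the constraints defining $\mathcal{P}$ in Assumption~\ref{assump:statistical_models} are boundedness and smoothness inequalities with nonempty interior, the model is locally nonparametric at any interior $\mathbb{P}$, so its tangent space is the whole space of mean-zero functions in $L_2(\mathbb{P})$. Consequently the gradient is unique, and it suffices to show (i) $\phi_d$ is a mean-zero element of $L_2(\mathbb{P})$ and (ii) for every regular one-dimensional parametric submodel $\{\mathbb{P}_t\}$ through $\mathbb{P}$ with score $s$ at $t=0$, the pathwise-derivative identity $\frac{d}{dt}\psi_d(\mathbb{P}_t)\big|_{t=0} = \mathbb{E}[\phi_d(Z)\,s(Z)]$ holds. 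The engine for (ii) is the von Mises expansion of Lemma~\ref{lemma:decomposition} together with the second-order form of the remainder from Corollary~\ref{corollary:R_d_product}.

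First I would dispatch (i): $\mathbb{E}[\phi_d] = 0$ follows since $\mathbb{E}[\xi_d] = \psi_d$, because the inverse-probability terms have conditional mean zero (using $\mathbb{E}[\mathbb{1}\{Y \le m_a(X)\} \mid X, A=a] = F_{a,m}(X) = 1/2$ and $\mathbb{E}[A \mid X] = \pi(X)$), while $\mathbb{E}[m_d(X)] = \psi_d$; square-integrability follows from Assumption~\ref{assump:statistical_models}, which bounds $\pi$ away from $0$ and $1$ and bounds $f_{a,m}$ below by $M_0 > 0$, making the weights bounded. For (ii), I would substitute $\overline{\mathbb{P}} = \mathbb{P}$ and the second argument equal to $\mathbb{P}_t$ into Lemma~\ref{lemma:decomposition}, so that $\phi_d(\cdot;\overline{\mathbb{P}}) = \phi_d(\cdot;\mathbb{P})$ is held fixed. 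Rearranging and using the mean-zero property $\int \phi_d(z;\mathbb{P})\, d\mathbb{P} = 0$ gives
\[ \psi_d(\mathbb{P}_t) - \psi_d(\mathbb{P}) = \int \phi_d(z;\mathbb{P})\, d\mathbb{P}_t(z) - R_d(\mathbb{P}, \mathbb{P}_t). \]
Differentiating at $t = 0$, the first term yields $\int \phi_d(z;\mathbb{P})\, s(z)\, d\mathbb{P}(z) = \mathbb{E}[\phi_d(Z)\,s(Z)]$ via $\frac{d}{dt} d\mathbb{P}_t\big|_{t=0} = s\, d\mathbb{P}$, which is exactly the required identity provided the remainder derivative vanishes.

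The crux is therefore showing $\frac{d}{dt} R_d(\mathbb{P}, \mathbb{P}_t)\big|_{t=0} = 0$. Here I would invoke Corollary~\ref{corollary:R_d_product}, which writes $R_d(\mathbb{P}, \mathbb{P}_t)$ as an expectation of sums of products of differences of the nuisance functions $(m_a, f_{a,m}, \pi)$ evaluated under $\mathbb{P}$ versus $\mathbb{P}_t$, e.g.\ terms of the form $(\overline{m}_a - m_a)(\overline{\pi} - \pi)$, $(\overline{m}_a - m_a)(\overline{f}_{a,\overline m} - f_{a,m})$, and $(\overline{m}_a - m_a)^2$. Along a regular submodel each such difference is $O(t)$, so every product is $O(t^2)$ and the remainder is $o(t)$; hence its derivative at $t=0$ is zero, and (ii) follows. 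Combined with (i) and the fact that the tangent space is all of mean-zero $L_2(\mathbb{P})$, this identifies $\phi_d$ as the efficient influence function.

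I expect the main obstacle to be the regularity underlying these differentiation steps rather than the algebra: justifying the interchange of $\frac{d}{dt}$ with integration in $\int \phi_d\, d\mathbb{P}_t$ (handled by dominated convergence using the uniform bounds of Assumption~\ref{assump:statistical_models}) and, more delicately, establishing that each nuisance difference in Corollary~\ref{corollary:R_d_product} is genuinely first-order in $t$. The subtle piece is the pathwise differentiability of the conditional median $m_a$ and of the density-at-median $f_{a,m}$ along $\{\mathbb{P}_t\}$; this is where the bounded, Lipschitz-in-$y$ density from Assumption~\ref{assump:statistical_models} is essential, since it controls how $m_a$ moves with $t$ through the implicit relation $F_a(m_a(x)\mid x) = 1/2$. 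Once these regularity facts are in place, the second-order structure of $R_d$ immediately kills the remainder derivative and pins down $\phi_d$ as the efficient (not merely a valid) influence function.
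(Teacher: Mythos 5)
Your proposal matches the paper's own proof essentially step for step: both fix $\phi_d(\cdot;\mathbb{P})$ in the expansion of Lemma~\ref{lemma:decomposition} with the perturbed distribution in the other slot, differentiate along a parametric submodel so that the pathwise derivative reduces to $\mathbb{E}[\phi_d(Z)s(Z)]$ once $\frac{d}{dt}R_d(\mathbb{P},\mathbb{P}_t)\big|_{t=0}=0$, obtain that vanishing from the second-order product structure of Corollary~\ref{corollary:R_d_product}, and conclude efficiency from uniqueness of the influence function in a (locally) nonparametric model. Your added checks --- mean-zero and square-integrability of $\phi_d$, the tangent-space justification, and the $O(t)$ regularity of the nuisance differences (including the implicit-function argument for $m_a$ via $F_a(m_a(x)\mid x)=1/2$) --- are sound elaborations of steps the paper leaves implicit rather than a different route.
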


Since the variance of $\phi_d$  serves as a nonparametric efficiency bound, in the local minimax sense \citep{van2002semiparametric}, in the next result we give the exact
form of this variance. This shows what factors drive the statistical difficulty in ACME estimation, and demonstrates the difference in efficiency for ACME versus other value measures.

\begin{theorem}
For a given policy $d \in \mathcal{D}$, 
the nonparametric efficiency bound for estimating $\psi_d$ 
is given by the variance $\sigma_d^2 := \text{Var}\{ \phi_d(Z) \}$ where 
\begin{align}
    \sigma_d^2 = \mathbb{E}\left[ 
    \frac{d(X)}{4\pi_1(X)f_{1,m}^2(X)} 
    + \frac{1 - d(X)}{4 \pi_0(X)f_{0,m}^2(X)} \right]
    + \text{Var}\left( m_d(X) \right).
\label{eq:sigma_d}
\end{align}
\label{thm:variance}
\end{theorem}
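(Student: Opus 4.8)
The plan is to leverage Corollary~\ref{corollary:influence-function-calculation}, which already establishes that $\phi_d(Z) = \xi_d(Z) - \psi_d$ is the efficient influence function. Since every influence function has mean zero, we immediately get $\sigma_d^2 = \mathrm{Var}\{\phi_d(Z)\} = \mathbb{E}[\phi_d(Z)^2] = \mathrm{Var}\{\xi_d(Z)\}$, so the entire theorem reduces to a direct second-moment computation for $\xi_d$ as given in~\eqref{eq:acme_eif}. I would write $\xi_d(Z) = T_1 + T_0 + m_d(X)$, where $T_1 = \frac{Ad(X)}{\pi(X)}\frac{1/2 - \mathbb{1}\{Y \leq m_1(X)\}}{f_{1,m}(X)}$ and $T_0$ is the analogous control-arm term, and then expand the variance of this sum.

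The key structural facts that make the cross-terms vanish are the following. First, $T_1 T_0 = 0$ almost surely, because the factor $A d(X)$ in $T_1$ and the factor $(1-A)(1-d(X))$ in $T_0$ cannot be simultaneously nonzero (using $A(1-A)=0$ and $d(X)(1-d(X))=0$). Second, $\mathbb{E}[T_a \mid X] = 0$ for $a \in \{0,1\}$: conditioning on $(X,A)$ via iterated expectations collapses the inner factor to $1/2 - \mathbb{P}(Y \leq m_a(X) \mid X, A=a) = 1/2 - F_{a,m}(X)$, which is exactly zero by the absolute continuity of $F_a$ assumed in Section~\ref{sec:preliminaries} (so the conditional median sits precisely at probability $1/2$). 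This simultaneously confirms $\mathbb{E}[\xi_d] = \psi_d$ and, writing $S := T_1 + T_0$, forces the cross term $\mathbb{E}[S\,(m_d(X)-\psi_d)] = \mathbb{E}[(m_d(X)-\psi_d)\,\mathbb{E}[S\mid X]] = 0$.

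It then remains to evaluate the three squared contributions. For $\mathbb{E}[T_1^2]$, I would use the binary-support simplifications $A^2 = A$, $d(X)^2 = d(X)$, and $(1/2 - \mathbb{1}\{Y\le m_1(X)\})^2 = 1/4$, followed by iterated expectation with $\mathbb{E}[A\mid X] = \pi(X)$, yielding $\mathbb{E}[T_1^2] = \mathbb{E}\!\left[\frac{d(X)}{4\pi(X)f_{1,m}^2(X)}\right]$, and symmetrically $\mathbb{E}[T_0^2] = \mathbb{E}\!\left[\frac{1-d(X)}{4(1-\pi(X))f_{0,m}^2(X)}\right]$. The final piece is $\mathbb{E}[(m_d(X)-\psi_d)^2] = \mathrm{Var}(m_d(X))$, since $\psi_d = \mathbb{E}[m_d(X)]$. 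Summing $\mathbb{E}[T_1^2] + \mathbb{E}[T_0^2] + \mathrm{Var}(m_d(X))$ reproduces~\eqref{eq:sigma_d} exactly.

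There is no deep obstacle here; the argument is essentially careful bookkeeping once the influence function is in hand. The only subtlety worth flagging is that the vanishing conditional mean $\mathbb{E}[T_a\mid X]=0$ relies crucially on absolute continuity so that $F_{a,m}(x) = 1/2$ holds with equality rather than as an inequality — without this, residual first-order terms would survive and the clean variance decomposition would fail. Beyond that, the binary structure of $A$ and $d(X)$, together with the identity $(1/2 - \mathbb{1})^2 = 1/4$, is what keeps the squared terms from introducing any additional nuisance components.
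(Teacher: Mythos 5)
Your proposal is correct and takes essentially the same route as the paper's proof: both reduce to $\mathrm{Var}\{\xi_d(Z)\}$, eliminate all cross terms using $A(1-A)=0$, $d(X)(1-d(X))=0$, and $\mathbb{E}[\mathbb{1}\{Y\leq m_a(X)\}\mid X, A=a]=F_{a,m}(X)=1/2$, then evaluate the squared arm terms via $A^2=A$ and iterated expectation to obtain the three displayed components. The only cosmetic difference is that the paper expresses the constant $1/4$ as the conditional Bernoulli variance $\mathrm{Var}\{\mathbb{1}\{Y\leq m_a(X)\}\mid X, A=a\}=\tfrac12\cdot\tfrac12$, whereas you invoke the pointwise identity $\left(1/2-\mathbb{1}\{Y\leq m_a(X)\}\right)^2=1/4$, which is equivalent.
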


Theorem \ref{thm:variance} shows how the efficiency bound for the ACME 
is driven by three main factors:
\begin{itemize}
\item the inverse propensity score $1/\pi_1(X)$,
    \item the inverse conditional density at the conditional median 
$1/f_{a,m}\left( X\right)$, and
    \item the heterogeneity of the conditional median $m_d(X)$, i.e.,   $\text{Var}\left( m_d(X) \right)$.
\end{itemize} 

In contrast, recall the nonparametric efficiency bound for the mean value $\mathbb{E}[Y^d]$~\citep[Theorem 1]{hahn1998role}:
\begin{align*}
    \sigma^2_{d,\text{MEAN}}:= \mathbb{E}\left[ 
    \frac{d(X) \sigma^2_1(X)}{\pi_1(X)} 
    + \frac{(1 - d(X)) \sigma^2_0(X)}{\pi_0(X)} \right] 
    + \text{Var}\left( \mu_d(X) \right),
\end{align*}
instead depends on the heterogeneity of the
conditional outcomes $\sigma^2_a(X)$ and the heterogeneity of the
conditional mean $\text{Var}\left( \mu_d(X) \right)$ (along with the inverse propensity score).

Importantly, when the conditional distribution $[Y \mid X=x, A=a]$ is heavy-tailed, e.g., the outcome variance $\sigma^2_a(X)$ is  very large, then in general the mean value efficiency bound 
$\sigma^2_{d,\text{MEAN}}$  would be severely affected, while the ACME analog $\sigma_d^2$ could still be quite small. 

Next we formalize the local minimax lower bound property of the variance $\sigma_d^2$.

\begin{corollary}\citep[Corollary 2.6]{van2002semiparametric}
For a given policy $d \in \mathcal{D}$ and any estimator $\widehat{\psi}_d$ 
learned using $n$ iid samples from $\mathbb{P}$, 
it follows that
\begin{align*}
    \inf_{\delta > 0} 
    \liminf_{n \to \infty} 
    \sup_{\text{TV}(\overline{\mathbb{P}}, \mathbb{P}) < \delta} 
    n \; \mathbb{E}_{\overline{\mathbb{P}}} \left[\left(\widehat{\psi}_d - \psi_d\left(\overline{\mathbb{P}}\right) \right)^2 \right] 
    \geq \sigma_d^2,
\end{align*}
where $\text{TV}(\overline{\mathbb{P}}, \mathbb{P})$ is the total 
variation distance between $\overline{\mathbb{P}}$ and ${\mathbb{P}}$ 
and $\sigma_d^2$ is defined in~\eqref{eq:sigma_d}. 
\label{corollary:efficiency}
\end{corollary}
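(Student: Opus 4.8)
The plan is to apply the local asymptotic minimax theorem of \citet{van2002semiparametric} (the cited Corollary 2.6) directly, so that the work reduces to verifying that its hypotheses hold in our setting. That theorem delivers precisely the displayed lower bound over shrinking total-variation neighborhoods, provided the target functional is pathwise differentiable at $\mathbb{P}$ relative to the model's tangent set, with a canonical gradient (the efficient influence function) whose variance is the quantity on the right-hand side. The task is therefore threefold: (i) exhibit a gradient of $\psi_d$; (ii) identify its variance with $\sigma_d^2$; and (iii) confirm that the tangent set at $\mathbb{P} \in \mathcal{P}$ is rich enough that this gradient is the canonical one.

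For (i), I would reuse the machinery already in place. The von Mises-type expansion of Lemma~\ref{lemma:decomposition}, combined with the fact that its remainder $R_d(\overline{\mathbb{P}},\mathbb{P})$ is second-order under Assumption~\ref{assump:statistical_models} (Corollary~\ref{corollary:R_d_product}), shows that $\phi_d$ is the first-order term in a distributional Taylor expansion of $\psi_d$ about $\mathbb{P}$. Specialized to a smooth one-dimensional submodel $\mathbb{P}_t$ with score $s$, this says $\frac{d}{dt}\psi_d(\mathbb{P}_t)\big|_{t=0} = \int \phi_d \, s \, d\mathbb{P}$, i.e., $\phi_d$ is a gradient of $\psi_d$. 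Since $\phi_d = \xi_d - \psi_d$ is a mean-zero element of $L_2(\mathbb{P})$ and is identified as the efficient influence function in Corollary~\ref{corollary:influence-function-calculation}, (i) holds. For (ii), Theorem~\ref{thm:variance} gives $\text{Var}\{\phi_d(Z)\} = \sigma_d^2$ in closed form, and Assumption~\ref{assump:statistical_models} makes this finite, since $1/\pi(X)$ and $1/f_{a,m}^2(X)$ are bounded ($\pi$ bounded away from $0,1$ and $f_a$ bounded below by $M_0>0$).

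The crux is (iii): one must check that the least-favorable one-dimensional submodels used to attain the bound remain inside $\mathcal{P}$ for small perturbation size. This is exactly where Assumption~\ref{assump:statistical_models} is used, because its restrictions are strict (open) inequalities, $\epsilon < \pi(X) \leq 1-\epsilon$ and $M_0 \leq f_a(y\mid X) \leq M_1$, together with Lipschitz smoothness of $f_a$ in $y$. Hence, for any bounded score direction $s$, a submodel of the form $d\mathbb{P}_t \propto (1 + t\,s)\,d\mathbb{P}$ (or a Hellinger-differentiable variant) stays in $\mathcal{P}$ for all sufficiently small $t$; the resulting tangent set is an (unrestricted) linear space on which $\phi_d$ is an admissible gradient and therefore the canonical gradient. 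With the hypotheses of the cited corollary verified, its conclusion transfers verbatim, yielding the stated minimax lower bound with $\sigma_d^2$ as in~\eqref{eq:sigma_d}. I expect (iii) -- guaranteeing that the perturbing submodels respect the boundedness and Lipschitz constraints of $\mathcal{P}$ -- to be the only genuinely nontrivial point, with (i) and (ii) immediate from the already-established results.
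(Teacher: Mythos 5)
Your proposal is correct and takes essentially the same route as the paper: the paper also obtains the bound as a direct consequence of \citet[Corollary 2.6]{van2002semiparametric}, with pathwise differentiability and the canonical gradient $\phi_d$ already established through Lemma~\ref{lemma:decomposition}, Corollary~\ref{corollary:R_d_product}, and Corollary~\ref{corollary:influence-function-calculation}, and the variance identification $\text{Var}\{\phi_d(Z)\} = \sigma_d^2$ supplied by Theorem~\ref{thm:variance}. Your step (iii) mirrors the paper's own verification, which uses exactly the bounded-score submodels $p_\epsilon(z) = p(z)(1+\epsilon h(z))$ in the proof of Corollary~\ref{corollary:influence-function-calculation}.
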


Corollary~\ref{corollary:efficiency} directly follows from~\citet[Corollary 2.6]{van2002semiparametric}. 
It shows that without further assumptions, the asymptotic local minimax mean squared error of any estimator scaled by $n$  
can be no smaller than $\sigma_d^2$.  
In Section~\ref{sec:estimation}, under mild conditions, 
we construct doubly robust-style estimators 
that achieve the local minimax lower bound shown in Corollary~\ref{corollary:efficiency}.

\section{Estimation of the ACME}
\label{sec:estimation}
In this section, we propose efficient estimators of the ACME that utilize the 
efficient influence function given in Corollary~\ref{corollary:influence-function-calculation}. 
We start by presenting a simple plug-in estimator, 
which is a sample analogue of $\psi_d$ for a given policy $d \in \mathcal{D}$. 
In Section~\ref{sec:dr_estimator}, we present the doubly robust-style 
estimator %
for evaluating the ACME of a fixed given policy, as well as %
the optimal policy, and which improves on the simple plug-in. 
Specifically, under mild conditions, we show that in both cases the doubly robust-style estimator achieves 
the local minimax lower bound shown in Corollary~\ref{corollary:efficiency}. 
Finally, we present an algorithm for constructing the doubly robust-style estimator (Section~\ref{sec:algorithm}).

Let $Z^{n} = (Z_1, \ldots, Z_n)$ 
and $Z^{n,0} = (Z^0_1, \ldots, Z^0_n)$ denote two independent 
samples. 
For a given policy $d \in \mathcal{D}$, 
the most natural estimator for estimating $\psi_d$ is the plug-in estimator, i.e., 
\begin{align}
    \widehat{\psi}_{d, \text{pi}}
    =\mathbb{P}_n \left\{ d(X)\widehat{m}_1(X) 
    + \left(1 - d(X) \right) \widehat{m}_0(X) \right\},
\label{eq:pi_estimator}
\end{align}
where $\widehat{m}_1$ and $\widehat{m}_0$ are learned using a separate sample $Z^{n,0}$ from the sample $Z^n$ used for taking the sample average $\mathbb{P}_n$. 
While the plug-in estimator is intuitive, 
in general it inherits the slow convergence rate 
from the nuisance functions~\citep{bickel1993efficient,van2002semiparametric}. 
As suggested by the von Mises-type decomposition (Lemma~\ref{lemma:decomposition}), 
a natural estimator that improves upon the plug-in estimator is given in~\eqref{eq:dr_estimator}.

\begin{remark}
For a given set of iid data, 
we can obtain $Z^n$ and $Z^{n,0}$ 
by randomly splitting the data in half. 
In general, to obtain full sample size efficiency, 
one can split the data in folds 
and perform cross-fitting~\citep{bickel1988estimating, robins2008higher, zheng2010asymptotic, chernozhukov2018double}, 
i.e., repeat the learning procedure 
for each split and average the results. 
Throughout this section, 
to ease notation, 
we present the analyses and results for the learning procedure with a single sample split, 
which can be easily extended to averages of multiple independent  splits. If one wants to avoid sample splitting, our results would still hold under appropriate empirical process conditions (e.g., the nuisance functions taking values in Donsker classes).
\end{remark}

\subsection{Doubly Robust-Style Estimator of the ACME}
\label{sec:dr_estimator}

To improve on the potential deficiencies of the simple plug-in above, we propose the doubly robust-style estimator $\widehat{\psi}_{d,\text{dr}}$:  
\begin{align}
\widehat{\psi}_{d,\text{dr}} 
&= %
\mathbb{P}_n \left \{
\frac{\mathbb{1}\{A=d(X)\}}{A\widehat\pi_1(X) + (1-A)\widehat\pi_0(X)}\frac{1/2 - \mathbb{1}\{Y \leq \widehat{m}_A(X) \}}{ \widehat{f}_{A,\widehat{m}}\left(X \right)} 
+ \widehat{m}_d(X)
\right\},
\label{eq:dr_estimator}
\end{align}
where the nuisance functions $\widehat{\pi}_1$, $\widehat{m}_a$ and $\widehat{f}_{a,\widehat{m}}$ are learned using the separate sample $Z^{n,0}$, and $\widehat{\pi}_0(X) = 1 - \widehat{\pi}_1(X)$. 
Importantly, this estimator uses the efficient influence function (Corollary~\ref{corollary:influence-function-calculation}) to correct the bias of the plug-in. 

\begin{remark}
As will be shown in Theorem~\ref{thm:fixed_policy_estimator}, our estimator is not strictly doubly robust in the usual sense (i.e., its consistency requires consistent estimation of the conditional median). Nonetheless, we still call it doubly robust-style because it {does} have what is arguably the most crucial property of doubly robust estimators: its error involves \emph{second-order} products, and so is ``doubly small''.
\end{remark}

In Section~\ref{sec:acme_estimation_fixed_policy}, 
we study the asymptotic convergence rate of $\widehat{\psi}_{d,\text{dr}}$ 
for a given fixed policy $d$, showing it can converge at parametric rates even when nuisance functions are estimated nonparametrically, and follows a straightforward asymptotic normal distribution.
Then in Section~\ref{sec:acme_estimation_learned_policy}, 
we show that under mild conditions, our estimator $\widehat{\psi}_{d, \text{dr}}$ still exhibits $\sqrt{n}$-convergence and asymptotic normality for the optimal ACME value, even after plugging in the learned policy 
$\mathbb{1}\{\widehat{m}_1(X) > \widehat{m}_0(X)\}$, under a margin condition.

\subsubsection{ACME of a Fixed Policy}
\label{sec:acme_estimation_fixed_policy}

We begin with the case when a fixed policy $d$ is given 
and we aim to estimate the ACME of it. This can be useful if we have a specified class of policies, which we aim to optimize over; this  includes the case where we use an independent sample or split to learn a policy, and then condition on that sample when estimating the value using another. 
In such cases, under mild assumptions, 
we show in Theorem~\ref{thm:fixed_policy_estimator}
that the error of $\widehat{\psi}_{d, \text{dr}}$
is the sum of a centered sample average term which is asymptotically normal  
and a term containing products of nuisance estimation errors. 

\begin{theorem}
Assume
\begin{enumerate}
    \item $\mathbb{P}\left(\epsilon \leq \widehat{\pi}_1(X) \leq 1-\epsilon \right) = 1$  
for some $\epsilon \in (0,1)$,
\item $\mathbb{P}( M_0 \leq \widehat{f}_{a,\widehat{m}}(X) \leq M_1 ) = 1$
for $a \in \{0,1\}$ and $0 < M_0 \leq M_1$, and
\item $\|\widehat{\xi}_d - \xi_d\| = o_{\mathbb{P}}(1)$.
\end{enumerate}
Then, 
\begin{align*}
    \widehat{\psi}_{d, \text{dr}} -  \psi_d = 
    {\big(\mathbb{P}_n - \mathbb{P} \big) \xi_d(\mathbb{P})} 
    + O_{\mathbb{P}}\left(  
    \sum_{a=0}^1 \left\|\widehat{m}_a - m_a\right\| \; 
    \left( \|\widehat{\pi}_a \widehat{f}_{a,\widehat{m}}  - \pi_a {f}_{a, m} \| 
    + \|\widehat{m}_a - m_a\| \right) + o_\mathbb{P}(1/\sqrt{n}) \right).
\end{align*}
\label{thm:fixed_policy_estimator}
\end{theorem}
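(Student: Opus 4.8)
The plan is to recognize the estimator in~\eqref{eq:dr_estimator} as the empirical average $\widehat\psi_{d,\text{dr}} = \mathbb{P}_n\widehat\xi_d$, where $\widehat\xi_d = \xi_d(Z;\widehat{\mathbb{P}})$ is the influence-function map~\eqref{eq:acme_eif} evaluated at the estimated nuisances $(\widehat\pi,\widehat m_a,\widehat f_{a,\widehat m})$ computed from the held-out sample $Z^{n,0}$ (a one-line check confirms the indicator-form of~\eqref{eq:dr_estimator} agrees with~\eqref{eq:acme_eif} case by case). I would then write the error as a centered empirical process term plus a bias term,
\[\widehat\psi_{d,\text{dr}} - \psi_d = (\mathbb{P}_n-\mathbb{P})\widehat\xi_d + (\mathbb{P}\widehat\xi_d - \psi_d),\]
where, following the additional-notation convention, $\mathbb{P}\widehat\xi_d$ treats $\widehat\xi_d$ as fixed (conditioning on $Z^{n,0}$). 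The key identity is that the bias term equals the remainder of the von Mises expansion, $\mathbb{P}\widehat\xi_d - \psi_d = R_d(\widehat{\mathbb{P}},\mathbb{P})$. This is the content of Lemma~\ref{lemma:decomposition} with $\overline{\mathbb{P}} = \widehat{\mathbb{P}}$ together with the fact that the influence-function correction integrates to zero under its own distribution; equivalently, I would verify it directly by taking $\mathbb{E}_\mathbb{P}[\widehat\xi_d \mid X]$, using $\mathbb{E}_\mathbb{P}[A\mid X]=\pi$ and $\mathbb{E}_\mathbb{P}[\mathbb{1}\{Y\le\widehat m_a\}\mid X,A=a]=F_a(\widehat m_a\mid X)$, and recognizing the result as the integrand of~\eqref{eq:R_d_raw}.

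For the empirical process term I would decompose $(\mathbb{P}_n-\mathbb{P})\widehat\xi_d = (\mathbb{P}_n-\mathbb{P})\xi_d + (\mathbb{P}_n-\mathbb{P})(\widehat\xi_d - \xi_d)$. The first piece is the desired leading term $(\mathbb{P}_n-\mathbb{P})\xi_d(\mathbb{P})$. For the second I would exploit sample splitting: conditional on $Z^{n,0}$ the function $\widehat\xi_d - \xi_d$ is fixed and $Z^n$ is an independent sample, so the conditional mean is zero and the conditional variance is $n^{-1}\,\mathrm{Var}_\mathbb{P}(\widehat\xi_d-\xi_d)\le n^{-1}\|\widehat\xi_d-\xi_d\|^2$. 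Since $\|\widehat\xi_d-\xi_d\| = o_\mathbb{P}(1)$ by the theorem's third condition, a conditional Chebyshev argument gives $(\mathbb{P}_n-\mathbb{P})(\widehat\xi_d-\xi_d)=o_\mathbb{P}(1/\sqrt n)$.

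For the remainder I would invoke Corollary~\ref{corollary:R_d_product}, which rewrites $R_d(\widehat{\mathbb{P}},\mathbb{P})$ in the second-order product form~\eqref{eq:R_d_double}. The crucial simplification is that, in each summand, the two leading bracket terms telescope, $(\widehat\pi-\pi)\widehat f_{a,\widehat m} + (\widehat f_{a,\widehat m}-f_{a,m})\pi = \widehat\pi\widehat f_{a,\widehat m}-\pi f_{a,m}$, producing exactly the combined nuisance error $\|\widehat\pi\widehat f_{a,\widehat m}-\pi f_{a,m}\|$ appearing in the statement, while the remaining bracket term supplies a second factor $(\widehat m_a-m_a)$. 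Each summand is then a product of $(\widehat m_a-m_a)$ with either $(\widehat\pi\widehat f_{a,\widehat m}-\pi f_{a,m})$ or $(\widehat m_a-m_a)$, multiplied by factors that are uniformly bounded: the denominators $\widehat\pi\,\widehat f_{a,\widehat m}$ are bounded away from zero by the first two conditions of the theorem, and $\pi$ and $f'_{a,c}$ are bounded (by $1-\epsilon$ and $L$) under Assumption~\ref{assump:statistical_models}. Applying Cauchy--Schwarz to each product then yields $|R_d(\widehat{\mathbb{P}},\mathbb{P})|\lesssim \sum_{a}\|\widehat m_a-m_a\|\big(\|\widehat\pi\widehat f_{a,\widehat m}-\pi f_{a,m}\|+\|\widehat m_a-m_a\|\big)$, the stated $O_\mathbb{P}$ term. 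Combining the three pieces gives the result.

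The main analytic lifting---the von Mises identity and the second-order product form of the remainder---has already been done in Lemma~\ref{lemma:decomposition} and Corollary~\ref{corollary:R_d_product}, so this proof is largely assembly. The step requiring the most care is the bookkeeping in the remainder: recognizing the telescoping that merges the separate $\widehat\pi$ and $\widehat f_{a,\widehat m}$ errors into the single product $\|\widehat\pi\widehat f_{a,\widehat m}-\pi f_{a,m}\|$, and correctly tracking which factors are controlled by the boundedness assumptions versus which supply the two ``small'' differences in each Cauchy--Schwarz bound. A secondary subtlety is justifying $\mathbb{P}\widehat\xi_d-\psi_d = R_d(\widehat{\mathbb{P}},\mathbb{P})$ cleanly, since $\widehat{\mathbb{P}}$ is specified only through its nuisance functions rather than as a full distribution; the direct conditional-expectation computation sidesteps any concern about whether $\widehat{\mathbb{P}}$ lies in the model.
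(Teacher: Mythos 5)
Your proposal is correct and follows essentially the same route as the paper's proof: the same three-way decomposition into the leading term $(\mathbb{P}_n-\mathbb{P})\xi_d(\mathbb{P})$, the term $(\mathbb{P}_n-\mathbb{P})(\widehat{\xi}_d-\xi_d)$ (your conditional Chebyshev argument under sample splitting is exactly the paper's Lemma~\ref{lemma:sample_splitting}), and the conditional bias $\mathbb{P}(\widehat{\xi}_d-\xi_d)$ identified with $R_d(\widehat{\mathbb{P}},\mathbb{P})$ by iterated expectation and then bounded via Corollary~\ref{corollary:R_d_product} and Cauchy--Schwarz. Even your telescoping observation $(\widehat{\pi}-\pi)\widehat{f}_{a,\widehat{m}}+(\widehat{f}_{a,\widehat{m}}-f_{a,m})\pi=\widehat{\pi}\widehat{f}_{a,\widehat{m}}-\pi f_{a,m}$ mirrors how the paper arrives at the combined nuisance error (and shares the paper's mild imprecision for the $a=0$ bracket, which strictly telescopes to $(1-\widehat{\pi})\widehat{f}_{0,\widehat{m}}-(1-\pi)f_{0,m}$), so there are no substantive differences.
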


Importantly, Theorem~\ref{thm:fixed_policy_estimator} implies that
$\widehat{\psi}_{d, \text{dr}}$ attains a faster convergence rate
than its nuisance estimators and can be asymptotically normal even when these nuisance functions only satisfy nonparametric sparsity, 
smoothness or other assumptions. We detail this further in the next corollary.

\begin{corollary} 
Given $d \in \mathcal{D}$, under assumptions in Theorem~\ref{thm:fixed_policy_estimator}, 
and the following two conditions:
\begin{enumerate}
    \item $\|\widehat{m}_a - m_a\| = o_{\mathbb{P}}(n^{-1/4})$, and \item For $a \in \{0,1\}$, 
        $
        \|\widehat{\pi}_a \widehat{f}_{a,\widehat{m}} 
        - \pi_a {f}_{a, m} \|=  O_{\mathbb{P}}(n^{-1/4}) $, 
\end{enumerate}
we have that $\widehat{\psi}_{d, \text{dr}}$ is root-n consistent and asymptotically normal with: 
\begin{align*}
    \sqrt{n}(\widehat{\psi}_{d, \text{dr}} - \psi_d)
    \leadsto \mathcal{N} \left(0, \sigma_d^2\right). 
\end{align*}
\label{corollary:fixed-policy-corollary}
\end{corollary}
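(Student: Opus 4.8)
The plan is to treat this corollary as an essentially immediate consequence of Theorem~\ref{thm:fixed_policy_estimator}, whose substantive content (the von Mises expansion and the identification of the leading linear term) has already been established. First I would invoke Theorem~\ref{thm:fixed_policy_estimator}, noting that its three hypotheses are among the assumptions carried over here, to obtain the decomposition
\[
\widehat{\psi}_{d, \text{dr}} -  \psi_d = \big(\mathbb{P}_n - \mathbb{P}\big)\xi_d(\mathbb{P}) + O_{\mathbb{P}}(R_n) + o_{\mathbb{P}}(1/\sqrt{n}),
\]
where $R_n := \sum_{a=0}^1 \|\widehat{m}_a - m_a\|\big(\|\widehat{\pi}\widehat{f}_{a,\widehat{m}} - \pi f_{a,m}\| + \|\widehat{m}_a - m_a\|\big)$ collects the products of nuisance estimation errors.

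Next I would show this remainder is negligible at the $\sqrt{n}$ scale. By the first rate condition, $\|\widehat{m}_a - m_a\| = o_{\mathbb{P}}(n^{-1/4})$, and by the second, $\|\widehat{\pi}\widehat{f}_{a,\widehat{m}} - \pi f_{a,m}\| = O_{\mathbb{P}}(n^{-1/4})$. Hence each cross term is $o_{\mathbb{P}}(n^{-1/4}) \cdot O_{\mathbb{P}}(n^{-1/4}) = o_{\mathbb{P}}(n^{-1/2})$, and each squared term is likewise $o_{\mathbb{P}}(n^{-1/2})$, so that $R_n = o_{\mathbb{P}}(n^{-1/2})$. Multiplying the whole decomposition by $\sqrt{n}$ then yields $\sqrt{n}(\widehat{\psi}_{d, \text{dr}} - \psi_d) = \sqrt{n}\,\big(\mathbb{P}_n - \mathbb{P}\big)\xi_d(\mathbb{P}) + o_{\mathbb{P}}(1)$.

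I would then apply the classical i.i.d.\ central limit theorem to the leading term. Since $\xi_d(\mathbb{P})$ is a fixed (non-random) function — the nuisances inside the remainder are what get estimated, while $\xi_d$ here is evaluated at the true $\mathbb{P}$, so no empirical-process or Donsker argument is needed — it suffices that $\xi_d \in L_2(\mathbb{P})$. Square-integrability follows from the positivity bound $\pi(X) \geq \epsilon$ and the density bound $f_{a,m}(X) \geq M_0$ in the model $\mathcal{P}$, which control the inverse-weight factors, together with finiteness of $\text{Var}(m_d(X))$. Thus $\sqrt{n}\,\big(\mathbb{P}_n - \mathbb{P}\big)\xi_d \leadsto \mathcal{N}(0, \text{Var}(\xi_d(Z)))$. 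Because $\phi_d = \xi_d - \psi_d$ differs from $\xi_d$ only by the constant $\psi_d$, we have $\text{Var}(\xi_d(Z)) = \text{Var}(\phi_d(Z)) = \sigma_d^2$, the exact variance computed in Theorem~\ref{thm:variance}. A final application of Slutsky's theorem to absorb the $o_{\mathbb{P}}(1)$ term delivers $\sqrt{n}(\widehat{\psi}_{d, \text{dr}} - \psi_d) \leadsto \mathcal{N}(0, \sigma_d^2)$.

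The main obstacle is not any single difficult step but the conceptual crux that makes the result work: verifying that the \emph{product} structure of the remainder in Theorem~\ref{thm:fixed_policy_estimator} genuinely upgrades the individually slow $n^{-1/4}$ nuisance rates into a joint $o_{\mathbb{P}}(n^{-1/2})$ bias. This is precisely the payoff of the doubly robust-style construction, and it is what allows $\sqrt{n}$-consistency despite nonparametric estimation of $m_a$, $\pi$, and $f_{a,m}$. The only secondary point of care is confirming $\xi_d \in L_2(\mathbb{P})$ so the CLT applies, which rests on the boundedness built into Assumption~\ref{assump:statistical_models} and is exactly the condition that renders $\sigma_d^2$ finite; everything else is routine bookkeeping via Slutsky.
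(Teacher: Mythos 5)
Your proposal is correct and takes essentially the same route as the paper's proof: both invoke the decomposition from Theorem~\ref{thm:fixed_policy_estimator}, use the product rate conditions to conclude the remainder is $o_{\mathbb{P}}(n^{-1/2})$, apply the classical CLT to $\sqrt{n}(\mathbb{P}_n - \mathbb{P})\xi_d$ with $\text{Var}(\xi_d) = \text{Var}(\phi_d) = \sigma_d^2$ from Theorem~\ref{thm:variance}, and finish with Slutsky's theorem. Your explicit check that $\xi_d \in L_2(\mathbb{P})$ via the boundedness in Assumption~\ref{assump:statistical_models} is precisely how the paper justifies $\sigma_d^2 < \infty$, so nothing is missing.
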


\begin{remark}
Note that one sufficient condition for  $ \|\widehat{\pi}_a \widehat{f}_{a,\widehat{m}} 
        - \pi_a {f}_{a, m} \|=  O_{\mathbb{P}}(n^{-1/4}) $
is that $\|\widehat{\pi}_a  - \pi_a\| = O_{\mathbb{P}}(n^{-1/4}) $
and 
$\| \widehat{f}_{a,\widehat{m}}  -  {f}_{a,{m}} \| = O_{\mathbb{P}}(n^{-1/4})$ separately.
\end{remark}

 Corollary~\ref{corollary:fixed-policy-corollary} shows that
$\widehat{\psi}_{d, \text{dr}}$ can achieve the nonparametric 
efficiency bound given in Corollary~\ref{corollary:efficiency}. 
For example, when $\pi_a$, $m_a$ and $f_{a,m}$ 
are $d$-dimensional functions in a H\"older smooth class
with smoothness parameter being 
$\alpha$, $\beta$ and $\gamma$ respectively
(i.e., their partial derivatives up to order $\alpha, \beta, \gamma$ exist and are Lipschitz)
and are estimated with squared error 
$n^{-2\alpha/(2\alpha + d)}$, $n^{-2\beta/(2\beta + d)}$
and $n^{-2\gamma/(2\gamma + d)}$, 
then the conditions in Corollary~\ref{corollary:fixed-policy-corollary} will be satisfied when $\alpha, \beta, \gamma \geq d/2$, i.e., when the smoothness is greater than half the dimension. %

\subsubsection{ACME of the Optimal Policy}
\label{sec:acme_estimation_learned_policy}
\edit{
Instead of evaluating a fixed policy $d$, 
one  may want to evaluate the ACME of the truly
optimal (but unknown) policy $d^*_{\text{ACME}} = \mathbb{1}\{\gamma(X) > 0\}$ where $\gamma(X) := m_1(X) - m_0(X)$ is the conditional median treatment effect (CMTE).  
This would give a benchmark for the best possible value that could be achieved by any policy; any improvements would have to come by way of changing the population or changing the treatment.
A natural estimator for the optimal policy would simply plug in an estimate of $\gamma$: 
$\dacmehat(X) = \mathbb{1}\{\widehat{\gamma}(X) > 0\}$. 
Here, the policy $\dacmehat$ is learned 
through the same sample $Z^{n,0}$ as the one used for estimating 
the nuisance functions~\eqref{eq:dr_estimator}. 
For estimating the CMTE $\gamma$, 
in addition to the plug-in $\widehat{m}_1 - \widehat{m}_0$, 
a doubly robust-style estimator
will be discussed in Section~\ref{sec:learning-gamma}. 
An immediate question to ask is whether the %
estimator
$\widehat{\psi}_{\dacmehat, \text{dr}}$ for the learned 
policy $\dacmehat$
can still exhibit similar asymptotic convergence guarantees 
as $\widehat{\psi}_{d, \dr}$ for a fixed policy $d$.
Our goal in this subsection is to answer this question.
To simplify notation, 
we %
use ${\psi}_{{d}^*} := {\psi}_{\dacme}$ and 
$\widehat{\psi}_{\widehat{d}^*, \text{dr}} := \widehat{\psi}_{\dacmehat, \text{dr}}$. 
}

Importantly, the ACME of the optimal policy is a non-smooth functional, because of its dependence on the indicator function $\mathbb{1}\{\gamma(X) > 0\}$. 
This challenge also arises in the mean value setting \citep{ chakraborty2010inference,laber2011adaptive, hirano2012impossibility,van2014targeted, laber2014dynamic,luedtke2016statistical}. One solution is to incorporate a margin condition \citep{tsybakov2004optimal, luedtke2016statistical}, as follows.

\begin{assumption}[Margin Condition]
For some $\alpha > 0$ and all $t >0$, 
we have 
\begin{align*}
    \mathbb{P}\left(|\gamma(X)| \leq t \right) %
    \leq (ct)^{\alpha},
\end{align*}
for some constant $c >0$ such that $ct \leq 1$. 
\label{assump:margin}
\end{assumption}

The exponent $\alpha$ in the margin condition
characterizes the mass such that  
$m_1(X)$ and $m_0(X)$ are close. 
The lower $\alpha$ is, the weaker the margin condition is, i.e., 
the more mass the distribution of $\gamma(X)$ is allowed to have near zero.  
Similar margin conditions are used in classification literature~\citep{tsybakov2004optimal} 
and optimal treatment regimes~\citep{luedtke2016statistical}. Note that $\alpha=0$ encodes no assumption, allowing $\gamma(X)=0$ almost surely, while $\alpha=1$ would hold as long as $\gamma(X)$ is continuously distributed with bounded density. %
The margin condition therefore provides a characterization on how hard 
the optimal decision problem is---intuitively, when $\gamma(X)$ is near zero, it is very hard to distinguish which subjects will benefit from treatment, while when $\gamma(X)$ is very different from zero, this is easy to distinguish. 
Under the margin condition,
we show in Theorem~\ref{thm:learned_policy_estimator}
that the error of $\widehat{\psi}_{\widehat{d}^*, \text{dr}}$
is similar to that of the error of $\widehat{\psi}_{d^*, \text{dr}}$
for the truly optimal policy $d^*$.

\begin{theorem}\label{thm:learned_policy_estimator}
Assume
\begin{enumerate}
    \item $\mathbb{P}\left(\epsilon \leq \widehat{\pi}_1(X) \leq 1-\epsilon \right) = 1$  
for some $\epsilon \in (0,1)$,
\item $\mathbb{P}( M_0 \leq \widehat{f}_{a,\widehat{m}}(X) \leq M_1 ) = 1$
for $a \in \{0,1\}$ and $0 < M_0 \leq M_1$, and
\item $\|\widehat{\xi}_{\widehat{d}^*} - \xi_{d^*}\| = o_{\mathbb{P}}(1)$.
\end{enumerate}
Then, under Assumption~\ref{assump:margin}, 
we have that 
\begin{align*}
    \widehat{\psi}_{\widehat{d}^*, \text{dr}} -  \psi_{d^*} &= 
    \left(\mathbb{P}_n 
        - \mathbb{P}\right) {\xi}_{{d}^*} \\
    &+ O_{\mathbb{P}}\bigg(
    \sum_{a=0}^1 \left\|\widehat{m}_a - m_a\right\| \; 
    \left( 
    \|\widehat{\pi}_a \widehat{f}_{a,\widehat{m}} -\pi_a {f}_{a, m} \| 
    + \|\widehat{m}_a - m_a\| 
    \right) 
    + \|\widehat{\gamma} - \gamma\|^{1+\alpha}_\infty +
    o_\mathbb{P}(1/\sqrt{n}) \bigg).
\end{align*}
\end{theorem}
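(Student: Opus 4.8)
The plan is to reduce to the fixed-policy result (Theorem~\ref{thm:fixed_policy_estimator}) applied at the true optimal policy $d^*$, and to separately control the extra error incurred by plugging in the estimated policy $\widehat{d}^*$ in place of $d^*$. Writing $\widehat{\xi}_d$ for the summand of $\widehat{\psi}_{d,\text{dr}}$ in~\eqref{eq:dr_estimator} (so $\widehat{\psi}_{d,\text{dr}} = \mathbb{P}_n\widehat{\xi}_d$), the key structural observation is that $\widehat{\xi}_d$ is \emph{linear} in the policy: $\widehat{\xi}_d = d(X)\widehat{g}_1(Z) + (1-d(X))\widehat{g}_0(Z)$, where $\widehat{g}_a$ collects the per-arm inverse-propensity and outcome-regression pieces. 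Hence $\widehat{\xi}_{\widehat{d}^*} - \widehat{\xi}_{d^*} = (\widehat{d}^*(X) - d^*(X))(\widehat{g}_1 - \widehat{g}_0)$, and I would decompose
\begin{align*}
\widehat{\psi}_{\widehat{d}^*,\text{dr}} - \psi_{d^*} = \underbrace{\mathbb{P}_n\big\{(\widehat{d}^* - d^*)(\widehat{g}_1 - \widehat{g}_0)\big\}}_{\mathrm{(I)}} + \underbrace{\big(\mathbb{P}_n\widehat{\xi}_{d^*} - \psi_{d^*}\big)}_{\mathrm{(II)}}.
\end{align*}

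For term $\mathrm{(II)}$, since $d^*$ is a fixed (non-random) deterministic function in $\mathcal{D}$, I would invoke Theorem~\ref{thm:fixed_policy_estimator}, which yields the leading $(\mathbb{P}_n - \mathbb{P})\xi_{d^*}$ term together with exactly the stated product remainder $\sum_a \|\widehat{m}_a - m_a\|(\|\widehat{\pi}\widehat{f}_{a,\widehat{m}} - f_{a,m}\pi\| + \|\widehat{m}_a - m_a\|)$. Its third hypothesis $\|\widehat{\xi}_{d^*} - \xi_{d^*}\| = o_{\mathbb{P}}(1)$ is not assumed directly but follows from the given condition $\|\widehat{\xi}_{\widehat{d}^*} - \xi_{d^*}\| = o_{\mathbb{P}}(1)$ via the triangle inequality, since $\|\widehat{\xi}_{d^*} - \widehat{\xi}_{\widehat{d}^*}\| = \|(\widehat{d}^* - d^*)(\widehat{g}_1 - \widehat{g}_0)\| = o_{\mathbb{P}}(1)$ by the vanishing-probability argument developed next.

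The crux is term $\mathrm{(I)}$, where the non-smoothness of $\widehat{d}^* = \mathbb{1}\{\widehat{\gamma} > 0\}$ and Assumption~\ref{assump:margin} enter. Conditioning on the auxiliary sample $Z^{n,0}$ (so $\widehat{d}^*, \widehat{g}_a$ are fixed), I would split $\mathrm{(I)}$ into an empirical-process piece $(\mathbb{P}_n - \mathbb{P})\{(\widehat{d}^* - d^*)(\widehat{g}_1 - \widehat{g}_0)\}$ and a drift piece $\mathbb{P}\{(\widehat{d}^* - d^*)(\widehat{g}_1 - \widehat{g}_0)\}$. The central elementary fact is that on the disagreement event $\{\widehat{d}^* \neq d^*\}$ the signs of $\widehat{\gamma}$ and $\gamma$ differ, forcing the pointwise bound $|\gamma(X)| \leq |\widehat{\gamma}(X) - \gamma(X)| \leq \|\widehat{\gamma} - \gamma\|_\infty$, which with the margin condition gives $\mathbb{P}(\widehat{d}^* \neq d^*) \leq (c\|\widehat{\gamma} - \gamma\|_\infty)^\alpha$. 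The empirical-process piece then has conditional second moment at most $\|\widehat{g}_1 - \widehat{g}_0\|_\infty^2\,\mathbb{P}(\widehat{d}^* \neq d^*)$, so it is $O_{\mathbb{P}}(n^{-1/2}\|\widehat{\gamma}-\gamma\|_\infty^{\alpha/2}) = o_{\mathbb{P}}(n^{-1/2})$ by consistency of $\widehat{\gamma}$. For the drift piece I would compute $\mathbb{E}[\widehat{g}_a \mid X, Z^{n,0}] = m_a(X) + r_a(X)$, where a mean-value expansion of $F_a(\widehat{m}_a) - F_a(m_a)$ using the Lipschitz density (Assumption~\ref{assump:statistical_models}, as in Corollary~\ref{corollary:R_d_product}) shows $r_a$ is \emph{second-order}, with $\|r_a\| \lesssim \|\widehat{m}_a - m_a\|(\|\widehat{\pi}\widehat{f}_{a,\widehat{m}} - \pi f_{a,m}\| + \|\widehat{m}_a - m_a\|)$. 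Writing the drift as $\mathbb{E}[(\widehat{d}^* - d^*)\gamma] + \mathbb{E}[(\widehat{d}^* - d^*)(r_1 - r_0)]$, the first term satisfies $|\mathbb{E}[(\widehat{d}^* - d^*)\gamma]| \leq \|\widehat{\gamma}-\gamma\|_\infty\,\mathbb{P}(\widehat{d}^* \neq d^*) \lesssim \|\widehat{\gamma}-\gamma\|_\infty^{1+\alpha}$ — precisely the extra term in the statement — while the second is $\leq \sqrt{\mathbb{P}(\widehat{d}^* \neq d^*)}\,\|r_1 - r_0\| \lesssim \|\widehat{\gamma}-\gamma\|_\infty^{\alpha/2}$ times the product remainder, hence dominated by the products already present. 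Collecting $\mathrm{(I)}$ and $\mathrm{(II)}$ yields the claimed decomposition.

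I expect the main obstacle to be obtaining the sharp exponent $1+\alpha$ rather than a cruder power: it requires using the disagreement set in two complementary ways at once — the deterministic pointwise bound $|\gamma| \leq \|\widehat{\gamma}-\gamma\|_\infty$ supplies one factor, and the margin-based probability bound supplies the remaining $\alpha$-power — and then verifying that the empirical-process and second-order cross terms are genuinely lower order so they can be absorbed into $o_{\mathbb{P}}(1/\sqrt{n})$ and into the existing products. A secondary technical point is justifying the moment control on $\widehat{g}_1 - \widehat{g}_0$ needed for the conditional-variance argument, which follows from hypotheses (1)--(2) and boundedness of the median estimates (or, more weakly, can be routed through a Cauchy--Schwarz bound requiring only a finite fourth moment).
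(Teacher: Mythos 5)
Your decomposition is a genuine rearrangement of the paper's argument, and it is essentially sound. The paper instead writes $\widehat{\psi}_{\widehat{d}^*,\text{dr}} - \psi_{d^*} = (\mathbb{P}_n\widehat{\xi}_{\widehat{d}^*} - \mathbb{P}\xi_{\widehat{d}^*}) + \mathbb{P}(\xi_{\widehat{d}^*} - \xi_{d^*})$, and exploits a clean identity you do not use: because $\xi_{\widehat{d}^*}$ involves the \emph{true} nuisances, the inverse-propensity correction terms have conditional mean zero ($\mathbb{E}[\mathbb{1}\{Y\le m_a(X)\}\mid X,A=a]=1/2$), so $\mathbb{P}(\xi_{\widehat{d}^*}-\xi_{d^*}) = \mathbb{P}\,\gamma(\widehat{d}^*-d^*)$ \emph{exactly}, and the margin argument (identical to yours: sign-disagreement via Lemma~\ref{lemma:policy_diff} plus Assumption~\ref{assump:margin}) yields $\|\widehat{\gamma}-\gamma\|_\infty^{1+\alpha}$ with no cross term. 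The remaining piece is handled by splitting off $(\mathbb{P}_n-\mathbb{P})(\widehat{\xi}_{\widehat{d}^*}-\xi_{d^*})$, which is $o_{\mathbb{P}}(1/\sqrt{n})$ \emph{directly} from hypothesis (3) and Lemma~\ref{lemma:sample_splitting}, and the bias term $\mathbb{P}(\widehat{\xi}_{\widehat{d}^*}-\xi_{\widehat{d}^*}) = R_{\widehat{d}^*}(\widehat{\mathbb{P}},\mathbb{P})$, bounded by the products via Corollary~\ref{corollary:R_d_product} since $|\widehat{d}^*|\le 1$. Your route --- reducing to Theorem~\ref{thm:fixed_policy_estimator} at the fixed $d^*$ and isolating $\mathbb{P}_n\{(\widehat{d}^*-d^*)(\widehat{g}_1-\widehat{g}_0)\}$ --- is legitimate (your linearity observation $\widehat{\xi}_d = d\,\widehat{g}_1+(1-d)\,\widehat{g}_0$ is correct, and your $r_a$ is exactly the integrand of $R_d$), but it forces you to control the drift with \emph{estimated} nuisances, which is where two technical holes appear.

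First, at two points you invoke sup-norm consistency of $\widehat{\gamma}$ (``by consistency of $\widehat{\gamma}$''): to make the empirical-process piece of (I) $o_{\mathbb{P}}(1/\sqrt{n})$, and, through the same vanishing-probability argument, to verify $\|\widehat{\xi}_{d^*}-\xi_{d^*}\|=o_{\mathbb{P}}(1)$ so that Theorem~\ref{thm:fixed_policy_estimator} applies at $d^*$. But $\|\widehat{\gamma}-\gamma\|_\infty = o_{\mathbb{P}}(1)$ is not among the hypotheses, and hypothesis (3) ($L_2$-consistency of $\widehat{\xi}_{\widehat{d}^*}$) does not imply it. This is fixable without adding assumptions: since $\alpha>0$, Young's inequality gives
\begin{align*}
n^{-1/2}\,\|\widehat{\gamma}-\gamma\|_\infty^{\alpha/2} \;\lesssim\; \|\widehat{\gamma}-\gamma\|_\infty^{1+\alpha} + n^{-(1+\alpha)/(2+\alpha)} \;=\; \|\widehat{\gamma}-\gamma\|_\infty^{1+\alpha} + o(n^{-1/2}),
\end{align*}
so your extra terms absorb into the stated bound; alternatively, following the paper and keeping $\widehat{\xi}_{\widehat{d}^*}$ whole lets hypothesis (3) do this work directly. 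Second, your claim $\|r_a\| \lesssim \|\widehat{m}_a-m_a\|\,(\|\widehat{\pi}\widehat{f}_{a,\widehat{m}}-\pi f_{a,m}\| + \|\widehat{m}_a-m_a\|)$ treats the $L_2$ norm of a pointwise product as a product of $L_2$ norms, which is not valid in general (one gets $L_4$ norms); boundedness of the factors does not rescue the rate. The Cauchy--Schwarz step $|\mathbb{E}[(\widehat{d}^*-d^*)(r_1-r_0)]| \le \sqrt{\mathbb{P}(\widehat{d}^*\neq d^*)}\,\|r_1-r_0\|$ is therefore the wrong place to spend Cauchy--Schwarz. The fix is to discard the indicator and bound $|\mathbb{E}[(\widehat{d}^*-d^*)(r_1-r_0)]| \le \mathbb{E}|r_1| + \mathbb{E}|r_0|$, then apply Cauchy--Schwarz to the pointwise bound $|r_a| \lesssim |\widehat{m}_a-m_a|\,(|\widehat{\pi}\widehat{f}_{a,\widehat{m}}-\pi f_{a,m}| + |\widehat{m}_a-m_a|)$ at the $L_1$ level, recovering exactly the product terms --- which is in effect how the paper bounds $R_{\widehat{d}^*}$. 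With these two repairs your proof goes through and matches the paper's conclusion.
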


Unlike the error of $\widehat{\psi}_{{d}, \text{dr}}$
for a fixed policy $d$ presented in Theorem~\ref{thm:fixed_policy_estimator},
the error of $\widehat{\psi}_{\widehat{d}^*, \text{dr}}$ depends on $\|\widehat{\gamma}-\gamma\|_\infty^{1+\alpha}$. 
When $\alpha$ gets higher, 
the error of $\widehat{\gamma}$ plays less of a role, 
as captured in Corollary~\ref{corollary:learned-policy-corollary}.
Under the conditions in Corollary~\ref{corollary:learned-policy-corollary}
we obtain asymptotic normality of the estimator
and thus can construct asymptotic confidence intervals
as illustrated in our experiments in Section~\ref{sec:experiment_app}.

\begin{corollary} 
Under the assumptions of Theorem~\ref{thm:learned_policy_estimator}, 
and the following three conditions:
\begin{enumerate}
    \item $\|\widehat{m}_a - m_a\| = o_{\mathbb{P}}(n^{-1/4})$,  
    \item $\|\widehat{\gamma} - \gamma\|_\infty = o_{\mathbb{P}}\left(n^{-1/(2(1+\alpha))}\right)$, 
    \item For $a \in \{0,1\}$, 
        $
        \|\widehat{\pi}_a \widehat{f}_{a,\widehat{m}} 
        - \pi_a {f}_{a, m} \|=  O_{\mathbb{P}}(n^{-1/4}) $, 
\end{enumerate}
we have that $\widehat{\psi}_{\widehat{d}^*, \text{dr}}$ is root-n consistent and asymptotically normal with: 
\begin{align*}
    \sqrt{n}(\widehat{\psi}_{\widehat{d}^*, \text{dr}} - \psi_{d^*})
    \leadsto \mathcal{N} \left(0, \sigma_{d^*}^2\right).
\end{align*}
\label{corollary:learned-policy-corollary}
\end{corollary}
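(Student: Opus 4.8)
The plan is to treat this corollary as a direct consequence of the expansion already established in Theorem~\ref{thm:learned_policy_estimator}: the genuinely hard work—controlling the non-smoothness introduced by plugging in $\widehat{d}^*$, which is handled through the margin condition (Assumption~\ref{assump:margin})—is done there. What remains is (i) to verify that, under the three stated rate conditions, the entire remainder in that expansion is $o_{\mathbb{P}}(n^{-1/2})$, and (ii) to apply a central limit theorem to the leading empirical-average term. So I would begin by writing, from Theorem~\ref{thm:learned_policy_estimator}, $\widehat{\psi}_{\widehat{d}^*, \text{dr}} - \psi_{d^*} = (\mathbb{P}_n - \mathbb{P})\xi_{d^*} + R_n$, where $R_n$ collects the product terms and the margin term, and then show $\sqrt{n}\,R_n = o_{\mathbb{P}}(1)$.

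The second step is to bound $R_n$ term by term using the stated rates. The cross terms $\|\widehat{m}_a - m_a\|\,\|\widehat{\pi}\widehat{f}_{a,\widehat{m}} - f_{a,m}\pi\|$ are products of an $o_{\mathbb{P}}(n^{-1/4})$ factor (condition 1) and an $O_{\mathbb{P}}(n^{-1/4})$ factor (condition 3), hence $o_{\mathbb{P}}(n^{-1/2})$; similarly $\|\widehat{m}_a - m_a\|^2 = o_{\mathbb{P}}(n^{-1/2})$ follows directly from condition 1. The margin term $\|\widehat{\gamma} - \gamma\|_\infty^{1+\alpha}$ is, by condition 2, of order $\big(n^{-1/(2(1+\alpha))}\big)^{1+\alpha} = n^{-1/2}$: the margin exponent $1+\alpha$ is precisely what converts the uniform estimation rate of $\widehat{\gamma}$ into the parametric scale. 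Combining these bounds with the explicit $o_{\mathbb{P}}(n^{-1/2})$ slack already present in the theorem yields $R_n = o_{\mathbb{P}}(n^{-1/2})$, so that $\sqrt{n}(\widehat{\psi}_{\widehat{d}^*, \text{dr}} - \psi_{d^*}) = \sqrt{n}(\mathbb{P}_n - \mathbb{P})\xi_{d^*} + o_{\mathbb{P}}(1)$.

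The third step is the central limit theorem for the leading term. Because $d^* = \mathbb{1}\{m_1 > m_0\}$ is a fixed functional of the true distribution (not data-dependent) and $\xi_{d^*}$ is evaluated at the true nuisances, $(\mathbb{P}_n - \mathbb{P})\xi_{d^*}$ is a sample average of i.i.d.\ mean-zero terms; the boundedness of the propensity score and the density bounded away from zero in Assumption~\ref{assump:statistical_models} ensure $\xi_{d^*}$ has finite variance, so the Lindeberg--L\'evy CLT gives $\sqrt{n}(\mathbb{P}_n - \mathbb{P})\xi_{d^*} \leadsto \mathcal{N}\big(0, \text{Var}(\xi_{d^*})\big)$. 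Since $\phi_{d^*} = \xi_{d^*} - \psi_{d^*}$ differs from $\xi_{d^*}$ only by the constant $\psi_{d^*}$, we have $\text{Var}(\xi_{d^*}) = \text{Var}(\phi_{d^*}) = \sigma_{d^*}^2$ by Theorem~\ref{thm:variance}. Slutsky's theorem, applied to the $o_{\mathbb{P}}(1)$ remainder from the previous step, then delivers the claimed asymptotic normality, and root-$n$ consistency follows immediately from tightness of the limiting law.

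I expect the main obstacle to lie in the margin term of Step~2 rather than anywhere else, since the product terms and the CLT are routine. The delicate point is that condition 2 makes $\|\widehat{\gamma}-\gamma\|_\infty^{1+\alpha}$ only borderline, of exact order $n^{-1/2}$; to obtain a clean centered normal limit one needs this contribution to be genuinely $o_{\mathbb{P}}(n^{-1/2})$, so I would argue that the margin exponent is calibrated so that $\sqrt{n}\,\|\widehat{\gamma}-\gamma\|_\infty^{1+\alpha} \to_{\mathbb{P}} 0$ (equivalently, reading the rate in condition 2 at the strict little-$o$ scale), which is exactly the role the margin condition plays in removing the bias from the non-smooth plug-in policy $\widehat{d}^*$.
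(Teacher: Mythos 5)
Your proposal is correct and follows essentially the same route as the paper's own proof: invoke the expansion from Theorem~\ref{thm:learned_policy_estimator}, check that conditions 1 and 3 make the product terms $o_{\mathbb{P}}(n^{-1/2})$, handle the margin term via condition 2, and apply the CLT plus Slutsky after identifying $\text{Var}(\xi_{d^*}) = \text{Var}(\phi_{d^*}) = \sigma_{d^*}^2$ from Theorem~\ref{thm:variance}. Your closing caveat is well taken and in fact sharper than the paper's treatment: as literally stated, condition 2's $O_{\mathbb{P}}\left(n^{-1/(2(1+\alpha))}\right)$ only yields $\|\widehat{\gamma}-\gamma\|_\infty^{1+\alpha} = O_{\mathbb{P}}(n^{-1/2})$, while the paper's proof simply asserts the whole remainder is $o_{\mathbb{P}}(n^{-1/2})$; the little-$o$ reading you propose is the one consistent with the paper's own discussion immediately after the corollary, which states the required rate for $\widehat{\gamma}$ as $o_{\mathbb{P}}\left(n^{-1/(2(1+\alpha))}\right)$ (noting that the one-sidedness of $\mathbb{P}\gamma(\widehat{d}^*-d^*)\leq 0$ alone would not rescue a centered limit).
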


As shown in Corollary~\ref{corollary:learned-policy-corollary}, 
the error of $\widehat{\psi}_{\widehat{d}^*, \text{dr}}$ 
shares the same convergence guarantee as 
$\widehat{\psi}_{{d}^*, \text{dr}}$ under the extra assumption 
that the margin condition holds and the estimation error of $\widehat{\gamma}$ is $o_{\mathbb{P}}\left(n^{-1/(2(1+\alpha))}\right)$. 
This suggests that when the CMTE $\gamma$ can be estimated well, or when the margin condition holds in a strong sense, 
the doubly robust-style estimator when based on the estimated policy $\widehat{d}^*$ behaves as if the true optimal policy $d^*$ was instead plugged in.
In Section~\ref{sec:algorithm}, 
we give an algorithm that describes the estimation procedure for our proposed estimator.

\subsection{Construction of the Estimators}
\label{sec:algorithm}
The construction of the estimator contains two main steps: nuisance training and 
policy evaluation (with cross-fitting an additional possible step). 
Let $(D_1, D_2, D_3)$ denote three independent samples of $n$ observations of $(X_i, A_i, Y_i)$.
\begin{enumerate}
    \item[Step 1] Nuisance training:
    \begin{enumerate}
        \item Use $D_{1}$ to construct propensity score estimates $\widehat{\pi}_1$. 
        \item Use $D_{1}$ to construct conditional median estimates $\widehat{m}_a$ for $a \in \{0,1\}$, for example using quantile regression.
        \item Use $D_{2}$ to construct conditional density estimates at the estimated conditional median 
        $\widehat{f}_{a,\widehat{m}}$, 
        for example by regressing 
        $\frac{1}{h}K\left(\frac{Y-\widehat{m}_a(X)}{h} \right)$
        on $X$ among those with $A=a$, for $K$ a standard kernel function. %
    \end{enumerate}
    \item[Step 2]  %
    Evaluate the value of the given fixed policy $d$ or the learned optimal policy 
    $\widehat{d}^*(X) = \mathbb{1}\{\widehat{\gamma}(X) > 0\}$ on $D_3$ through the estimator presented in~\eqref{eq:dr_estimator}. 
    \item[Step 3] Cross-fitting (Optional): 
    Repeat Step 1 and 2 two times by using the dataset in the order $(D_1, D_3, D_2)$, and $(D_2, D_3, D_1)$.
    Use the average of the resulting estimators as the final estimate of the ACME of the evaluated policy. We focus on three folds for simplicity, but alternatives using $k>3$ folds are also possible.
\end{enumerate}

In Section~\ref{sec:experiment_app}, 
we use an example to illustrate this estimation procedure. In the above we suggested quantile regression and a particular conditional density estimation procedure, but our convergence rate results  show that any generic method could be used, as long as it satisfied the high-level $L_2$ error conditions listed there. 

\begin{remark}
In the above we use sample splitting to avoid empirical process conditions. Specifically, we split $D_1$ from $D_2$ in order to avoid conditions when estimating the conditional density $f_{a,m}$, and we split $D_3$ in order to avoid conditions in doing bias correction using the estimated influence function. This splitting could be omitted for if Donsker-type conditions are deemed acceptable. 
\end{remark}

\section{Estimation of the CMTE \& Policy Learning}
\label{sec:learning-gamma}

In this section, we propose efficient estimators for the conditional median treatment effect $\gamma(X) = m_1(X) - m_0(X)$, 
and use it to construct the median optimal treatment regime. 
Such an estimator is of independent interest, 
as it suggests ways to characterize the heterogeneity of the median treatment effect. 
After providing the doubly robust-style estimator $\wh \gamma_\dr$ in Section~\ref{sec:doubly-robust-CME}, 
we show how its estimation error is connected to the performance of the corresponding median optimal policy estimator $\wh d^*_\dr(X) = \mathbb{1}\{\wh \gamma_\dr(X) > 0\}$ (Section~\ref{sec:median-otr-construction}). 
Unlike existing work on estimating the marginal and conditional quantile treatment effect~\citep{chernozhukov2005iv, diaz2017efficient,firpo2007efficient,fortin2011decomposition,frolich2013unconditional,machado2005counterfactual,melly2005decomposition,rothe2010nonparametric},
our proposed nonparametric estimator for the CMTE relies on pseudo-outcome regression.

\subsection{Doubly Robust-Style Estimator of the CMTE}\label{sec:doubly-robust-CME}
Following the conditional average treatment effect estimation procedure proposed in~\citet{kennedy2020optimal},
we construct the doubly robust-style estimator $\wh \gamma_\dr$ as follows: 
Let $(D_1, D_2, D_3)$ denote three independent samples of $n$ observations of $Z_i = (X_i, A_i, Y_i)$.

\begin{enumerate}
    \item[Step 1] Nuisance training: 
    Same as Step 1 in Section~\ref{sec:algorithm}.
    \item[Step 2]  Pseudo-outcome regression:
    Use $D_3$ to construct the pseudo-outcome
    \begin{align*}
        \widehat{g}(Z) = \wh m_1(X) - \wh m_0(X) + \frac{A-\wh \pi_1(X)}{\wh \pi_1(X) \wh \pi_0(X)} \frac{\frac{1}{2} - \mathbb{1}\{Y \leq \wh m_A(X)\}}{\wh f_{A, \wh m}(X)},
    \end{align*}
    and regress it on covariates $X$ 
    to obtain $\wh \gamma_\dr$. The regression estimator $ \widehat{\mathbb{E}}_n$ is given by 
    \begin{align}\label{eq:linear-smoother}
        \wh \gamma_\dr(x) = \widehat{\mathbb{E}}_n\{\wh g(Z)|X=x\} = \sum_{i=1}^n w_i(x; X^{n})\wh g(Z_i),
    \end{align}
    where the weights $w_i(x; X^n)$ are learned using $X^n$ in sample $D_3$. 
    Examples of such linear smoothers $\wh{\E}_n$ include kernel estimators, 
    linear, 
    ridge, local polynomial and RKHS regression, 
    some random forests (e.g., Mondrian and kernel forests), 
    as well as weighted combinations of aforementioned methods~\citep{wasserman2006all}. %
    \item[Step 3] Cross-fitting (Optional): 
    Repeat Step 1 and 2 two times by using the dataset in the order $(D_1, D_3, D_2)$, and $(D_2, D_3, D_1)$.
    Use the average of the resulting estimators as the final $\wh \gamma_\dr$. 
\end{enumerate}

In a recent line of work~\citep{nie2021quasi,kennedy2020optimal},
the conditional treatment effect estimators are commonly compared with an oracle estimator $\wt \gamma$ that has access to the true nuisance functions. 
We define the oracle estimator in our setting as follows. 
\begin{definition}[Oracle]
Denote $g(Z)$ to be 
the pseudo-outcome that depends on the true nuisance fucntions: %
\begin{align*}
    g(Z) = m_1(X) - m_0(X) + \frac{A - \pi_1(X)}{\pi_1(X) \pi_0(X)} \frac{1/2 - \mathbb{1}\{Y \leq m_A(X)\}}{f_{A, m}(X)}.
\end{align*} 
Given independent and identically distributed samples $\{X_i, O_i\}_{i=1}^n$
where $O_i = \gamma(X_i) + \varepsilon_i$
and $\varepsilon_i$ is a 
mean-zero noise defined to be $\varepsilon_i = g(Z_i) - \gamma(X_i)$, 
the oracle is given by 
$$\wt \gamma(x) = \wh{\mathbb{E}}_n\{O|X=x \}.$$
\end{definition}
In other words,  
the oracle $\wt \gamma$ regresses $O$ %
on the covariates $X$ 
using the same linear smoother $\wh{\mathbb{E}}_n$ as the one used in $\wh \gamma_\dr$~\eqref{eq:linear-smoother}. %
The performance of the oracle depends directly on the complexity 
(e.g., smoothness) of $\gamma$ itself,
since the outcome $O$ is the sum of 
$\gamma(X)$ and a mean-zero noise.

We illustrate in the following theorem that
 the mean squared error of 
$\wh \gamma_\dr$ can be upper bounded 
by the oracle error incurred by $\wt \gamma$ 
and products of nuisance errors. 
This allows the CMTE to be estimated 
at a faster rate even when the nuisance estimates 
are obtained at slower rates.

\begin{theorem}\label{thm:gamma-error}
Let $\wh \gamma_\dr$ and $\wt \gamma$ 
denote the conditional median treatment effect estimator 
and oracle defined as above.
Define the weighted norm $\| \cdot \|_w$ and squared-weighted norm $\| \cdot \|_{w^2}$
by 
\begin{align*}
    \|v\|^2_w &= \|v(x)\|^2_w = \sum_{i=1}^n {\frac{|w_i(x; X^n)|}{\sum_j |w_j(x; X^n)|} \int |v(z)|^2 d \mathbb{P}(z|X_i)},
    \\
    \|v\|^2_{w^2} &=\|v(x)\|^2_{w^2} = \sum_{i=1}^n {\frac{|w_i(x; X^n)|^2}{\sum_j |w_j(x; X^n)|^2} \int |v(z)|^2 d \mathbb{P}(z|X_i)}.
\end{align*}
Assume
\begin{enumerate}
    \item $\mathbb{P}\left(\epsilon \leq \widehat{\pi}_1(X) \leq 1-\epsilon \right) = 1$  
for some $\epsilon \in (0,1)$,
\item $\mathbb{P}( M_0 \leq \widehat{f}_{a,\widehat{m}}(X) \leq M_1 ) = 1$
for $a \in \{0,1\}$ and $0 < M_0 \leq M_1$.
\item $\text{Var}{\{g(Z)|X=x\}} \geq \sigma_{\min}^2$ for all $x \in \mathcal{X}$. 
\end{enumerate}
Then, for all $x \in \mathcal{X}$, we have
\begin{align}\label{eq:cmte-bound}
    (\wh \gamma_\dr(x) - \gamma(x))^2
    \lesssim {(\wt \gamma(x) - \gamma(x))^2}
    + b(x) + O_{\mathbb{P}}\left(\|\wh g - g\|^2_{w^2} {\E[(\wt \gamma(x) - \gamma(x))^2]}\right),
\end{align}
where   
$b(x) =  (\sum_{i=1}^n |w_i(x; X^n)|)^2   (\sum_{a=0}^1 \|\wh m_a - m_a\|^2_w (\|\wh \pi_a \wh f_{a, \wh m} - \pi_a f_{a,m}\|_w  + \|\wh m_a - m_a\|_w ) )^2$. 
\end{theorem}

\begin{remark}
    When the pseudo-outcome estimator $\wh g$
    is consistent in the squared-weighted norm, 
    i.e., $\|\wh g - g\|_{w^2} = o_{\mathbb{P}}(1)$,
    we have the third term in~\eqref{eq:cmte-bound}
    to be $o_\mathbb{P}({\E[(\wt \gamma(x) - \gamma(x))^2]})$. 
\end{remark} 

The assumptions of Theorem~\ref{thm:gamma-error} require $g(Z)$ to have variation for all $X=x$. 
Our %
upper bound on the squared error of $\wh \gamma_\dr$ 
contains three terms: 
the first and last terms depend on 
the error of the oracle $\wt \gamma$,
while the middle term 
relies on products of nuisance errors 
with respect to the weighted norm. 
When $\sum_j |w_j(x; X^n)| = 1$, 
the weighted norm $\|\cdot\|_w$ 
weighs the nuisance errors by $w_i(x;X^n)$ 
in a similar way to 
how the linear smoother $\wh \E_n$ weighs the pseudo-outcomes. 
As an example, when the linear smoother $\wh \E_n$ (e.g., nearest neighbor estimator) relies only on local information, 
these weighted norms ensure that the error at $x$ is 
also weighed by only its local information. 
In the more general case, 
the weights for the weighted norms are normalized so that they sum up to $1$. 
As shown in~\citet{stone1977consistent,gyorfi2002distribution}, 
$\sum_{i=1}^n |w_i (x; X^n)| = O_\mathbb{P}(1)$
is a sufficient condition to ensure 
$\wh \E_n$ to be weakly universally consistent. 
In such cases, 
if $\|\wh g - g\|_{w^2} = o_{\mathbb{P}}(1)$,
then the squared error of $\wh \gamma_\dr$
deviates from the error of the oracle $\wt \gamma$
by products of nuisance errors, 
suggesting that the CMTE can be estimated 
at a faster rate compared to the nuisance functions.  
For a more detailed discussion on 
sufficient and necessary conditions 
on $w_i(x; X^n)$ for ensuring
the consistency of $\wh \E_n$, 
we refer the readers to~\citep{stone1977consistent}.

\subsection{Policy Learning}
\label{sec:median-otr-construction}

We briefly touch on the problem of learning the median optimal treatment regime.
There are many approaches for policy learning. 
Among them, the two canonical ways are 
(1) empirical value maximization where the goal is to directly find 
the optimal policy through maximizing the empirical value of policies in a fixed policy class~\citep{zhao2012estimating, zhang2013robust, athey2021policy};
and (2) estimating the conditional treatment effect $\gamma(X)$ first 
and then plugging it into the closed form of the optimal treatment regime ${d}^*(X) = \mathbb{1}\{{\gamma}(X) > 0\}$~\citep{murphy2003optimal, robins2004optimal, van2014targeted}. 
The second approach is closely related to plug-in classifiers~\citep{audibert2007fast}, 
since deciding on the optimal treatment can be viewed as 
a binary classification task. 
\edit{
Another related line of work is robust policy learning~\citep{xiao2019robust, zhang2021robust}. 
Using a model-based approach, 
\citet{xiao2019robust} learns conditional quantile treatment regimes through robust regression. 
On the other hand, 
\citet{zhang2021robust} uses a 
heuristic two-stage nonparametric approach 
for robust policy learning. 
}
For a more comprehensive review on policy learning, 
we refer the readers to~\citet{athey2021policy}.

Using the doubly robust-style estimator $\wh \gamma_\dr$ presented in~\eqref{eq:linear-smoother},
we adopt the second approach and 
construct the median optimal treatment regime through 
$\wh{d}_\dr^*(X) = \mathbb{1}\{\wh \gamma_\dr (X) > 0\}$.
We notice that the error of the policy $\wh{d}_\dr^*$ is bounded by the error of $\wh \gamma_\dr$. 
Under the assumption that 
for all $x \in \mathcal{X}$, 
either $|\gamma(x)| > \delta$ for some $\delta >0$ 
or $\gamma(x)=0$, 
we obtain that 
\begin{align*}%
    \E[\mathbb{1}\{\wh d_\dr^*(x) \neq  d^*(x)\}] 
    \leq \mathbb{P}(|\gamma(x)| \leq |\wh \gamma_\dr(x) - \gamma(x)|)
    \lesssim \mathbb{E}[|\wh \gamma_\dr(x) - \gamma(x)|]\leq \sqrt{\E[(\wh \gamma_\dr(x) - \gamma(x))^2]}. %
\end{align*} 
The first inequality follows from Lemma~\ref{lemma:policy_diff}
and the second inequality holds since 
if $\gamma(x) = 0$, then $d^*(x)$ can be either $1$ or $0$, suggesting that $\wh d_\dr^*(x)$ will always be optimal.
The error of the learned policy $\wh{d}_\dr^*$
can be upper bounded by the square root of the 
expected squared error given in Theorem~\ref{thm:gamma-error}.
The assumption used here is stronger than the margin condition. %
It is of future interest to relax such assumption
and study the performance of $\wh{d}_\dr^*$ in 
more flexible settings.

\section{Experiments}
\label{sec:experiment}

\subsection{Numerical Simulation}
\label{sec:simulation}
To explore the finite-sample properties of the estimator, 
we simulate from 
the following data generating process: 
\begin{align*}
    X &\sim \mathcal{N}(0,\mathbf{I}_5),\\
    \text{logit}\{\pi_1(X)\} &= X^\top \beta
    \text{ where } \beta = (.2, .2, .2, .2, .2),\\
    Y|X,A &\sim \text{Lognormal}(X^\top \beta + A, .25).
\end{align*}
\edit{We note that there is heterogeneity of 
the conditional median treatment effects across covariates $X$, i.e.,
$m_1(X) - m_0(X) = \exp(X^\top \beta + 1) -  \exp(X^\top \beta)
= (e-1) \exp(X^\top \beta)$.
}
To inspect the rate of convergence of the estimator in terms of the nuisance estimation error, 
we constructed the nuisance estimators through the following procedure:
$\widehat{\pi}_1(X)= \text{expit}\{\text{logit}(\pi_1(X)) + \epsilon_{1,n} \}$,  
 $\widehat{m}_a(X) = m_a(X) + \epsilon_{2,n}$,  
and $\widehat{f}_{a,\widehat{m}}(X) = f_{a,m}(X) + \epsilon_{3,n}$  
where $\epsilon_{1,n}, \epsilon_{2,n}, \epsilon_{3,n}$ are independent samples drawn from $\mathcal{N}(n^{-\alpha}, n^{-2\alpha})$.
This construction ensures that the root mean square errors of 
$\widehat{\pi}_a$, $\widehat{m}_a$ 
and $\widehat{f}_{a,\widehat{m}}$ are of order
$O(n^{-\alpha})$. 
The policy used for evaluation is  $d(X) = \mathbb{1}\{X_1 > 0\}$. 
Results shown in Figure~\ref{fig:cnt-dr-plugin} are 
averaged over $1000$ rounds.

\begin{figure}
     \centering
     \begin{subfigure}[b]{.32\linewidth}
         \centering
         \includegraphics[width=\linewidth]{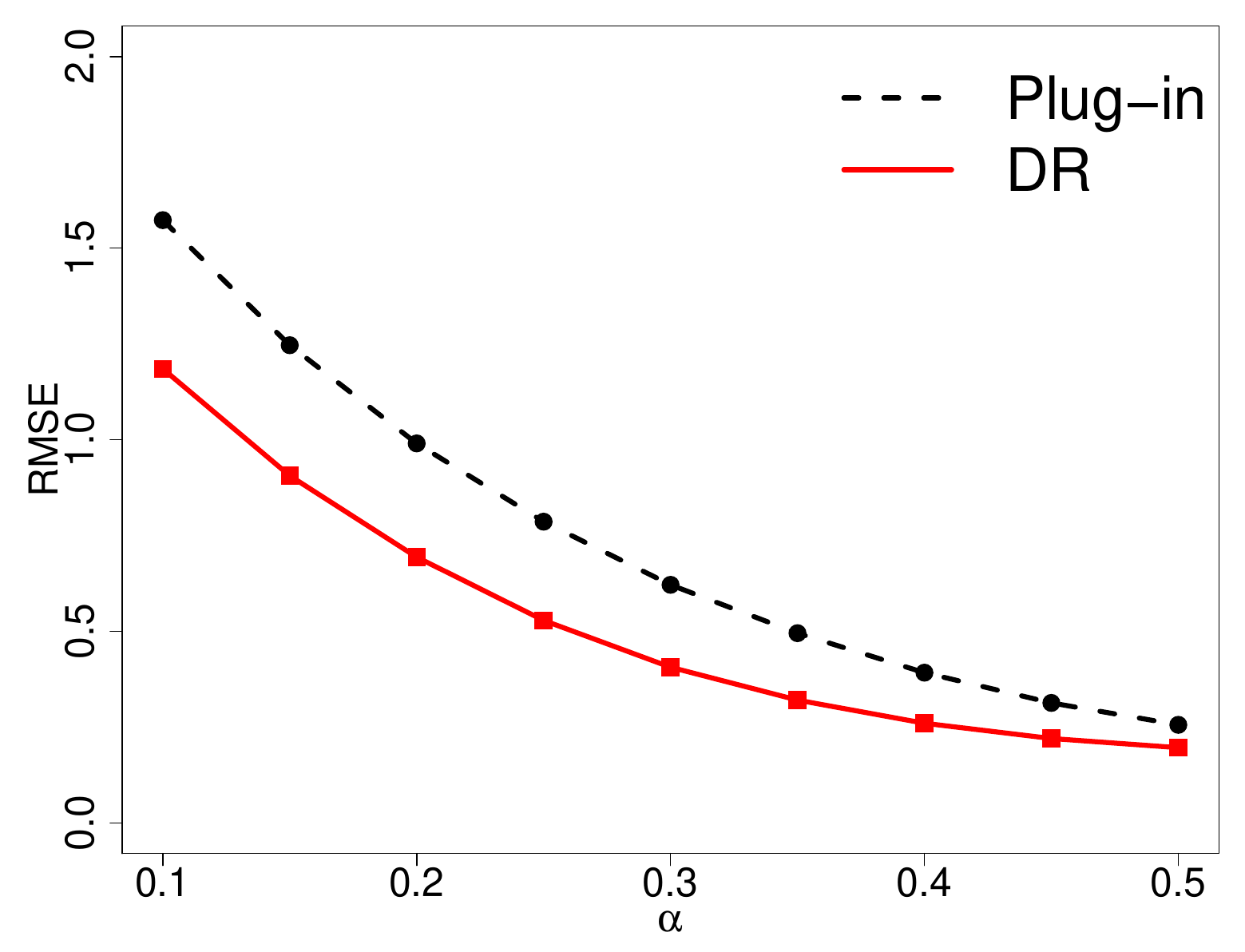}
         \caption{$n=100$}
         \label{fig:n=100}
     \end{subfigure}
     \hfill
     \begin{subfigure}[b]{.32\linewidth}
         \centering
         \includegraphics[width=\linewidth]{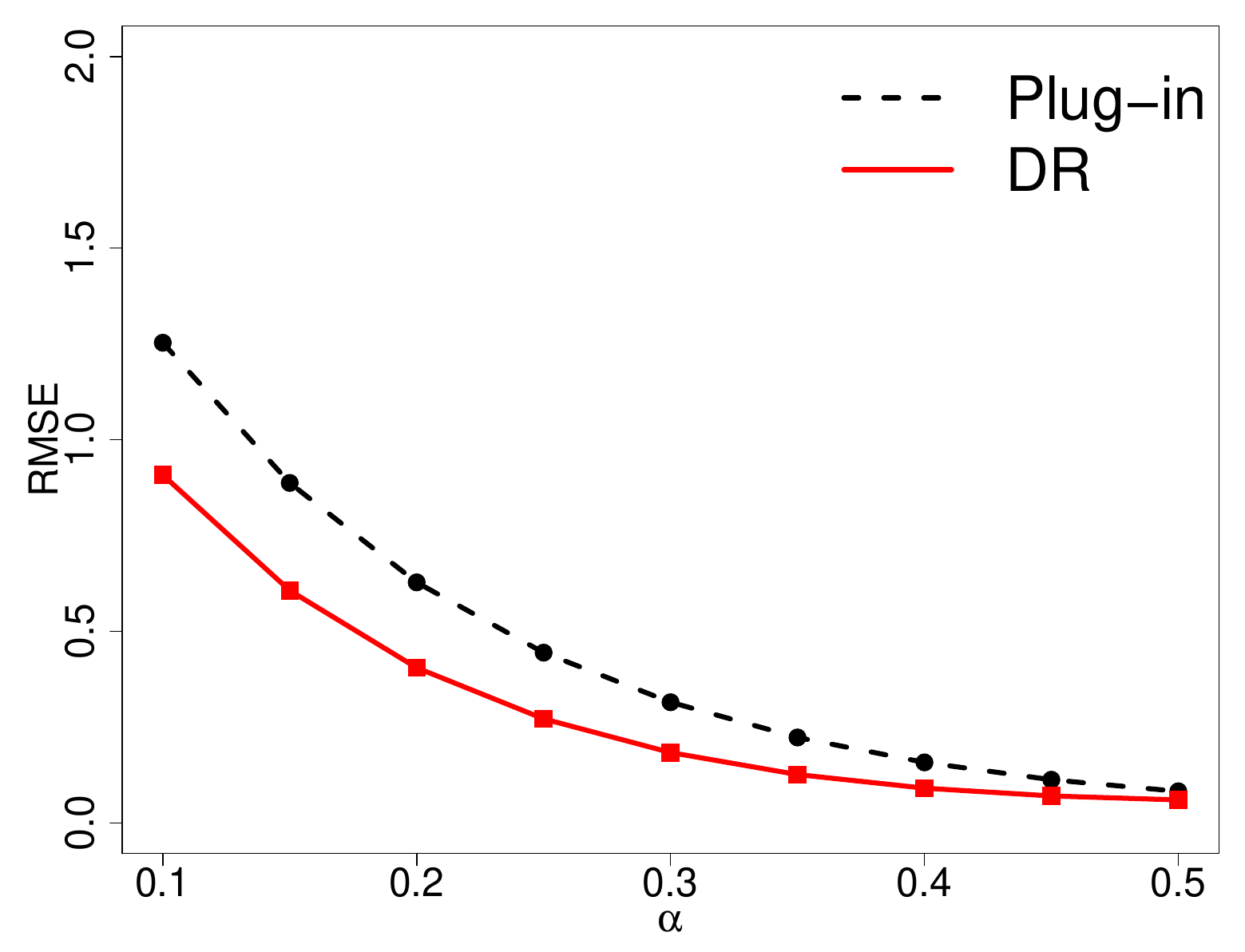}
         \caption{$n=1000$}
         \label{fig:n=1000}
     \end{subfigure}
     \hfill
     \begin{subfigure}[b]{.32\linewidth}
         \centering
         \includegraphics[width=\linewidth]{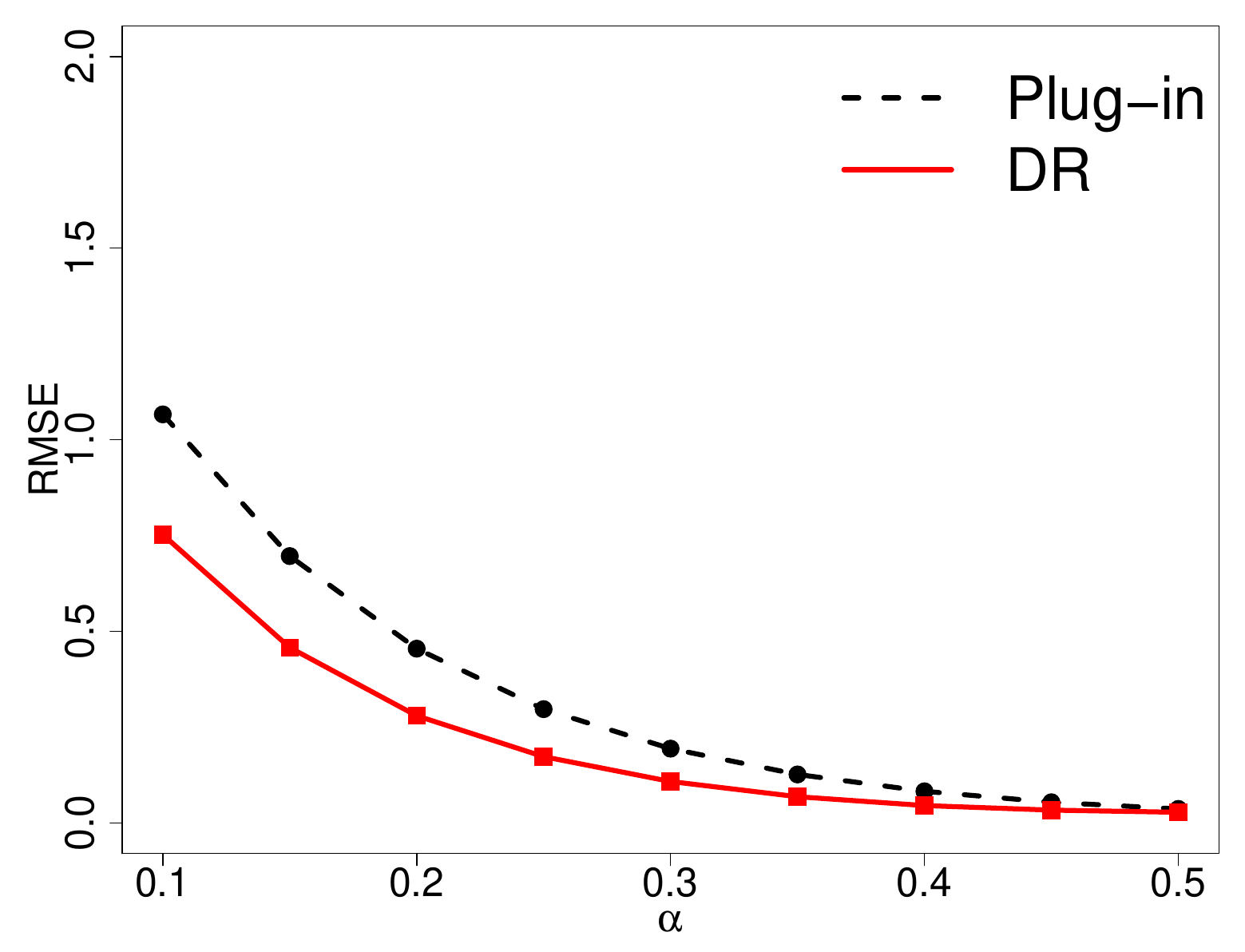}
         \caption{$n=5000$}
         \label{fig:n=5000}
     \end{subfigure}
        \caption{
        Root mean square error of the doubly robust-style estimator~\eqref{eq:dr_estimator} and the plug-in estimator~\eqref{eq:pi_estimator}
        for settings described in Section~\ref{sec:simulation}.
        The sample size $n$ varies in each subplot. 
        Within a subplot, the $x$-axis is $\alpha$, 
        which controls the estimation error of the nuisances. 
        }
        \label{fig:cnt-dr-plugin}
\end{figure}

As we have seen in Figure~\ref{fig:cnt-dr-plugin}, 
the doubly robust-style estimator converges faster compared to the plug-in estimator. 
When $n=5000$, we see the estimation error 
of the doubly robust-style estimator is close to $0$
when $\alpha =.25$, in line with what our theory
suggests in Corollary~\ref{corollary:fixed-policy-corollary}.

\subsection{Application: ACTG 175}
\label{sec:experiment_app}

\begin{figure}
    \centering
    \begin{subfigure}[b]{.48\linewidth}
        \centering
        \includegraphics[width=\linewidth]{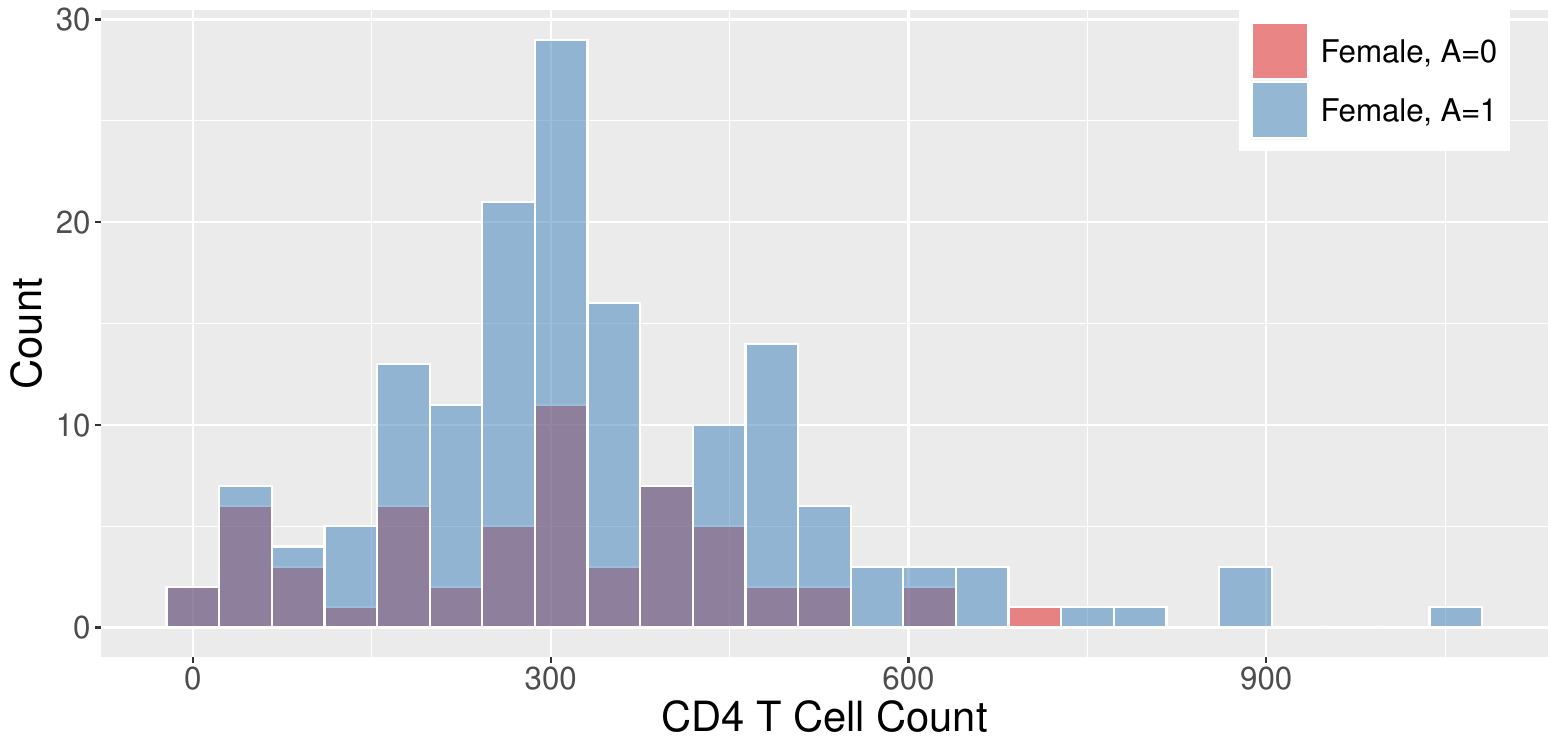}
        \caption{Histogram of the outcomes for females.}
        \label{fig:actg-illustration}
    \end{subfigure}
    \hfill
    \begin{subfigure}[b]{.48\linewidth}
        \centering
        \includegraphics[width=\linewidth]{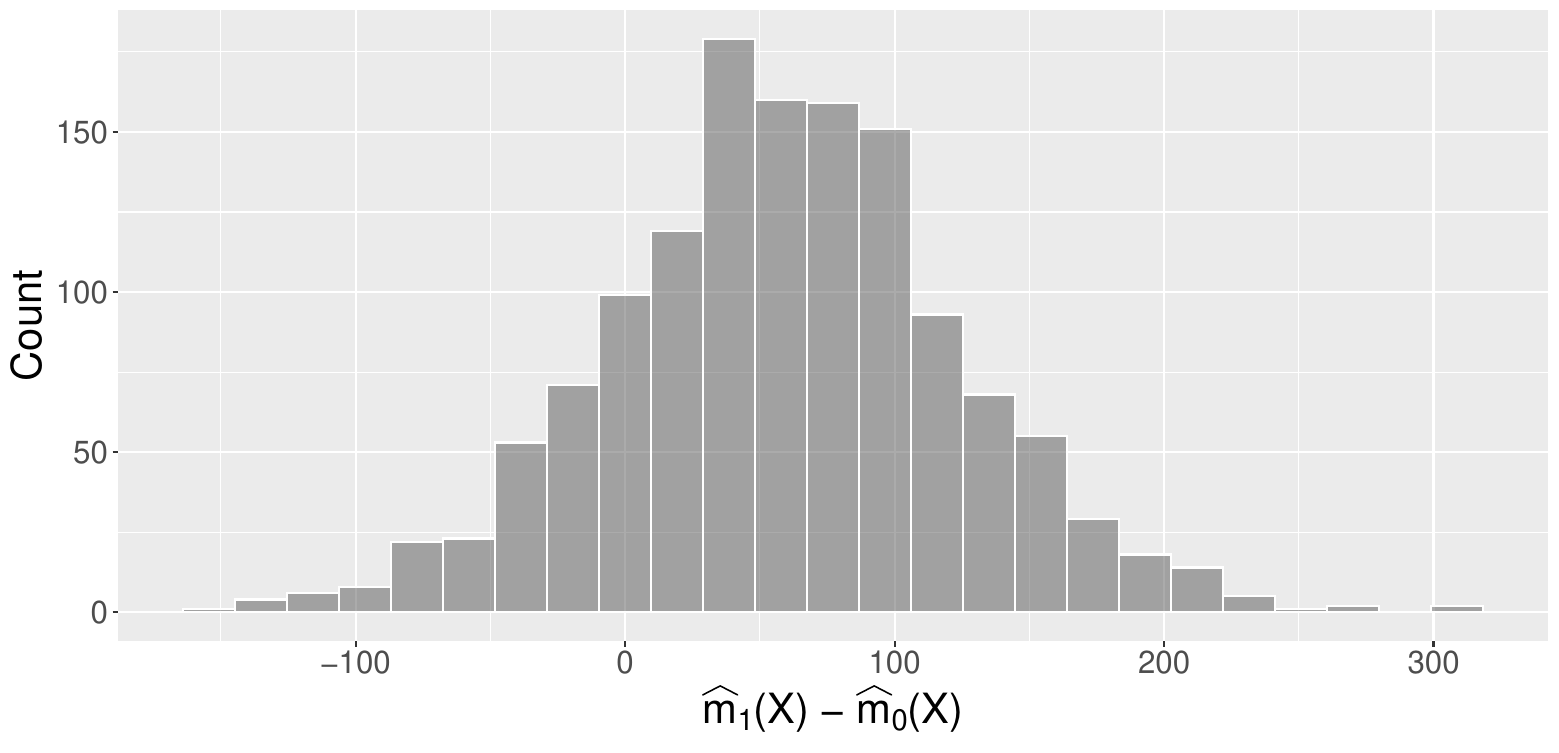}
        \caption{Histogram of $\widehat{m}_1(X_i) - \widehat{m}_0(X_i)$.}
        \label{fig:conditional-median-effect}
    \end{subfigure}
       \caption{
       \ref{fig:actg-illustration} is the histogram 
       of the outcomes for females under treatment and control. 
       The difference in mean outcome under treatment versus control is 49, while the difference in median outcomes is only 1.
       \ref{fig:conditional-median-effect} is the histogram of $\widehat{m}_1(X_i) - \widehat{m}_0(X_i)$ where 
       $\widehat{m}_a$ is estimated using quantile regression forests described in Section~\ref{sec:experiment_app}. 
       }
       \label{fig:summary}
\end{figure}

We illustrate our proposed methods using the ACTG 175 dataset
given in the \texttt{R} package \texttt{speff2trial} \citep{juraska2010package}.
The data are from a randomized clinical trial
in a population of adults with HIV type I. 
The treatment is binary where $A=0$ stands for only using zidovudine as the therapy and $A=1$ represents combination therapies.  
The outcome of interest is CD4 T cell count after
$96 \pm 5$ weeks. 
Covariates $X$ include baseline CD4 and CD8 T cell count, 
age, weight, Karnofsky score, indicators for
race, gender, hemophilia, homosexual activity, drug use, whether symptomatic, and previous
zidovudine and antiretroviral use. 
Since the treatment is randomized, the propensity score is known to be $\pi_1(x)=.75$ for all subjects.
The number of observations is  $n=1342$, after excluding subjects with missing outcomes.  

Figure~\ref{fig:actg-illustration} shows a histogram
of outcomes under treatment versus control for females; the presence of skewed treatment responses suggest the median may be a more useful measure than the mean in this study. 
In fact, although the mean outcome in this group is quite different under treatment versus control (mean CD4 count is 341 for treated females, but only 292 for controls), the medians are similar (313 for treated, 312 for control).

Therefore we apply our proposed methods to estimate the value of the median optimal policy, and the value of a few competing policies. Specifically we use the estimator described in Section~\ref{sec:algorithm}, 
splitting the sample into thirds ($D_1, D_2$ and $D_3$) and using cross-fitting. In $D_1$ the conditional median estimate
$\widehat{m}_a$ is obtained with quantile regression forests via the package~\texttt{quantregForest}~\citep{meinshausen2006quantile} (recall the propensity score is known and so does not need to be estimated here).
$D_2$ is then used to construct the density estimate $\widehat{f}_{a, \widehat{m}}$, 
which we estimated by regressing  a Gaussian kernel-weighted outcome centered at $\widehat{m}_a$ on $X$ using the \texttt{randomForest} R package.  
The bandwidth $h$ was chosen using Silverman's rule~\citep{silverman1986density}. 
We considered estimating the ACME value of five policies using $D_3$:
the observational policy $d(X_i) = A_i$, \edit{a plug-in median optimal policy} $d(X_i) = \mathbb{1}\{\widehat{m}_1(X_i) > \widehat{m}_0(X_i)\}$, the treat-all policy  
$d(X_i)=1$, the treat-none policy $d(X_i)=0$, and \edit{a plug-in mean optimal policy} 
$d(X_i) = \mathbb{1}\{\widehat{\mu}_1(X_i) > \widehat{\mu}_0(X_i)\}$ where the regression functions $\widehat{\mu}_a$ are also estimated via random forests.

\begin{figure}
    \centering
        \includegraphics[width=.7\linewidth]{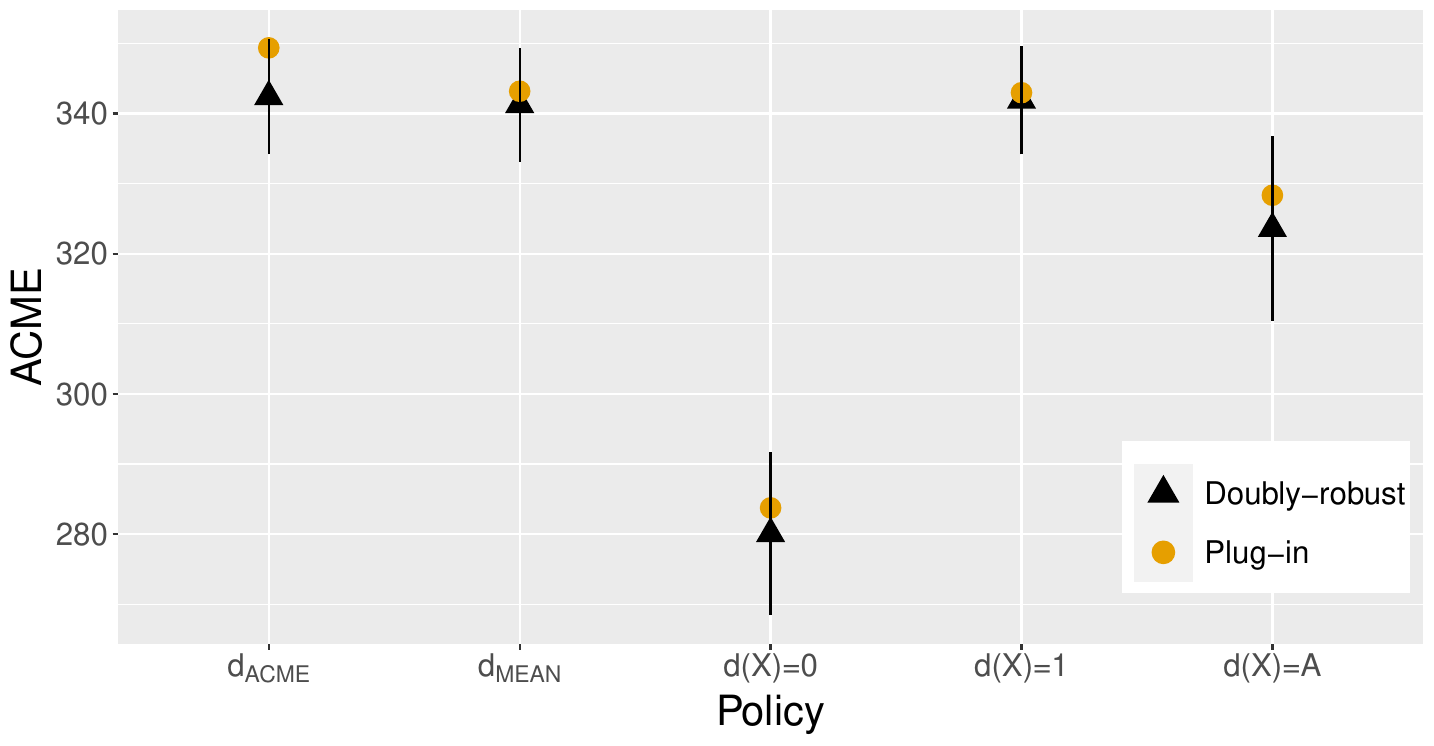}
       \caption{
       The results for the ACTG 175 data analysis. 
       Black triangles show the estimated ACME values of five different policies, computed with our proposed estimator~\eqref{eq:dr_estimator}, and the black lines show the 
       estimated $95\%$ confidence intervals. 
       For comparison, the yellow circles are estimates from the plug-in estimator~\eqref{eq:pi_estimator}. 
       }
       \label{fig:actg}
\end{figure}

Figure~\ref{fig:actg} shows the estimated values with the proposed doubly robust-style estimator $\widehat{\psi}_{d, \text{dr}}$, as well as the plug-in $\widehat{\psi}_{d, \text{pi}}$, for reference. 
 $95\%$ confidence intervals for $\widehat{\psi}_{d, \text{dr}}$
are obtained with the usual Wald interval based on the empirical variance 
$\widehat{\psi}_{d, \text{dr}} \pm 1.96 \widehat{\sigma}_d/\sqrt{n}$
where $\widehat{\sigma}_d$ is the sample standard deviation of the influence function estimates. 
The results show the median optimal policy gives the highest value ($342.40$ with 95\% CI $8.20$), with the mean optimal policy and treat-all policy close behind. The treat-none policy does substantially worse than even the observational (random assignment) policy. %
Figure~\ref{fig:conditional-median-effect} shows a histogram of the conditional median treatment effects $\widehat{m}_1(X_i) - \widehat{m}_0(X_i)$, indicating a modest amount of effect heterogeneity.

\section{Discussion}

In this paper, we proposed a treatment policy based on conditional median treatment effects, and a new ACME value measure which is maximized for this policy. Importantly, our proposed approach avoids both within-group lack of robustness issues with the mean, as well as across-group unfairness issues with the marginal median. We argue that optimal treatment policies should be defined in terms of the \emph{conditional} distribution of outcomes, given covariates, rather than the marginal, in order to avoid across-group unfairness. We study nonparametric efficiency bounds and propose provably optimal doubly robust-style estimators, showing their finite-sample properties in simulations and in an illustration analyzing effects of combination therapy in treating HIV. 
We would generally argue that the mean is most useful as a measure of centrality when it is close to the median, and otherwise the median should be preferred; this suggests median effects should at least be used more widely than they are at present.

There are many opportunities for future related work. In addition to developing $V$-specific median optimal treatment regimes
and analyzing $\wh d_\dr^*$ in more general settings, 
one could also consider more general (conditional) quantile optimal treatment regimes that replace conditional medians by conditional quantiles. All of this could also be adapted to numerous other settings, including continuous or time-varying treatments, or mediation problems, or settings where treatments may be confounded so that sensitivity analysis or instrumental variables may be used, etc.

\section*{Acknowledgement}
LL is generously supported by an Open Philanthropy AI Fellowship, and EK gratefully acknowledges support from NSF Grant DMS1810979.
The authors  thank 
Sivaraman Balakrishnan
and 
David Childers 
for very helpful discussions.

\bibliographystyle{plainnat}
\bibliography{causality-refs}

\begin{thebibliography}{64}
\providecommand{\natexlab}[1]{#1}
\providecommand{\url}[1]{\texttt{#1}}
\expandafter\ifx\csname urlstyle\endcsname\relax
  \providecommand{\doi}[1]{doi: #1}\else
  \providecommand{\doi}{doi: \begingroup \urlstyle{rm}\Url}\fi

\bibitem[Athey and Wager(2021)]{athey2021policy}
Susan Athey and Stefan Wager.
\newblock Policy learning with observational data.
\newblock \emph{Econometrica}, 89\penalty0 (1):\penalty0 133--161, 2021.

\bibitem[Audibert and Tsybakov(2007)]{audibert2007fast}
Jean-Yves Audibert and Alexandre~B Tsybakov.
\newblock Fast learning rates for plug-in classifiers.
\newblock \emph{The Annals of statistics}, 35\penalty0 (2):\penalty0 608--633,
  2007.

\bibitem[Bickel and Ritov(1988)]{bickel1988estimating}
Peter~J Bickel and Yaacov Ritov.
\newblock Estimating integrated squared density derivatives: sharp best order
  of convergence estimates.
\newblock \emph{Sankhy{\=a}: The Indian Journal of Statistics, Series A}, pages
  381--393, 1988.

\bibitem[Bickel et~al.(1993)Bickel, Klaassen, Bickel, Ritov, Klaassen, Wellner,
  and Ritov]{bickel1993efficient}
Peter~J Bickel, Chris~AJ Klaassen, Peter~J Bickel, Ya’acov Ritov, J~Klaassen,
  Jon~A Wellner, and YA'Acov Ritov.
\newblock \emph{Efficient and adaptive estimation for semiparametric models},
  volume~4.
\newblock Johns Hopkins University Press Baltimore, 1993.

\bibitem[Casella and Berger(2002)]{casella2002statistical}
George Casella and Roger~L Berger.
\newblock \emph{Statistical inference}, volume~2.
\newblock Duxbury Pacific Grove, CA, 2002.

\bibitem[Chakraborty(2013)]{chakraborty2013statistical}
Bibhas Chakraborty.
\newblock \emph{Statistical methods for dynamic treatment regimes}.
\newblock Springer, 2013.

\bibitem[Chakraborty et~al.(2010)Chakraborty, Murphy, and
  Strecher]{chakraborty2010inference}
Bibhas Chakraborty, Susan Murphy, and Victor Strecher.
\newblock Inference for non-regular parameters in optimal dynamic treatment
  regimes.
\newblock \emph{Statistical methods in medical research}, 19\penalty0
  (3):\penalty0 317--343, 2010.

\bibitem[Chernozhukov and Hansen(2005)]{chernozhukov2005iv}
Victor Chernozhukov and Christian Hansen.
\newblock An iv model of quantile treatment effects.
\newblock \emph{Econometrica}, 73\penalty0 (1):\penalty0 245--261, 2005.

\bibitem[Chernozhukov et~al.(2018)Chernozhukov, Chetverikov, Demirer, Duflo,
  Hansen, Newey, and Robins]{chernozhukov2018double}
Victor Chernozhukov, Denis Chetverikov, Mert Demirer, Esther Duflo, Christian
  Hansen, Whitney Newey, and James Robins.
\newblock Double/debiased machine learning for treatment and structural
  parameters.
\newblock \emph{The Econometrics Journal}, 21\penalty0 (1):\penalty0 C1--C68,
  2018.

\bibitem[Chouldechova and Roth(2020)]{chouldechova2018frontiers}
Alexandra Chouldechova and Aaron Roth.
\newblock A snapshot of the frontiers of fairness in machine learning.
\newblock \emph{Communications of the ACM}, 63\penalty0 (5):\penalty0 82--89,
  2020.

\bibitem[D{\'\i}az(2017)]{diaz2017efficient}
Iv{\'a}n D{\'\i}az.
\newblock Efficient estimation of quantiles in missing data models.
\newblock \emph{Journal of Statistical Planning and Inference}, 190:\penalty0
  39--51, 2017.

\bibitem[Dwork et~al.(2012)Dwork, Hardt, Pitassi, Reingold, and
  Zemel]{dwork2012fairness}
Cynthia Dwork, Moritz Hardt, Toniann Pitassi, Omer Reingold, and Richard Zemel.
\newblock Fairness through awareness.
\newblock In \emph{Proceedings of the 3rd innovations in theoretical computer
  science conference}, pages 214--226, 2012.

\bibitem[Firpo(2007)]{firpo2007efficient}
Sergio Firpo.
\newblock Efficient semiparametric estimation of quantile treatment effects.
\newblock \emph{Econometrica}, 75\penalty0 (1):\penalty0 259--276, 2007.

\bibitem[Fortin et~al.(2011)Fortin, Lemieux, and
  Firpo]{fortin2011decomposition}
Nicole Fortin, Thomas Lemieux, and Sergio Firpo.
\newblock Decomposition methods in economics.
\newblock In \emph{Handbook of labor economics}, volume~4, pages 1--102.
  Elsevier, 2011.

\bibitem[Fr{\"o}lich and Melly(2013)]{frolich2013unconditional}
Markus Fr{\"o}lich and Blaise Melly.
\newblock Unconditional quantile treatment effects under endogeneity.
\newblock \emph{Journal of Business \& Economic Statistics}, 31\penalty0
  (3):\penalty0 346--357, 2013.

\bibitem[Gy{\"o}rfi et~al.(2002)Gy{\"o}rfi, Kohler, Krzyzak, Walk,
  et~al.]{gyorfi2002distribution}
L{\'a}szl{\'o} Gy{\"o}rfi, Michael Kohler, Adam Krzyzak, Harro Walk, et~al.
\newblock \emph{A distribution-free theory of nonparametric regression},
  volume~1.
\newblock Springer, 2002.

\bibitem[Hahn(1998)]{hahn1998role}
Jinyong Hahn.
\newblock On the role of the propensity score in efficient semiparametric
  estimation of average treatment effects.
\newblock \emph{Econometrica}, pages 315--331, 1998.

\bibitem[Hardt et~al.(2016)Hardt, Price, and Srebro]{hardt2016equality}
Moritz Hardt, Eric Price, and Nathan Srebro.
\newblock Equality of opportunity in supervised learning.
\newblock In \emph{Proceedings of the 30th International Conference on Neural
  Information Processing Systems}, pages 3323--3331, 2016.

\bibitem[Hirano and Porter(2012)]{hirano2012impossibility}
Keisuke Hirano and Jack~R Porter.
\newblock Impossibility results for nondifferentiable functionals.
\newblock \emph{Econometrica}, 80\penalty0 (4):\penalty0 1769--1790, 2012.

\bibitem[Huber(2004)]{huber2004robust}
Peter~J Huber.
\newblock \emph{Robust statistics}, volume 523.
\newblock John Wiley \& Sons, 2004.

\bibitem[Huber et~al.(1967)]{huber1967behavior}
Peter~J Huber et~al.
\newblock The behavior of maximum likelihood estimates under nonstandard
  conditions.
\newblock In \emph{Proceedings of the fifth Berkeley symposium on mathematical
  statistics and probability}, volume~1, pages 221--233. University of
  California Press, 1967.

\bibitem[Juraska et~al.(2012)Juraska, Gilbert, Lu, Zhang, Davidian, and
  Tsiatis]{juraska2010package}
M~Juraska, PB~Gilbert, X~Lu, M~Zhang, M~Davidian, and AA~Tsiatis.
\newblock speff2trial: Semiparametric efficient estimation for a two-sample
  treatment effect.
\newblock \emph{R package version}, 1\penalty0 (4), 2012.

\bibitem[Kallus et~al.(2019)Kallus, Mao, and Uehara]{kallus2019localized}
Nathan Kallus, Xiaojie Mao, and Masatoshi Uehara.
\newblock Localized debiased machine learning: Efficient estimation of quantile
  treatment effects, conditional value at risk, and beyond.
\newblock \emph{arXiv preprint arXiv:1912.12945}, 2019.

\bibitem[Kennedy(2016)]{kennedy2016semiparametric}
Edward~H Kennedy.
\newblock Semiparametric theory and empirical processes in causal inference.
\newblock In \emph{Statistical causal inferences and their applications in
  public health research}, pages 141--167. Springer, 2016.

\bibitem[Kennedy(2020)]{kennedy2020optimal}
Edward~H Kennedy.
\newblock Optimal doubly robust estimation of heterogeneous causal effects.
\newblock \emph{arXiv preprint arXiv:2004.14497}, 2020.

\bibitem[Kennedy et~al.(2020)Kennedy, Balakrishnan, G’Sell,
  et~al.]{kennedy2018sharp}
Edward~H Kennedy, Sivaraman Balakrishnan, Max G’Sell, et~al.
\newblock Sharp instruments for classifying compliers and generalizing causal
  effects.
\newblock \emph{Annals of Statistics}, 48\penalty0 (4):\penalty0 2008--2030,
  2020.

\bibitem[Kosorok and Laber(2019)]{kosorok2019precision}
Michael~R Kosorok and Eric~B Laber.
\newblock Precision medicine.
\newblock \emph{Annual review of statistics and its application}, 6:\penalty0
  263--286, 2019.

\bibitem[Laber and Murphy(2011)]{laber2011adaptive}
Eric~B Laber and Susan~A Murphy.
\newblock Adaptive confidence intervals for the test error in classification.
\newblock \emph{Journal of the American Statistical Association}, 106\penalty0
  (495):\penalty0 904--913, 2011.

\bibitem[Laber et~al.(2014)Laber, Lizotte, Qian, Pelham, and
  Murphy]{laber2014dynamic}
Eric~B Laber, Daniel~J Lizotte, Min Qian, William~E Pelham, and Susan~A Murphy.
\newblock Dynamic treatment regimes: Technical challenges and applications.
\newblock \emph{Electronic Journal of Statistics}, 8\penalty0 (1):\penalty0
  1225, 2014.

\bibitem[Linn et~al.(2017)Linn, Laber, and Stefanski]{linn2017interactive}
Kristin~A Linn, Eric~B Laber, and Leonard~A Stefanski.
\newblock Interactive q-learning for quantiles.
\newblock \emph{Journal of the American Statistical Association}, 112\penalty0
  (518):\penalty0 638--649, 2017.

\bibitem[Luedtke et~al.(2020)Luedtke, Chambaz, et~al.]{luedtke2020performance}
Alex Luedtke, Antoine Chambaz, et~al.
\newblock Performance guarantees for policy learning.
\newblock In \emph{Annales de l'Institut Henri Poincar{\'e}, Probabilit{\'e}s
  et Statistiques}, volume~56, pages 2162--2188. Institut Henri Poincar{\'e},
  2020.

\bibitem[Luedtke and van~der Laan(2016)]{luedtke2016statistical}
Alexander~R Luedtke and Mark~J van~der Laan.
\newblock Statistical inference for the mean outcome under a possibly
  non-unique optimal treatment strategy.
\newblock \emph{Annals of statistics}, 44\penalty0 (2):\penalty0 713, 2016.

\bibitem[Machado and Mata(2005)]{machado2005counterfactual}
Jos{\'e}~AF Machado and Jos{\'e} Mata.
\newblock Counterfactual decomposition of changes in wage distributions using
  quantile regression.
\newblock \emph{Journal of applied Econometrics}, 20\penalty0 (4):\penalty0
  445--465, 2005.

\bibitem[Mehrabi et~al.(2019)Mehrabi, Morstatter, Saxena, Lerman, and
  Galstyan]{mehrabi2019survey}
Ninareh Mehrabi, Fred Morstatter, Nripsuta Saxena, Kristina Lerman, and Aram
  Galstyan.
\newblock A survey on bias and fairness in machine learning.
\newblock \emph{arXiv preprint arXiv:1908.09635}, 2019.

\bibitem[Meinshausen(2006)]{meinshausen2006quantile}
Nicolai Meinshausen.
\newblock Quantile regression forests.
\newblock \emph{Journal of Machine Learning Research}, 7\penalty0
  (Jun):\penalty0 983--999, 2006.

\bibitem[Melly(2005)]{melly2005decomposition}
Blaise Melly.
\newblock Decomposition of differences in distribution using quantile
  regression.
\newblock \emph{Labour economics}, 12\penalty0 (4):\penalty0 577--590, 2005.

\bibitem[Mo et~al.(2020)Mo, Qi, and Liu]{mo2020learning}
Weibin Mo, Zhengling Qi, and Yufeng Liu.
\newblock Learning optimal distributionally robust individualized treatment
  rules.
\newblock \emph{Journal of the American Statistical Association}, pages 1--16,
  2020.

\bibitem[Murphy(2003)]{murphy2003optimal}
Susan~A Murphy.
\newblock Optimal dynamic treatment regimes.
\newblock \emph{Journal of the Royal Statistical Society: Series B (Statistical
  Methodology)}, 65\penalty0 (2):\penalty0 331--355, 2003.

\bibitem[Newey(1990)]{newey1990semiparametric}
Whitney~K Newey.
\newblock Semiparametric efficiency bounds.
\newblock \emph{Journal of Applied Econometrics}, 5\penalty0 (2):\penalty0
  99--135, 1990.

\bibitem[Nie and Wager(2021)]{nie2021quasi}
Xinkun Nie and Stefan Wager.
\newblock Quasi-oracle estimation of heterogeneous treatment effects.
\newblock \emph{Biometrika}, 108\penalty0 (2):\penalty0 299--319, 2021.

\bibitem[Qi et~al.(2019)Qi, Pang, and Liu]{qi2019estimating}
Zhengling Qi, Jong-Shi Pang, and Yufeng Liu.
\newblock Estimating individualized decision rules with tail controls.
\newblock \emph{arXiv preprint arXiv:1903.04367}, 2019.

\bibitem[Quionero-Candela et~al.(2009)Quionero-Candela, Sugiyama, Schwaighofer,
  and Lawrence]{quionero2009dataset}
Joaquin Quionero-Candela, Masashi Sugiyama, Anton Schwaighofer, and Neil~D
  Lawrence.
\newblock \emph{Dataset shift in machine learning}.
\newblock The MIT Press, 2009.

\bibitem[Robins et~al.(2008)Robins, Li, Tchetgen, van~der Vaart,
  et~al.]{robins2008higher}
James Robins, Lingling Li, Eric Tchetgen, Aad van~der Vaart, et~al.
\newblock Higher order influence functions and minimax estimation of nonlinear
  functionals.
\newblock In \emph{Probability and statistics: essays in honor of David A.
  Freedman}, pages 335--421. Institute of Mathematical Statistics, 2008.

\bibitem[Robins(2004)]{robins2004optimal}
James~M Robins.
\newblock Optimal structural nested models for optimal sequential decisions.
\newblock In \emph{Proceedings of the second seattle Symposium in
  Biostatistics}, pages 189--326. Springer, 2004.

\bibitem[Rockafellar and Uryasev(2002)]{rockafellar2002conditional}
R~Tyrrell Rockafellar and Stanislav Uryasev.
\newblock Conditional value-at-risk for general loss distributions.
\newblock \emph{Journal of Banking \& Finance}, 26\penalty0 (7):\penalty0
  1443--1471, 2002.

\bibitem[Rothe(2010)]{rothe2010nonparametric}
Christoph Rothe.
\newblock Nonparametric estimation of distributional policy effects.
\newblock \emph{Journal of Econometrics}, 155\penalty0 (1):\penalty0 56--70,
  2010.

\bibitem[Rubin(1974)]{rubin1974estimating}
Donald~B Rubin.
\newblock Estimating causal effects of treatments in randomized and
  nonrandomized studies.
\newblock \emph{Journal of Educational Psychology}, 66\penalty0 (5):\penalty0
  688, 1974.

\bibitem[Schulte et~al.(2014)Schulte, Tsiatis, Laber, and
  Davidian]{schulte2014q}
Phillip~J Schulte, Anastasios~A Tsiatis, Eric~B Laber, and Marie Davidian.
\newblock Q-and a-learning methods for estimating optimal dynamic treatment
  regimes.
\newblock \emph{Statistical Science: a review Journal of the Institute of
  Mathematical Statistics}, 29\penalty0 (4):\penalty0 640, 2014.

\bibitem[Silverman(1986)]{silverman1986density}
Bernard~W Silverman.
\newblock \emph{Density estimation for statistics and data analysis},
  volume~26.
\newblock CRC press, 1986.

\bibitem[Stone(1977)]{stone1977consistent}
Charles~J Stone.
\newblock Consistent nonparametric regression.
\newblock \emph{The annals of statistics}, pages 595--620, 1977.

\bibitem[Tsiatis(2007)]{tsiatis2007semiparametric}
Anastasios Tsiatis.
\newblock \emph{Semiparametric theory and missing data}.
\newblock Springer Science \& Business Media, 2007.

\bibitem[Tsybakov et~al.(2004)]{tsybakov2004optimal}
Alexander~B Tsybakov et~al.
\newblock Optimal aggregation of classifiers in statistical learning.
\newblock \emph{The Annals of Statistics}, 32\penalty0 (1):\penalty0 135--166,
  2004.

\bibitem[van~der Laan and Luedtke(2014)]{van2014targeted}
Mark~J van~der Laan and Alexander~R Luedtke.
\newblock Targeted learning of an optimal dynamic treatment, and statistical
  inference for its mean outcome.
\newblock 2014.

\bibitem[{van der Laan} and Robins(2003)]{van2003unified}
Mark~J {van der Laan} and James~M Robins.
\newblock \emph{Unified methods for censored longitudinal data and causality}.
\newblock Springer Science \& Business Media, 2003.

\bibitem[van~der Vaart(2000)]{van2000asymptotic}
Aad~W van~der Vaart.
\newblock \emph{Asymptotic statistics}, volume~3.
\newblock Cambridge university press, 2000.

\bibitem[van~der Vaart(2002)]{van2002semiparametric}
Aad~W van~der Vaart.
\newblock Semiparametric statistics.
\newblock \emph{Lecture Notes in Math.}, \penalty0 (1781), 2002.

\bibitem[Wang et~al.(2018)Wang, Zhou, Song, and Sherwood]{wang2018quantile}
Lan Wang, Yu~Zhou, Rui Song, and Ben Sherwood.
\newblock Quantile-optimal treatment regimes.
\newblock \emph{Journal of the American Statistical Association}, 113\penalty0
  (523):\penalty0 1243--1254, 2018.

\bibitem[Wasserman(2006)]{wasserman2006all}
Larry Wasserman.
\newblock \emph{All of nonparametric statistics}.
\newblock Springer Science \& Business Media, 2006.

\bibitem[Xiao et~al.(2019)Xiao, Zhang, and Lu]{xiao2019robust}
Wei Xiao, Hao~Helen Zhang, and Wenbin Lu.
\newblock Robust regression for optimal individualized treatment rules.
\newblock \emph{Statistics in medicine}, 38\penalty0 (11):\penalty0 2059--2073,
  2019.

\bibitem[Zhang et~al.(2012)Zhang, Tsiatis, Laber, and
  Davidian]{zhang2012robust}
Baqun Zhang, Anastasios~A Tsiatis, Eric~B Laber, and Marie Davidian.
\newblock A robust method for estimating optimal treatment regimes.
\newblock \emph{Biometrics}, 68\penalty0 (4):\penalty0 1010--1018, 2012.

\bibitem[Zhang et~al.(2013)Zhang, Tsiatis, Laber, and
  Davidian]{zhang2013robust}
Baqun Zhang, Anastasios~A Tsiatis, Eric~B Laber, and Marie Davidian.
\newblock Robust estimation of optimal dynamic treatment regimes for sequential
  treatment decisions.
\newblock \emph{Biometrika}, 100\penalty0 (3):\penalty0 681--694, 2013.

\bibitem[Zhang et~al.(2021)Zhang, Troxel, and Petkova]{zhang2021robust}
Jinchun Zhang, Andrea~B Troxel, and Eva Petkova.
\newblock Robust index of confidence weighted learning for optimal
  individualized treatment rule estimation.
\newblock \emph{Stat}, 10\penalty0 (1):\penalty0 e374, 2021.

\bibitem[Zhao et~al.(2012)Zhao, Zeng, Rush, and Kosorok]{zhao2012estimating}
Yingqi Zhao, Donglin Zeng, A~John Rush, and Michael~R Kosorok.
\newblock Estimating individualized treatment rules using outcome weighted
  learning.
\newblock \emph{Journal of the American Statistical Association}, 107\penalty0
  (499):\penalty0 1106--1118, 2012.

\bibitem[Zheng and van~der Laan(2010)]{zheng2010asymptotic}
Wenjing Zheng and Mark~J van~der Laan.
\newblock Asymptotic theory for cross-validated targeted maximum likelihood
  estimation.
\newblock 2010.

\end{thebibliography}

\clearpage
\appendix
\section{Proofs}
\label{sec:proofs}

\subsection{Proofs in Section~\ref{sec:policy_comparision}}

\begin{proof}[Proof of Proposition~\ref{prop:ate-acme-variance}] 
Pick $x \in \mathcal{X}$. 
For all $a \in \{0, 1\}$, 
we have $|\mu_a(x) - m_a(x)| \leq \sigma_{a}(x)$. 
If $\mu_1(x) > \mu_0(x)$, 
then $m_1(x) \geq \mu_1(x) - \sigma_1(x) 
>  \mu_0(x) + \sigma_0(x) \geq m_0(x)$, 
where the second inequality holds because of the assumption 
that $|\mu_1(x) - \mu_0(x)| > \sigma_1 (x) + \sigma_0 (x)$.
Similarly, 
$m_1(x) > m_0(x)$  
implies that $\mu_1(x) \geq m_1(x) - \sigma_1(x) 
>  m_0(x) + \sigma_0(x) \geq \mu_0(x)$.
Thus, we have that $\mu_1\left(x\right) > \mu_0\left(x \right) \Leftrightarrow
m_1 \left( x \right) > m_0 \left(x \right)$, which completes the proof.
\end{proof}

\begin{proof}[Proof of Proposition~\ref{prop:GoM}]
By definition, the marginal median optimal policy $d_\text{MME}^*$ satisfies that for all $d \in \mathcal{D}$, $m(Y^{d_\text{MME}^*}) \geq m(Y^d)$. 
We use a tuple $(a_1, a_0)$ to represent a policy $d$ where $a_i = d(i)$,
which gives us the following table:
\begin{table}[H]
\centering
\begin{tabular}{@{}cc@{}}
\toprule
$(a_0, a_1)$     & $m(Y^d)$ \\ \midrule
$(0,1)$ &    $\frac{\sigma_{1}(1)\mu_{0}(0) + \sigma_{0}(0)\mu_{1}(1)}{\sigma_{0}(0) + \sigma_{1}(1)}$      \\ \midrule
$(1,1)$ &       $\frac{\sigma_{1}(1)\mu_{1}(0) + \sigma_{1}(0)\mu_{1}(1)}{\sigma_{1}(0) + \sigma_{1}(1)}$   \\ \midrule
$(0,0)$ &       $\frac{\sigma_{0}(1)\mu_{0}(0) + \sigma_{0}(0)\mu_{0}(1)}{\sigma_{0}(0) + \sigma_{0}(1)}$   \\ \midrule
$(1,0)$ &      $\frac{\sigma_{0}(1)\mu_{1}(0) + \sigma_{1}(0)\mu_{0}(1)}{\sigma_{1}(0) + \sigma_{0}(1)}$    \\ \bottomrule
\end{tabular}
\caption{Marginal medians of all possible policies in the setting described in Section~\ref{sec:d_acme_d_mte}.}
\end{table}

Given $\sigma_1(1) > \sigma_0(1)$ and 
$\mu_1(1) > \mu_0(1)$ and $\frac{\mu_1(1)}{\mu_0(1)} < \frac{\sigma_1(1)}{\sigma_0(1)}$,
when 
we pick $\{\mu_a(0), \sigma_a(0)\}_{a=0}^1$ such that 
for $a =0$ and $a=1$, 
\begin{align}
    &\left(\sigma_1(1) - \sigma_0(1) \right)\mu_a(0)
    + \left(\mu_1(1) - \mu_0(1) \right) \sigma_a(0)
    > \sigma_1(1) \mu_0(1) - \sigma_0(1) \mu_1(1),
\label{eq:mte_opt_criteria}
\end{align}
the optimal policy $d^*_\text{MME}$ will assign the decision for $x=1$ according to  
$d^*_\text{MME}(1) = \mathbb{1}\{\mu_1(1) > \mu_0(1)\}$
since the conditions have ensured that 
$m(Y^{(1,1)}) > m(Y^{(1,0)})$ 
and $m(Y^{(0,1)}) > m(Y^{(0,0)})$. 
On the other hand, if~\eqref{eq:mte_opt_criteria} does not hold
for $a =0$ and $a=1$, 
then the marginal median optimal policy 
$d^*_\text{MME}(1) = \mathbb{1}\{\mu_1(1) \leq \mu_0(1)\}$. 
\end{proof}

\subsection{Proofs in Section~\ref{sec:efficiency}}
\label{appendix:efficiency-proofs}
\paragraph{Derivation of the influence function under discrete covariates and continuous outcome.}
Let $\text{IF}(\cdot)$ denote the operator that returns the  influence function of an input functional. 
In this analysis, we assume $X$ to be discrete and $Y$ to be continuous. 
Let $p(x)$ denote the probability mass function of $X$. 
We start with the chain rule of $\text{IF}(\cdot)$.
\begin{align*}
    &\phi_d(Z) = \text{IF}(\psi_d(\mathbb{P}))  
    = \text{IF}\left(\sum_{x \in \mathcal{X}} p(x) m_d(x) \right)\\
    &= \sum_{x \in \mathcal{X}} p(x) \Big\{\text{IF}(m_1(x)) d(x)  + \text{IF}(m_0(x))(1-d(x)) \Big\} 
    +\text{IF}(p(x)) m_d(x),
\end{align*}

where 
$\text{IF}(p(x)) = \mathbb{1}\{X=x\} - p(x)$. 
Thus, it suffices to find $\text{IF}(m_a(x))$ where $a \in \{0,1\}$. 
 
Let $\delta_z$ denote the Dirac measure at $z$.
We use the submodel $\mathbb{P}_\epsilon(z) = (1-\epsilon)\mathbb{P}(z) + \delta_{z'}$ for some $z' = (x', a', y')$ to find the influence function, which gives that 
\begin{align*}
    f_\epsilon(y|a,x) = \frac{(1-\epsilon) f(y|x,a)\mathbb{P}(X=x, A=a) + \epsilon \delta_{z'}}{(1-\epsilon) \mathbb{P}(X=x, A=a) + \epsilon \mathbb{1}\{x=x', a=a'\}}, 
\end{align*}
where $f_\epsilon$ and $f$ are densities with respect to 
$\mathbb{P}_\epsilon$ and $\mathbb{P}$ respectively. 
Let $m_{a,\epsilon}(x)$ denote the median of the distribution $\mathbb{P}_\epsilon(Y|X=x, A=a)$.
By the definition of the median, we have that
\begin{align*}
     F_\epsilon(m_{a,\epsilon}(x) |x, a) &= \int_{y \leq  m_{a,\epsilon}(x)} f_\epsilon (y|x, a) dy \\
     &=\int_{y \leq  m_{a,\epsilon}(x)}  \frac{(1-\epsilon) f(y|x,a)\mathbb{P}(X=x,A=a) + \epsilon \delta_{z'}}{(1-\epsilon) \mathbb{P}(X=x, A=a) + \epsilon \mathbb{1}\{x=x', a=a'\}} dy = \frac{1}{2} 
\end{align*}

When $x \neq x'$ or $a \neq a'$, we have that 
\begin{align*}
    &\mathbb{P}(X=x, A=a) \int_{y \leq  m_{a,\epsilon}(x)}  f(y| x,a)   dy = \frac{\mathbb{P}(X=x, A=a)}{2}, 
\end{align*}
which suggests that $m_a(x) = m_{a, \epsilon}(x)$. 
When $x = x'$ and $a= a'$, we have that 
\begin{align}
    \int_{y \leq  m_{a,\epsilon}(x)}  \frac{(1-\epsilon) f(y|x,a)\mathbb{P}(X=x, A=a) + \epsilon \delta_{y'}}{(1-\epsilon) \mathbb{P}(X=x, A=a) + \epsilon} dy = \frac{1}{2}.
\label{eq:median-influence-fn-condition}
\end{align}
It has two cases 
\begin{itemize}
    \item When $y' > m_{a,\epsilon}(x)$, 
    \eqref{eq:median-influence-fn-condition} can be simplified into 
    \begin{align*}
       F(m_{a,\epsilon}(x)|x,a) = \int_{y \leq  m_{a,\epsilon}(x)}   f(y|x,a)dy = \frac{(1-\epsilon) \mathbb{P}(X=x, A=a) + \epsilon}{2(1-\epsilon)\mathbb{P}(X=x, A=a) },
    \end{align*}
    and $m_{a,\epsilon}(x) = F^{-1}\left( \frac{(1-\epsilon) \mathbb{P}(X=x, A=a) + \epsilon}{2(1-\epsilon)\mathbb{P}(X=x, A=a) } \bigg| x,a \right)$.
    \item When $y' \leq m_{a,\epsilon}(x)$, 
    ~\eqref{eq:median-influence-fn-condition} can be simplified into
    \begin{align*}
       F(m_{a,\epsilon}(x)|x,a) = \int_{y \leq  m_{a,\epsilon}(x)}   f(y|x,a)dy = \frac{(1-\epsilon) \mathbb{P}(X=x, A=a) - \epsilon}{2(1-\epsilon)\mathbb{P}(X=x, A=a) }, 
    \end{align*}
    and $m_{a,\epsilon}(x) = F^{-1}\left( \frac{(1-\epsilon) \mathbb{P}(X=x, A=a) - \epsilon}{2(1-\epsilon)\mathbb{P}(X=x, A=a) } \bigg| x,a \right)$.
\end{itemize}

Putting it altogether, we have that when $x \neq x'$ or $a \neq a'$:
\begin{align*}
    \text{IF}(m_{a}(x)) = \frac{d}{d \epsilon} m_{a,\epsilon}(x) \Big|_{\epsilon=0} = 0 =  \mathbb{1}\{x=x'\} \mathbb{1}\{a=a'\}.
\end{align*}

When  $x = x'$ and $a = a'$, we have that
\begin{align*}
   \text{IF}(m_{a}(x)) = \mathbb{1}\{x=x'\} \mathbb{1}\{a=a'\}\frac{d}{d \epsilon} m_{a,\epsilon}(x) \Big|_{\epsilon=0}
   =-\frac{\mathbb{1}\{x=x'\} \mathbb{1}\{a=a'\}}{\mathbb{P}(X=x)\mathbb{P}(A=a|X=x)}
   \frac{\mathbb{1}\{y' \leq m_{a}(x)\} - 1/2}{ f(m_{a}(x)|x,a)},
\end{align*}
where the last equality holds since for all $m_{a,\epsilon}(x) = F^{-1}(g(\epsilon)|x,a)$, 
\begin{align*}
    \frac{d}{d \epsilon} m_{a,\epsilon}(x) \Big|_{\epsilon=0}
    &= \left(\frac{1}{f(m_{a,\epsilon}(x) |x,a)}\frac{d}{d\epsilon}g(\epsilon) \right) \bigg|_{\epsilon=0},
\end{align*}
and for example when $g(\epsilon) = \frac{(1-\epsilon)\mathbb{P}(X=x,A=a)+\epsilon}{2(1-\epsilon)\mathbb{P}(X=x, A=a)}$, i.e., when $y' > F^{-1}\Big(\frac{(1-\epsilon)\mathbb{P}(X=x,A=a)+\epsilon}{2(1-\epsilon)\mathbb{P}(X=x, A=a)} \bigg| x, a \Big)$, 
gives that 
$
    \frac{d}{d \epsilon} g(\epsilon)
    = \frac{1}{2 \mathbb{P}(X=x, A=a)} \cdot \frac{d}{d \epsilon} \frac{\epsilon}{1-\epsilon} =  \frac{1}{2(1-\epsilon)^2 \mathbb{P}(X=x, A=a)}. 
$

Therefore, we have obtained that
\begin{align*}
    \phi_d(Z) &=\sum_{x \in \mathcal{X}} p(x) \Big\{ \text{IF}\left(m_1(x) \right)d(x) + \text{IF} \left(m_0(x) \right)(1-d(x)) \Big\}
    + \text{IF}(p(x)) m_{d}(X)\\
    &= \sum_{x \in \mathcal{X}} p(x) 
    \Big\{ \frac{\mathbb{1}\{X=x\} \mathbb{1}\{A=1\}}{\mathbb{P}(X=x)\mathbb{P}(A=1|X=x)}
   \frac{1/2 - \mathbb{1}\{Y \leq m_{1}(x)\}}{ f(m_{1}(x) |x,1)}d(x) \\
  & +\frac{\mathbb{1}\{X=x\} \mathbb{1}\{A=0\}}{\mathbb{P}(X=x)\mathbb{P}(A=0|X=x)}
   \frac{1/2 - \mathbb{1}\{Y \leq m_{0}(x)\}}{ f(m_{0}(x)|x,0)}
   (1-d(x)) \Big\}
    + (\mathbb{1}\{X=x\} - p(x)) m_{d}(x)\\   
    &=  \frac{Ad(X)}{\pi_1(X)}\frac{1/2 - \mathbb{1}\{Y \leq m_1(X) \}}{ f_{1,m}(X)} + \frac{(1-A)(1-d(X))}{\pi_0(X)}\frac{1/2 - \mathbb{1}\{Y \leq m_0(X) \}}{ f_{0,m}(X)} + m_{d}(X)- \psi_d.
\end{align*}
where the last equality follows since the indicator function $\mathbb{1}\{X=x\}$ picks out the values $x \in \mathcal{X}$ such that it equals $X$ and $p(x) = \mathbb{P}(X=x)$.

\begin{proof}[Proof of Lemma~\ref{lemma:decomposition}]
By definition, since $\int \phi_d(Z;\overline{\mathbb{P}}) d \overline{\mathbb{P}} = 0$, we have that 
\begin{align*}
    R_d(\overline{\mathbb{P}}, \mathbb{P})
    &= \psi_d(\overline{\mathbb{P}}) - \psi_d(\mathbb{P}) + \int \phi_d(Z;\overline{\mathbb{P}}) d \mathbb{P}\\
    &= - \psi_d(\mathbb{P}) 
    + \int_\mathcal{X} \frac{\pi_1(x) d(x)}{\overline{\pi}_1(x)}\frac{1/2 -  F_{1,\overline{m}_1}(x)}
    {\overline{f}_{1,\overline{m}}(x)} 
    + \frac{\pi_0(x)(1-d(x))}{\overline{\pi}_0(x)}\frac{1/2 - F_{0, \overline{m}}(x) }{\overline{f}_{0,\overline{m}}(x)} 
    + \overline{m}_{d}(x) d \mathbb{P}(x)\\
   &= \int_{\mathcal{X}} 
    d(x)\Big({\overline{m}_1(x) - m_1(x) - \frac{\pi_1(x)}{\overline{\pi}_1(x)}
    {\frac{F_{1,\overline{m}}(x) - {F}_{1,m}(x)  }{\overline{f}_{1,\overline{m}}(x)}}}
    \Big)\\
    &\qquad + (1 - d(x)) \Big({\overline{m}_0(x) -m_0(x) - \frac{\pi_0(x)}{\overline{\pi}_0(x)}
    \frac{F_{0, \overline{m}}(x) 
    - {F}_{0,m}(x)}{\overline{f}_{0,\overline{m}}(x)}}
    \Big)d \mathbb{P}(x), 
\end{align*}
where the second equality follows from 
iterated expectation, i.e., $\mathbb{E}_{\mathbb{P}}[\mathbb{1}\{Y \leq \overline{m}_a(X)\}|X=x,A=a] = F_{a,\overline{m}}(x)$, 
and 
the third equality follows from rearranging terms 
and the fact that $F_{a,m}(x) = 1/2$. 
\end{proof}

\begin{proof}[Proof of Corollary~\ref{corollary:R_d_product}]
Recall that for~\eqref{eq:R_d_raw}, we have   
\begin{align*}
    R_d(\overline{\mathbb{P}}, \mathbb{P})
    &= \int_{\mathcal{X}} 
    d(x)\Big(\underbrace{\overline{m}_1(x) - m_1(x) - \frac{\pi_1(x)}{\overline{\pi}_1(x)}
    \overbrace{\frac{F_{1,\overline{m}}(x) - {F}_{1,m}(x)  }{\overline{f}_{1,\overline{m}}(x)}}^{A_2}}_{A_1}
    \Big)\\
    &\qquad + (1 - d(x)) \Big(\underbrace{\overline{m}_0(x) -m_0(x) - \frac{\pi_0(x)}{\overline{\pi}_0(x)}
    \frac{F_{0, \overline{m}}(x) 
    - {F}_{0,m}(x)}{\overline{f}_{0,\overline{m}}(x)}}_{A_0}
    \Big)d \mathbb{P}(x), 
\end{align*}
For $x \in \mathcal{X}$ 
such that $f_a(y|x)$ is differentiable and $L$-Lipschitz continuous in  $y$, 
we can simplify $A_2$ as:
\begin{align*}
    A_2 
    &=\frac{F_{1,\overline{m}}(x) 
    - {F}_{1,{m}}(x)}
    {\overline{f}_{1,\overline{m}}(x)}\\ 
    &= \frac{1}{\overline{f}_{1,\overline{m}}(x)}
    \left( F_{1,m}(x)
    + f_{1,m}(x) 
    (\overline{m}_1(x) - m_1(x))
    + \frac{f_1'(c_{1}(x)|x)}{2} (\overline{m}_1(x) - m_1(x))^2 
    - F_{1,m}(x)\right)\\
    &=\frac{1}{\overline{f}_{1,\overline{m}}(x)}
    \left(f_{1,m}(x) 
    (\overline{m}_1(x) - m_1(x)) + 
     \frac{f'_{1,c}(x)}{2} (\overline{m}_1(x) - m_1(x))^2 \right), 
\end{align*}
where the second equality follows from Taylor's theorem
and $c_{1}(x)$ is a value in between $\overline{m}_1(x)$ and ${m}_1(x)$. 
To simplify the notations, in the third equality (and the following usage of Taylor's theorem), 
we denote the derivative of the conditional probability
density function $f_a(\cdot| X=x)$ at $c_{a}(x)$
to be $f'_{a,c}(x)$ 
where $c_{a}(x)$ is a value between $\overline{m}_a(x)$ and ${m}_a(x)$
and satisfies that 
$F_{a,\overline{m}}(x) = F_{a,m}(x) + f_{a,m}(x)(\overline{m}_a(x) - m_a(x))
+ f'_{a,c}(x)(\overline{m}_a(x) - m_a(x))^2/2$. 
This gives that 
\begin{align*}
    A_1 &= 
    \overline{m}_1(x) -m_1(x) 
    - \frac{\pi_1(x)}{\overline{\pi}_1(x)} 
   \frac{1}{\overline{f}_{1,\overline{m}}(x)}
    \left(f_{1,m}(x) 
    (\overline{m}_1(x) - m_1(x)) + 
     \frac{f'_{1,c}(x)}{2} (\overline{m}_1(x) - m_1(x))^2 \right)\\
    &=  \left(\overline{m}_1(x) -m_1(x) \right)
    \left(1 - \frac{\pi_1(x)}{\overline{\pi}_1(x)}  \frac{f_{1,m}(x)}
    {\overline{f}_{1, \overline{m}}(x)}\right) 
    - \frac{f'_{1,c}(x) \pi_1(x) (\overline{m}_1(x) - m_1(x))^2}{2\overline{\pi}_1(x)\overline{f}_{1,\overline{m}}(x)}.
\end{align*}

Applying the same logic to $A_0$, we obtain that 
\begin{align*}
    &R_d(\overline{\mathbb{P}}, \mathbb{P})
    = \int d(x) \left( \left(\overline{m}_1(x) -m_1(x) \right)
    \left(1 - \frac{\pi_1(x)}{\overline{\pi}_1(x)}  \frac{f_{1,m}(x)}
    {\overline{f}_{1, \overline{m}}(x)}\right) 
    - \frac{f'_{1,c}(x) \pi_1(x) (\overline{m}_1(x) - m_1(x))^2}{2\overline{\pi}_1(x)\overline{f}_{1,\overline{m}}(x)} \right)\\
    &+ (1-d(x)) \left(
    \left(\overline{m}_0(x) -m_0(x) \right)
    \left(1 - \frac{\pi_0(x)}{\overline{\pi}_0(x)}  \frac{f_{0,m}(x)}
    {\overline{f}_{0, \overline{m}}(x)}\right) 
    - \frac{f'_{0,c}(x) \pi_0(x) (\overline{m}_0(x) - m_0(x))^2}{2 \overline{\pi}_0(x) \overline{f}_{0,\overline{m}}(x)} \right)
    d \mathbb{P}(x)\\
    &= \mathbb{P}\Bigg\{ 
    \frac{d}{\overline{\pi}_1  \overline{f}_{1,\overline{m}}} 
    \left(
    \left(\overline{m}_1 -m_1 \right)
    \left(\overline{\pi}_1  \overline{f}_{1,\overline{m}} - 
    \pi_1 f_{1,m}
    \right) 
    - \frac{f'_{1,c} \pi_1 (\overline{m}_1 - m_1)^2}{2}\right) \\
    &\qquad + \frac{(1 - d)}{ \overline{\pi}_0  \overline{f}_{0,\overline{m}}} 
    \left(
    \left(\overline{m}_0 -m_0 \right)
    \left( \overline{\pi}_0  \overline{f}_{0,\overline{m}} - \pi_0 f_{0,m}\right) 
    -  \frac{f'_{0,c} \pi_0 (\overline{m}_0 - m_0)^2}{2 }\right)
    \Bigg\}\\
    &= \mathbb{P}\Bigg\{ 
    \frac{d\left(\overline{m}_1 -m_1 \right)}{\overline{\pi}_1  \overline{f}_{1,\overline{m}}} 
    \left(
    \left(\overline{\pi}_1 - \pi_1 \right)  \overline{f}_{1,\overline{m}} 
    +  
    \left(  \overline{f}_{1,\overline{m}} - f_{1,m}\right) \pi_1
    - (\overline{m}_1 - m_1)\frac{f'_{1,c} \pi_1 }{2}\right) \\
    &\qquad + \frac{(1 - d)\left(\overline{m}_0 -m_0 \right)}{\overline{\pi}_0  \overline{f}_{0,\overline{m}}} 
    \left(
    (\overline{\pi}_0 - \pi_0)  \overline{f}_{0,\overline{m}} 
    +(\overline{f}_{0,\overline{m}} - f_{0,m})\pi_0 
    - (\overline{m}_0 - m_0) \frac{f'_{0,c} \pi_0 }{2 }\right) \Bigg\}.
\end{align*}
\end{proof}

\begin{proof}[Proof of Corollary~\ref{corollary:influence-function-calculation}]
Consider a parametric submodel $\mathbb{P}_\epsilon$, 
e.g., the probability density functions
$p_\epsilon$ and $p$ of $\mathbb{P}_\epsilon$ and $\mathbb{P}$ 
satisfy that 
$p_\epsilon(z) = p(z) (1 +\epsilon h(z))$ where $\mathbb{E}[h] = 0$, 
$\|h\|_\infty < M$ and $\epsilon > 1/M$.  
To show that $\phi_d$ is an influence function, 
it suffices to show that the mean-zero $\phi_d$ satisfies path-wise differentiability, 
i.e., 
\begin{align*}
    \frac{d \psi_d(\mathbb{P}_\epsilon)}{d \epsilon} \Bigg|_{\epsilon=0}
    = \int_Z \phi_d(Z;\mathbb{P}) \left( \frac{d}{d \epsilon} \log d\mathbb{P}_\epsilon \right)\Bigg|_{\epsilon = 0} d\mathbb{P}.
\end{align*}
As shown in Lemma~\ref{lemma:decomposition}, we have that 
$ \psi_d(\mathbb{P}_\epsilon) = \psi_d(\mathbb{P}) + \int_Z \phi_d(Z;\mathbb{P}) d (\mathbb{P}_\epsilon - \mathbb{P}) - R_d(\mathbb{P}, \mathbb{P}_\epsilon)$, 
which gives that 
\begin{align*}
    \frac{d \psi_d(\mathbb{P}_\epsilon)}{d \epsilon} \Bigg|_{\epsilon=0}
    &=  \frac{d}{d \epsilon}\int \phi_d(Z;\mathbb{P}) d \mathbb{P}_\epsilon \Bigg|_{\epsilon = 0}
    - \; \frac{d}{d \epsilon}R_d(\mathbb{P}, \mathbb{P}_\epsilon) \Bigg|_{\epsilon = 0}\\
    &=\int_Z \phi_d(Z;\mathbb{P}) \left( \frac{d}{d \epsilon} \log d\mathbb{P}_\epsilon \right)\Bigg|_{\epsilon = 0} d\mathbb{P}_\epsilon 
    - \; \frac{d}{d \epsilon}R_d(\mathbb{P}, \mathbb{P}_\epsilon) \Bigg|_{\epsilon = 0}.
\end{align*}
Finally, as shown in Corollary~\ref{corollary:R_d_product}, 
$R_d(\mathbb{P}, \mathbb{P}_\epsilon)$ only contains second-order products of errors between $\mathbb{P}$ and $\mathbb{P}_\epsilon$ and thus using the product rule on $\frac{d}{d \epsilon}R_d(\mathbb{P}, \mathbb{P}_\epsilon)$, one can get that  
\begin{align*}
    \frac{d}{d \epsilon}R_d(\mathbb{P}, \mathbb{P}_\epsilon) \Bigg|_{\epsilon = 0} = 0.
\end{align*}
Since our model is nonparametric, there is only one  influence function which is efficient~\citep{bickel1993efficient,tsiatis2007semiparametric}. Thus, $\phi_d$ is the efficient influence function of $\psi_d$.
\end{proof}

\begin{proof}[Proof of Theorem~\ref{thm:variance}] 
For a given policy $d \in \mathcal{D}$, we have that 
\begin{align*}
    &\text{Var}\left\{ \phi_d (Z)\right\} 
    = \mathbb{E}[\phi_d(Z)^2] = \mathbb{E}[(\xi_d(Z) - \psi_d)^2] \\
    =& \mathbb{E}\left\{
    \left(\frac{Ad(X)}{\pi_1(X)}\frac{1/2 - \mathbb{1}\{Y \leq m_1(X) \}}{ f_{1,m}\left(X\right)} \right)^2
    + \left(\frac{(1-A)(1-d(X))}{\pi_0(X)}\frac{1/2 - \mathbb{1}\{Y \leq m_0(X) \}}{ 
    f_{0,m}\left(X \right)}\right)^2 + 
    \left(m_{d}(X)- \psi_d\right)^2
    \right\} \\
    =& \mathbb{E}\left\{
    \left(\frac{Ad(X)}{\pi_1(X)}\frac{1/2 - \mathbb{1}\{Y \leq m_1(X) \}}{ f_{1,m}\left(X\right)} \right)^2 \right\}
    + \mathbb{E}\left\{\left(\frac{(1-A)(1-d(X))}{\pi_0(X)}\frac{1/2 - \mathbb{1}\{Y \leq m_0(X) \}}{ 
    f_{0,m}\left(X\mid X, 0\right)}\right)^2 \right\}
    + \text{Var}\{m_d(X)\}%
    \\
    =& \mathbb{E}_{X,A}\left\{
    \frac{Ad(X)}{\pi_1^2(X)}\frac{\mathbb{E}_{Y|X,A}\left[\left(1/2 - \mathbb{1}\{Y \leq m_1(X) \}\right)^2\right]}
    { f_{1,m}^2\left(X\right)}  \right\}\\
    &\qquad + \mathbb{E}_{X,A}\left\{
    \frac{(1-A)(1-d(X))}{\pi_0(X)^2}
    \frac{\mathbb{E}_{Y|X,A}\left[\left(1/2 - \mathbb{1}\{Y \leq m_0(X) \}\right)^2 \right]}{ 
    f_{0,m}\left(X\right)}\right\}
    + \text{Var}\{m_d(X)\}\\
    =& \mathbb{E}_X\left\{ 
    \frac{d(X)\text{Var}\{\mathbb{1}\{Y \leq m_1(X) | X, A=1\}\}}{\pi_1(X)f_{1,m}^2(X)} 
    + \frac{(1 - d(X))\text{Var}\{\mathbb{1}\{Y \leq m_0(X) \}|X, A=0\}}{\pi_0(X)f_{0,m}^2(X)} 
    \right\} 
     + \text{Var}\{m_d(X)\}\\
    =&  \mathbb{E}\left\{ 
    \frac{d(X)}{4\pi_1(X)f_{1,m}^2(X)} 
    + \frac{(1 - d(X))}{4 \pi_0(X) f_{0,m}^2(X)} \right\} 
    + \text{Var}\left\{ m_d(X) \right\},
\end{align*}
where the first equality is true since $\mathbb{E}[\phi_d(Z)] = 0$, 
the forth equality uses the fact $A^2 = A$, 
the fifth equality holds as $\mathbb{E}_{Y|X,A=a}[\mathbb{1}\{Y\leq m_a(X)\}] = 1/2$, 
and the last equality is true since $\text{Var}\{\mathbb{1}\{Y \leq m_1(X) |X, A=1\} = \mathbb{P}(Y \leq m_1(X) |X, A=1) \left(1 - \mathbb{P} (Y \leq m_1(X)|X, A=1) \right) = 1/4$. 
\end{proof}

\subsection{Proofs in Section~\ref{sec:estimation}}
\label{appendix:estimator_proof}

\begin{proof}[Proof of Theorem~\ref{thm:fixed_policy_estimator}] 
We start with the following decomposition  
\begin{align*}
    \widehat{\psi}_{d, \text{dr}} -  \psi_d
    &= \mathbb{P}_n \xi_d(\widehat{\mathbb{P}}) - \mathbb{P} \xi_d(\widehat{\mathbb{P}})  
    = \underbrace{\big(\mathbb{P}_n - \mathbb{P} \big) \left(\xi_d(\widehat{\mathbb{P}}) - \xi_d(\mathbb{P})\right)}_{T_1} 
    + \underbrace{\big(\mathbb{P}_n - \mathbb{P} \big) \xi_d(\mathbb{P})}_{T_2} 
    + \underbrace{\mathbb{P}\left(\xi_d(\widehat{\mathbb{P}}) - \xi_d(\mathbb{P}) \right)}_{T_3}. 
\end{align*}

When $\widehat{\xi}_d$ is learned from a separate sample from the empirical measure $\mathbb{P}_n$, 
by Lemma~\ref{lemma:sample_splitting}, we obtain that 
\begin{align*}
    T_1 = O_\mathbb{P}\left( \frac{\|\widehat{\xi}_d - \xi_d\|}{\sqrt{n}}\right) = o_\mathbb{P}(1/\sqrt{n}),
\end{align*}
where the last equality follows from our assumption that $\|\widehat{\xi}_d - \xi_d\| = o_\mathbb{P}(1)$. 
On the other hand, 
when ${\xi}_d$ and $\widehat{\xi}_d$ are contained in a Donsker class 
and $\|\widehat{\xi}_d - \xi_d\| = o_\mathbb{P}(1)$, 
by~\citet[Lemma 19.24]{van2000asymptotic}, we have that 
$T_1 = o_\mathbb{P}(1/\sqrt{n})$. 
Since $\mathbb{P}\xi_d - \psi_d = 0$, we have that 
\begin{align*}
    T_3 &= \mathbb{P}\left(\widehat{\xi}_d - \xi_d \right)
    = \mathbb{P} \Bigg \{
    \frac{\pi_1 d}{\widehat{\pi}_1}\frac{F_{1,m} - F_{1,\widehat{m}} }{ \widehat{f}_{1,\widehat{m}}} + \frac{\pi_0 (1 - d)}{\widehat{\pi}_0}\frac{F_{0,m} - F_{0,\widehat{m}} }{ \widehat{f}_{0, \widehat{m}}}
    + \widehat{m}_{d} - m_{d}
    \Bigg\} \\
    &= \mathbb{P} \Bigg \{d 
    \left( \widehat{m}_1 - m_1 - 
        \frac{\pi_1 }{\widehat{\pi}_1 }\frac{ F_{1,\widehat{m}} - F_{1,m} }{ \widehat{f}_{1,\widehat{m}}}
        \right)
        + (1 - d)
        \left( \widehat{m}_0 - m_0 - 
        \frac{\pi_0}{\widehat{\pi}_0}\frac{ F_{0,\widehat{m}} - F_{0,m}  }{ \widehat{f}_{0, \widehat{m}}}
        \right)
    \Bigg\}\\
&= R_{d} \left(\widehat{\mathbb{P}}, \mathbb{P} \right),
\end{align*}
where the second equality holds since $\mathbb{P}\xi_d = \psi_d = \mathbb{P} \{m_d\} $ and 
$\mathbb{E}_{Y|X,A=a}[\mathbb{1}\{Y\leq \widehat{m}_a(X)\}|X,A=a] = F_{a,\widehat{m}}(X)$.
Finally, using Lemma~\ref{corollary:R_d_product}, we obtain that  
\begin{align*}
    R_d(\widehat{\mathbb{P}}, \mathbb{P})
    &= \mathbb{P}\Bigg\{ 
    \frac{d\left(\widehat{m}_1 -m_1 \right)}{\widehat{\pi}_1  \widehat{f}_{1,\widehat{m}}} 
    \left(
    \left(\widehat{\pi}_1 - \pi_1 \right)  \widehat{f}_{1,\widehat{m}} 
    +  
    \left(  \widehat{f}_{1,\widehat{m}} - f_{1,m}\right) \pi_1
    - (\widehat{m}_1 - m_1)\frac{f'_{1,c} \pi_1 }{2}\right) \\
    & + \frac{(1 - d)\left(\widehat{m}_0 -m_0 \right)}{\widehat{\pi}_0  \widehat{f}_{0,\widehat{m}}} 
    \left(
    (\widehat{\pi}_0 - \pi_0)  \widehat{f}_{0,\widehat{m}} 
    +(\widehat{f}_{0,\widehat{m}} - f_{0,m})\pi_0
    - (\widehat{m}_0 - m_0) \frac{f'_{0,c} \pi_0 }{2 }\right) \Bigg\}\\
    &= O_{\mathbb{P}}\left(
    \sum_{a=0}^1 \left\|\widehat{m}_a - m_a\right\| \; 
    \left( \left\|\widehat{\pi}_a \widehat{f}_{a,\widehat{m}}  - \pi_a {f}_{a, m} \right\| 
    + \|\widehat{m}_a - m_a\| 
   \right) \right). 
\end{align*}
\end{proof}

\begin{proof}[Proof of Corollary~\ref{corollary:fixed-policy-corollary}]
Following from the proof of Theorem~\ref{thm:fixed_policy_estimator}, 
we have that 
$\widehat{\psi}_{d, \text{dr}} - \psi_d = T_1 + T_2 + T_3$ where 
$T_1 = o_{\mathbb{P}}(1/\sqrt{n})$. 
From Theorem~\ref{thm:variance}, 
we have that $\text{Var}\{\xi_d\} = \text{Var}\{\phi_d\} = \sigma_d^2 < +\infty$ since $\mathbb{P} \in \mathcal{P}$. 
By Central Limit Theorem, we have that $T_2 = o_\mathbb{P}(1/\sqrt{n})$ and  
\begin{align*}
    \sqrt{n} T_2 = \sqrt{n} (\mathbb{P}_n - \mathbb{P}) \xi_d(\mathbb{P}) 
    = \sqrt{n} \left( \frac{1}{n} \sum_{i=1}^n \xi_d(Z_i) -  \mathbb{E}[\xi_d(Z)] \right) \leadsto \mathcal{N}(0, \text{Var}(\xi_d)).
\end{align*}
Using Slutsky's theorem and the assumptions that ensure $T_3 = o_{\mathbb{P}}(1/\sqrt{n})$, 
we have 
\begin{align*}
    \sqrt{n}(\widehat{\psi}_{d, \text{dr}} - \psi_d)
    \leadsto \mathcal{N}(0, \sigma_d^2). 
\end{align*}
\end{proof}

\begin{proof}[Proof of Theorem~\ref{thm:learned_policy_estimator}]
For any $d \in \mathcal{D}$,
by the fact that $\mathbb{P}\xi_d = \psi_d$, %
we have that 
\begin{align*}
    \widehat{\psi}_{\widehat{d}^*, \text{dr}} -  \psi_{d^*}
    &= \mathbb{P}_n \widehat{\xi}_{\widehat{d}^*}
    -  \mathbb{P} {\xi}_{\widehat{d}^*}
    +  \mathbb{P} {\xi}_{\widehat{d}^*}
    - \mathbb{P} \xi_{d^*} \\
    &=\mathbb{P}_n \widehat{\xi}_{\widehat{d}^*}
    -  \mathbb{P} {\xi}_{\widehat{d}^*} + \mathbb{P} \gamma (\widehat{d}^* - d^*),
\end{align*}
where the second equality uses that 
\begin{align*}
    \mathbb{P} {\xi}_{\widehat{d}^*}
    - \mathbb{P} \xi_{d^*} 
    &= \mathbb{P} \Bigg \{
    \frac{A \left( \widehat{d}^*(X) - d^*(X)\right)}{{\pi_1}(X)}\frac{1/2 - \mathbb{1}\{Y \leq {m}_1(X) \}}{ {f}_{1,{m}}\left(X \right)} \\
&\qquad + \frac{(1-A)(d^*(X)-\widehat{d}^*(X))}{\pi_0(X)}\frac{1/2 - \mathbb{1}\{Y \leq {m}_0(X) \}}{ {f}_{0, {m}}\left(X\right)}
+ {m}_{\widehat{d}^*}(X) - m_{d^*}(X)
\Bigg\} \\
&= \mathbb{P}\left\{ m_1 (\widehat{d}^* - d^*)
+ m_0 (d^* - \widehat{d}^*) \right\}= \mathbb{P} \gamma (\widehat{d}^* - d^*).
\end{align*}
The second equality above comes from the fact that  $\mathbb{E}_{Y|X,A=a}[\mathbb{1}\{Y \leq m_a(X)\}] = 1/2$. 
Further, by Lemma~\ref{lemma:policy_diff} and the fact that $d^* = \mathbb{1}\{\gamma > 0\}$, we obtain that 
\begin{align*}
\mathbb{P} \gamma (\widehat{d}^* - d^*)
&\leq \mathbb{P} \gamma \mathbb{1}\{|\gamma| \leq |\widehat{\gamma} - \gamma|\}
\leq  \mathbb{P} |\widehat{\gamma} - \gamma|  \mathbb{1}\{|\gamma| \leq |\widehat{\gamma} - \gamma|\} \\
&\leq \|\widehat{\gamma} - \gamma\|_\infty \mathbb{P}\left(|\gamma| \leq \|\widehat{\gamma} - \gamma\|_\infty \right)
\lesssim  \|\widehat{\gamma} - \gamma\|^{1+\alpha}_\infty,
\end{align*}
where the last inequality follows from the margin condition.
On the other hand, 
similar to the proof of Theorem~\ref{thm:fixed_policy_estimator}, 
we use the decomposition 
$$
\mathbb{P}_n \widehat{\xi}_{\widehat{d}^*}
-  \mathbb{P} {\xi}_{\widehat{d}^*}
= \left(\mathbb{P}_n 
- \mathbb{P}\right) 
\left( \widehat{\xi}_{\widehat{d}^*} -  {\xi}_{{d}^*}\right)
+ \left( \mathbb{P}_n 
- \mathbb{P}\right)  {\xi}_{{d}^*} 
+ \mathbb{P} \left( \widehat{\xi}_{\widehat{d}^*}
-  {\xi}_{\widehat{d}^*} \right).
$$ 
By the fact that $\mathbb{P}\phi_{\widehat{d}^*} = \psi_{\widehat{d}^*} + \mathbb{P}\xi_{\widehat{d}^*} = 0$, we have that 
\begin{align*}
    \mathbb{P} \left( \widehat{\xi}_{\widehat{d}^*}
    -   {\xi}_{\widehat{d}^*} \right)
&= \mathbb{P} \Bigg \{
    \frac{\pi_1 \widehat{d}^*}{\widehat{\pi}_1}\frac{F_{1,m} - F_{1,\widehat{m}} }{ \widehat{f}_{1,\widehat{m}}} + \frac{\pi_0 (1 - \widehat{d}^*)}{\widehat{\pi}_0}\frac{F_{0,m} - F_{0,\widehat{m}} }{ \widehat{f}_{0, \widehat{m}}}
+ \widehat{m}_{\widehat{d}^*} - m_{\widehat{d}^*}
\Bigg\} \\
&= \mathbb{P} \Bigg \{\widehat{d}^* 
\left( \widehat{m}_1 - m_1 - 
    \frac{\pi_1 }{\widehat{\pi}_1 }\frac{ F_{1,\widehat{m}} - F_{1,m} }{ \widehat{f}_{1,\widehat{m}}}
    \right)
    + (1 - \widehat{d}^*)
    \left( \widehat{m}_0 - m_0 - 
    \frac{\pi_0}{\widehat{\pi}_0}\frac{ F_{0,\widehat{m}} - F_{0,m}  }{ \widehat{f}_{0, \widehat{m}}}
    \right)
\Bigg\}\\
&= R_{\widehat{d}^*} \left(\widehat{\mathbb{P}}, \mathbb{P} \right) = 
O_{\mathbb{P}}\left(
    \sum_{a=0}^1 \left\|\widehat{m}_a - m_a\right\| \; 
    \left( \left\|\widehat{\pi}_a \widehat{f}_{a,\widehat{m}} - \pi_a {f}_{a, m}\right\| 
    + \|\widehat{m}_a - m_a\| 
    \right) \right).\\
\end{align*}
Finally, for the %
term 
$\left(\mathbb{P}_n 
- \mathbb{P}\right) 
\left( \widehat{\xi}_{\widehat{d}^*} -  {\xi}_{{d}^*}\right)$: 
when $\widehat{\xi}_{\widehat{d}^*}$ 
is estimated from a separate sample $Z^{n,0}$ from the empirical measure over $Z^n$, by Lemma~\ref{lemma:sample_splitting}
we have that 
\begin{align*}
    \left(\mathbb{P}_n 
- \mathbb{P}\right) 
\left( \widehat{\xi}_{\widehat{d}^*} -  {\xi}_{{d}^*}\right)
= O_{\mathbb{P}}\left(\frac{\|\widehat{\xi}_{\widehat{d}^*} -  {\xi}_{{d}^*}\|}{\sqrt{n}} \right)
= o_\mathbb{P}(1/\sqrt{n}),
\end{align*}
where the last equality follows from our assumption that $\|\widehat{\xi}_{\widehat{d}^*} -  {\xi}_{{d}^*}\| = o_\mathbb{P}(1)$. 
When $\xi_{{d}^*}$ and $\widehat{\xi}_{\widehat{d}^*}$ are contained in a Donsker class
and $\|\widehat{\xi}_{\widehat{d}^*} - \xi_{{d}^*}\|=o_\mathbb{P}(1)$, by~\citet[Lemma 19.24]{van2000asymptotic}, we have that 
$\left(\mathbb{P}_n 
- \mathbb{P}\right) 
\left( \widehat{\xi}_{\widehat{d}^*} -  {\xi}_{{d}^*}\right) = o_\mathbb{P}(1/\sqrt{n})$. 
\end{proof}

\begin{proof}[Proof of Corollary~\ref{corollary:learned-policy-corollary}]
From Theorem~\ref{thm:learned_policy_estimator}, 
we have that 
\begin{align*}
    \widehat{\psi}_{\widehat{d}^*, \text{dr}} -  \psi_{d^*} = 
    \left(\mathbb{P}_n 
        - \mathbb{P}\right) {\xi}_{{d}^*}
    + O_{\mathbb{P}}\Big(
    & o_\mathbb{P}(1/\sqrt{n}) + \\
    & \underbrace{ \|\widehat{\gamma} - \gamma\|^{1+\alpha}_\infty
     + 
    \sum_{a=0}^1 \left\|\widehat{m}_a - m_a\right\| \; 
    \left( 
    \|\widehat{\pi}_a \widehat{f}_{a,\widehat{m}} - \pi_a {f}_{a, m} \| 
    + \|\widehat{m}_a - m_a\| 
    \right)}_{R} \Big).
\end{align*}
The conditions in Corollary~\ref{corollary:learned-policy-corollary}
ensure that 
$R = o_\mathbb{P}(1/\sqrt{n})$, 
which gives that 
\begin{align*}
    \sqrt{n}(\widehat{\psi}_{\widehat{d}^*, \text{dr}} - \psi_{d^*})
     \leadsto \mathcal{N} \left(0, \sigma_{d^*}^2\right),
\end{align*}
since $\text{Var}\{\xi_{d^*}\} = \text{Var}\{\phi_{d^*}\} = \sigma_{d^*}^2$.
\end{proof}

\subsection{Proofs in Section~\ref{sec:learning-gamma}}
\begin{proof}[Proof of Theorem~\ref{thm:gamma-error}]
    We begin with a decomposition: 
    \begin{align*}
        &\wh \gamma_\dr(x) - \gamma(x)= \sum_{i=1}^n w_i(x; X^n) \wh g(Z_i)
        - \gamma(x) \\
        =& \left( \sum_{i=1}^n w_i(x; X^n)  g(Z_i) - \gamma(x)\right)
        + \sum_{i=1}^n w_i(x; X^n) h(X_i)
        + \sum_{i=1}^n w_i(x; X^n) \left( \wh g(Z_i) - g(Z_i) - h(X_i)\right)\\
        =& \left(\wt \gamma(x) - \gamma(x)\right)
        + \sum_{i=1}^n w_i(x; X^n) h(X_i)
        + \sum_{i=1}^n w_i(x; X^n) \left(\wh g(Z_i) - g(Z_i) - h(X_i)\right),
    \end{align*}
    where $h(x) = \E[\wh g(Z) - g(Z) | D^n, X=x]$
    and $D^n = (D_1, D_2)$ is the training data for obtaining $\wh g$.
    
    To bound $h(x)$, %
    we obtain that 
    \begin{align*}
       &h(x) = 
       \mathbb{E}[\wh g(Z) - g(Z)|D^n, X=x]
        = \mathbb{E}[\wh g(Z)|D^n, X=x]
        - \gamma(x)\\
        =& \left(\wh m_1(x) - m_1(x) - \frac{\pi_1(x)}{\wh \pi_1(x)} \frac{F_{1,\wh m}(x) - F_{1, m}(x)}{\wh f_{1,\wh m}(x)} \right)
        - \left(\wh m_0(x) - m_0(x) - \frac{\pi_0(x)}{\wh \pi_0(x)} \frac{F_{0, \wh m}(x) - F_{0, m}(x)}{\wh f_{0,\wh m}(x)} \right)\\
        =& 
        \left(\left(\wh{m}_1(x) -m_1(x) \right)
        \left(1 - \frac{\pi_1(x)}{\wh{\pi}_1(x)}  \frac{f_{1,m}(x)}
        {\wh{f}_{1, \wh{m}}(x)}\right) 
        - \frac{f'_{1,c}(x) \pi_1(x) (\wh{m}_1(x) - m_1(x))^2}{2\wh{\pi}_1(x)\wh{f}_{1,\wh{m}}(x)}\right)\\
        &\quad -
        \left(\left(\wh{m}_0(x) -m_0(x) \right)
        \left(1 - \frac{\pi_0(x)}{\wh{\pi}_0(x)}  \frac{f_{0,m}(x)}
        {\wh{f}_{0, \wh{m}}(x)}\right) 
        - \frac{f'_{0,c}(x) \pi_0(x) (\wh{m}_0(x) - m_0(x))^2}{2\wh{\pi}_0(x)\wh{f}_{0,\wh{m}}(x)}\right)\\
        \lesssim& 
        \sum_{a=0}^1 |\wh m_a(x) - m_a(x)| 
        \left(|\wh \pi_a(x) \wh f_{a,\wh m}(x) -  \pi_a(x)  f_{a, m}(x)| + |\wh m_a(x) - m_a(x)|\right),
    \end{align*}
    where $f'_{a,c}(x) \leq L$ is the derivative of $f_a(y \mid x)$
    at $y = c_a(x)$ 
    for some value $c_a(x)$ between $m_a(x)$ and $\wh{m}_a(x)$
    and the third equality follows similarly to the proof of Corollary~\ref{corollary:R_d_product}.
    This gives that 
    \begin{align*}
        &\sum_{i=1}^n w_i(x; X^n) h(X_i) \\
        \lesssim& \sum_{i=1}^n |w_i(x; X^n)|
        \left(\sum_{a=0}^1 |\wh m_a(X_i) - m_a(X_i)| 
        \left(|\wh \pi_a(X_i) \wh f_{a,\wh m}(X_i) -  \pi_a(X_i)  f_{a, m}(X_i)| + |\wh m_a(X_i) - m_a(X_i)|\right)\right)
        \\
         =& \sum_{a=0}^1 \sum_{i=1}^n |w_i(x; X^n)| |\wh m_a(X_i) - m_a(X_i)| \left(\left|\pi_a(X_i) f_{a,m}(X_i) - \wh \pi_a(X_i) \wh f_{a, \wh m}(X_i) \right|+ |\wh m_a(X_i) - m_a(X_i)|\right)\\
        =& \sum_{a=0}^1 \sum_{i=1}^n |w_i(x; X^n)| |\wh m_a(X_i) - m_a(X_i)| \cdot \left|\pi_a(X_i) f_{a,m}(X_i) - \wh \pi_a(X_i) \wh f_{a, \wh m}(X_i) \right|\\
        &\quad +\sum_{a=0}^1 \sum_{i=1}^n |w_i(x; X^n)| \cdot |\wh m_a(X_i) - m_a(X_i)| \cdot |\wh m_a(X_i) - m_a(X_i)|
        \\
        \leq& \sum_{a=0}^1 \sqrt{\sum_{i=1}^n |w_i(x; X^n)| |\wh m_a(X_i) - m_a(X_i)|^2 }   \cdot \sqrt{\sum_{i=1}^n |w_i(x; X^n)| \left|\pi_a(X_i) f_{a,m}(X_i) - \wh \pi_a(X_i) \wh f_{a, \wh m}(X_i) \right|^2}\\
        &+\quad \sum_{a=0}^1 \sqrt{\sum_{i=1}^n |w_i(x; X^n)| |\wh m_a(X_i) - m_a(X_i)|^2 }   \cdot \sqrt{\sum_{i=1}^n |w_i(x; X^n)| \left|\wh m_a(X_i) - m_a(X_i)\right|^2}\\
        =& \left(\sum_{i=1}^n |w_i(x; X^n)|\right) \cdot \left( \sum_{a=0}^1 \|\wh m_a - m_a\|_w \left(\|\wh \pi_a \wh f_{a, \wh m} - \pi_a f_{a,m}\|_w + \|\wh m_a - m_a\|_w\right) \right).
    \end{align*}

    To bound $G(x) := \sum_{i=1}^n w_i(x; X^n) \left(\wh g(Z_i) - g(Z_i) - h(X_i)\right)$, 
    we observe that by definition, 
    \begin{align*}
         \E [\wh g(Z_i) - g(Z_i) - h(X_i) | D^n, X^n]
        = 0,
    \end{align*} 
    where $X^n$ are the covariates in $D_3$. 
    We also have 
    \begin{align*}
        \text{Var}\{G(x) | D^n, X^n\}
        &= \text{Var}\left\{\sum_{i=1}^n w_i(x; X^n) \left(\wh g(Z_i) - g(Z_i) - h(X_i) \right)| D^n, X^n \right\}\\
        &= \sum_{i=1}^n w_i(x; X^n)^2 \text{Var}\left\{\wh g(Z_i) - g(Z_i) - h(X_i) | D^n, X^n \right\}\\
        &= \sum_{i=1}^n w_i(x; X^n)^2 \text{Var}\left\{\wh g(Z_i) - g(Z_i) | D^n, X^n \right\}\\
        &\leq  \|\wh g - g\|_{w^2}^2 \sum_{i=1}^n w_i(x; X^n)^2,
    \end{align*}
    where the third equality holds since 
    $\text{Var}\left\{\wh g(Z_i) - g(Z_i) | D^n, X^n \right\} = \E[(\wh g(Z_i) - g(Z_i))^2 | D^n, X^n ] - \E[\wh g(Z_i) - g(Z_i) | D^n, X^n ]^2 \leq \E[(\wh g(Z_i) - g(Z_i))^2 | D^n, X^n ]$.
    We note that 
    \begin{align*}
        \E[(\wt \gamma(x) - \gamma(x))^2] 
        &= \E\left[\left(\sum_{i=1}^n w_i(x; X^n) \left(g(Z_i) - \gamma(X_i)\right)
        + \sum_{i=1}^n w_i(x; X^n) \gamma(X_i)
        - \gamma(x)\right)^2 \right] 
        \\
        &=\E\left[\left(\sum_{i=1}^n w_i(x; X^n) \left(g(Z_i)  - \gamma(X_i)\right) \right)^2\right]
        + \E\left[\left(\sum_{i=1}^n w_i(x; X^n) \gamma(X_i) - \gamma(x)\right)^2\right]\\
        &= \E \left[\sum_{i=1}^n w_i(x; X^n)^2 \text{Var}\{g(Z_i)| X_i\} \right] 
        + \E\left[\left(\sum_{i=1}^n w_i(x; X^n) \gamma(X_i) - \gamma(x)\right)^2\right]
        \\
        &\geq \sigma^2_{\min} \sum_{i=1}^n \E \left[ w_i(x; X^n)^2 \right],
    \end{align*}
    where the second and third equality hold since $\E[g(Z_i)|X^n] = \gamma(X_i)$ 
    and all samples are independent.  
    Putting it altogether, using Markov's inequality, we obtain
    \begin{align*}
        &\mathbb{P}\left( \frac{|G(x)|}{\|\wh g - g\|_{w^2} \sqrt{\E[(\wt \gamma(x) - \gamma(x))^2] }} \geq t\right)
        = \E\left[ \mathbb{P}\left( \frac{|G(x)|}{\|\wh g - g\|_{w^2}\sqrt{\E[(\wt \gamma(x) - \gamma(x))^2] }} \geq t \Big| D^n, X^n\right) \right]\\
        \leq& \frac{1}{t^2} \cdot \E\left[\frac{\text{Var}\{G(x) | D^n, X^n\}}{ \|\wh g - g\|^2_{w^2}\E[(\wt \gamma(x) - \gamma(x))^2] } \right] \leq \frac{1}{\sigma_{\min}^2 t^2}.
    \end{align*} 
    Thus, we have that 
    $G(x) = O_\mathbb{P}(\|\wh g - g\|_{w^2} \sqrt{\E[(\wt \gamma(x) - \gamma(x))^2]})$
    and 
    \begin{align*}
        G(x)^2 = O_\mathbb{P}\left(\|\wh g - g\|^2_{w^2} {\E[(\wt \gamma(x) - \gamma(x))^2]}\right). %
    \end{align*}
    The proof completes by realizing that 
    $(a+b+c)^2 \leq 3 a^2 + 3 b^2 + 3 c^2$.
    \end{proof}

\section{Auxiliary Lemmas} \label{appendix:auxillary}
Below are the auxiliary lemmas we have used to prove the results in this paper. 
For completeness, we include their proofs here as well. 

\begin{lemma}\citep[Lemma 1]{kennedy2018sharp}\label{lemma:policy_diff}
Let $\widehat{f}, f$ be functions taking any real values. Then
\begin{align*}
    |\mathbb{1}\{\widehat{f} > 0\} - \mathbb{1}\{f > 0\}|
    \leq \mathbb{1}\{|f| \leq |\widehat{f} - f|\}. 
\end{align*}
\end{lemma}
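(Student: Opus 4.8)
The plan is to establish this as a deterministic, pointwise inequality: since the claim is an inequality between two indicator functions evaluated at the same argument, it suffices to fix two arbitrary real numbers $\widehat{f}$ and $f$ and verify it for those values. First I would note that the left-hand side $|\mathbb{1}\{\widehat{f}>0\}-\mathbb{1}\{f>0\}|$ lies in $\{0,1\}$, so whenever it equals $0$ the inequality holds trivially because the right-hand side is nonnegative. Thus the only case requiring work is when the two indicators disagree, i.e.\ when the left-hand side equals $1$.

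When the indicators disagree, $\widehat{f}$ and $f$ must lie on opposite sides of the threshold $0$, which yields two symmetric subcases. If $\widehat{f}>0\ge f$, then $f\le 0$ gives $|f|=-f$ and $|\widehat{f}-f|=\widehat{f}-f\ge -f=|f|$; if $f>0\ge\widehat{f}$, then $|f|=f$ and $|\widehat{f}-f|=f-\widehat{f}\ge f=|f|$. In either subcase $|f|\le|\widehat{f}-f|$, so the right-hand indicator $\mathbb{1}\{|f|\le|\widehat{f}-f|\}$ equals $1$, matching the left-hand side. Combining this with the trivial case above completes the argument.

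The only delicate point—and it is very mild—is fixing a consistent tie-breaking convention at the boundary $\{x : x=0\}$; the inequalities $\widehat{f}-f\ge -f$ and $f-\widehat{f}\ge f$ are written with weak inequalities precisely so that they remain valid regardless of how the boundary is assigned, so no separate boundary analysis is needed. I expect no genuine obstacle here, since the lemma reduces to a two-case comparison of real numbers; the content is simply that the two indicators can differ only when $|f|$ is no larger than the perturbation $|\widehat{f}-f|$.
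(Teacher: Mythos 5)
Your proof is correct and follows essentially the same route as the paper's: both reduce to the observation that the indicators can disagree only when $\widehat{f}$ and $f$ lie on opposite sides of zero, in which case $|\widehat{f}-f| = |\widehat{f}|+|f| \geq |f|$. Your version merely spells out the two symmetric subcases and is slightly more explicit about the boundary convention at zero, which the paper glosses over with the phrase ``opposite signs.''
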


\begin{proof}
It follows that 
\begin{align*}
    |\mathbb{1}\{\widehat{f} > 0\} - \mathbb{1}\{f > 0\}|
    = \mathbb{1}\{f, \widehat{f} \text{ have opposite signs}\}
    \leq \mathbb{1}\{|f| \leq |\widehat{f} - f|\},
\end{align*}
since $|\widehat{f} - f| = |\widehat{f}| + |f|$
when $f, \widehat{f}$ have opposite signs.
\end{proof}

\begin{lemma}\citep[Lemma 2]{kennedy2018sharp}\label{lemma:sample_splitting}
Let $\mathbb{P}_n$ denote the empirical measure over $Z^n = (Z_1, \ldots, Z_n)$ and $\widehat{\phi}$ be a function estimated from a sample $Z^{n,0} = (Z^0_{1}, \ldots, Z^0_{n})$, which is independent of $Z^n$, then for any function $\phi$, 
\begin{align*}
    \left(\mathbb{P}_n - \mathbb{P} \right)(\widehat{\phi} - \phi) = O_{\mathbb{P}}\left(\frac{\|\widehat{\phi} - \phi\|}{\sqrt{n}}\right).
\end{align*}
\end{lemma}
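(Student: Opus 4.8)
The plan is to remove the randomness of $\widehat{\phi}$ by conditioning on the auxiliary sample $Z^{n,0}$, after which $\widehat{\phi}$ becomes a fixed function and $(\mathbb{P}_n - \mathbb{P})(\widehat{\phi} - \phi)$ reduces to a centered empirical average of iid terms whose second moment I can control directly. Writing $g := \widehat{\phi} - \phi$, the target equals $\tfrac{1}{n}\sum_{i=1}^{n}\{g(Z_i) - \mathbb{P}g\}$, and the entire argument is a conditional second-moment bound followed by a passage to an unconditional stochastic-order statement via Markov's inequality.

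First I would verify that, conditionally on $Z^{n,0}$, the average is centered, $\mathbb{E}[(\mathbb{P}_n - \mathbb{P})g \mid Z^{n,0}] = 0$, since each $g(Z_i)$ has conditional mean $\mathbb{P}g$. Then, using that the $Z_i$ are iid and independent of $Z^{n,0}$, the conditional variance factorizes across the sum and is bounded by the raw second moment:
\[
\mathbb{E}\left[\big((\mathbb{P}_n - \mathbb{P})g\big)^2 \mid Z^{n,0}\right]
= \frac{1}{n}\,\text{Var}\big(g(Z)\mid Z^{n,0}\big)
\le \frac{1}{n}\,\mathbb{P}g^2
= \frac{\|\widehat{\phi} - \phi\|^2}{n},
\]
where the inequality simply drops the square of the mean and the final equality is the definition of the $L_2(\mathbb{P})$ norm, with $\widehat{\phi}$ held fixed under the conditioning.

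To finish I would normalize and appeal to Markov's inequality. Setting $T_n := \sqrt{n}\,(\mathbb{P}_n - \mathbb{P})g / \|\widehat{\phi} - \phi\|$, the display above gives $\mathbb{E}[T_n^2 \mid Z^{n,0}] \le 1$ and hence $\mathbb{E}[T_n^2] \le 1$ by the tower property; Markov then yields $\mathbb{P}(|T_n| > M) \le 1/M^2$ uniformly in $n$, so choosing $M = \delta^{-1/2}$ bounds the tail by any prescribed $\delta > 0$, which is exactly $T_n = O_{\mathbb{P}}(1)$ and therefore the claim. There is no genuine obstacle in this lemma; the only point requiring care is that $\|\widehat{\phi} - \phi\|$ is itself random through $\widehat{\phi}$, so the normalization must be carried out \emph{inside} the conditional expectation. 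This is what makes the tower-property step legitimate, and it is precisely the reason the sample splitting (independence of $Z^{n,0}$ and $Z^n$) is essential rather than cosmetic: without it, the terms $g(Z_i)$ would no longer be iid conditionally on the object defining $g$, and the variance calculation would fail.
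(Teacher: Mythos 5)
Your proof is correct and follows essentially the same route as the paper's: conditioning on $Z^{n,0}$ to make $\widehat{\phi}$ fixed, bounding the conditional variance of the centered empirical average by $\|\widehat{\phi}-\phi\|^2/n$, and converting to an unconditional tail bound via the tower property and Markov/Chebyshev. The only cosmetic difference is that the paper applies Chebyshev conditionally and then integrates, whereas you integrate the conditional second moment first and then apply Markov --- the two orderings are equivalent, and your closing remark correctly identifies why the independence from sample splitting is what makes the conditional variance computation go through.
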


\begin{proof}

First, we notice that 
\begin{align*}
    \text{Var}\left\{ {\left(\mathbb{P}_n - \mathbb{P}\right)(\widehat{\phi} - \phi)}\big| Z^{n,0} \right\} = \text{Var}\left\{ {\mathbb{P}_n (\widehat{\phi} - \phi)}\big| Z^{n,0} \right\} = \frac{1}{n}\text{Var}\left\{\widehat{\phi} - \phi|Z^{n,0}\right\} \leq \frac{\|\widehat{\phi} - \phi\|^2_{2}}{n},
\end{align*}
where the first equality is true since conditioned on $Z^{n,0}$, $\mathbb{P} (\widehat{\phi} - \phi)$ is a constant.
Then by applying the law of total expectation and Chebyshev's inequality, we obtain that 
\begin{align*}
    \mathbb{P}\left(\frac{ | \left(\mathbb{P}_n - \mathbb{P}\right)(\widehat{\phi} - \phi) |}{\|\widehat{\phi} - \phi\|_2/\sqrt{n}} \geq t \right) &= \mathbb{E}\left\{\mathbb{P}\left(\frac{|\left(\mathbb{P}_n - \mathbb{P}\right)(\widehat{\phi} - \phi)|}{\|\widehat{\phi} - \phi\|_2/\sqrt{n}} \geq t \bigg| Z^{n,0} \right)\right\} \\
    &\leq \mathbb{E}\left\{\frac{\text{Var}\left\{ {\left(\mathbb{P}_n - \mathbb{P}\right)(\widehat{\phi} - \phi)}\big| Z^{n,0} \right\}}{\|\widehat{\phi} - \phi\|_2^2t^2/n}\right\} \leq \frac{1}{t^2},
\end{align*}
where we have utilized the fact that $\mathbb{E}\left\{ {\left(\mathbb{P}_n - \mathbb{P}\right)(\widehat{\phi} - \phi)}\bigg| Z^{n,0} \right\} = \mathbb{E}\{\widehat{\phi} - \phi|Z^{n,0}\} - \mathbb{P}(\widehat{\phi} - \phi)= 0$.
\end{proof}

\end{document}